\setlist{noitemsep, nolistsep, topsep=0pt}
\newif\iflncs
 \newcommand{\arxivappendix}[2]{%
 \href{http://arxiv.org/pdf/1512.00174v2.pdf\#nameddest=#2}{Appendix~#1}}
\algnewcommand\algorithmicforeach{\textbf{for each}}
\algnewcommand{\SimpleIf}[2]{\State\algorithmicif\ #1\ \algorithmicthen\ #2}
\algnewcommand{\SimpleElsIf}[2]{\State\algorithmicelse\ \algorithmicif\ #1\ \algorithmicthen\ #2}
\let\leq\leqslant
\let\geq\geqslant
\let\setminus\smallsetminus
\let\epsi\varepsilon
\let\rho\varrho
\let\implies\Rightarrow
\newcommand{\brac}[1]{{\left(#1\right)}}
\newcommand{\Oh}[1]{O\brac{#1}}
\newcommand{\classfont}{\mathsf}
\newcommand{\NP}{\ensuremath{\classfont{NP}}}
\newcommand{\problemfont}{\textsc}
\newcommand{\threesat}{\problemfont{3Sat}}
\newcommand{\pmthreesat}{\problemfont{PlanarMonotone3Sat}}
\newcommand{\threepart}{\problemfont{3Partition}}
\newcommand\ie{i.e\@ifnextchar.{}{.\@}}
\newcommand\etc{etc\@ifnextchar.{}{.\@}}
\newcommand\etal{et~al\@ifnextchar.{}{.\@}}
\let\rc@refused\refused
\newcounter{hackcount}
\newcommand{\hackLcounter}[1]{\setcounter{hackcount}{\value{lemma}}\setcounterref{lemma}{#1}\addtocounter{lemma}{-1}}
\newcommand{\unhackLcounter}{\setcounter{lemma}{\value{hackcount}}}
\newcommand{\hackTcounter}[1]{\setcounter{hackcount}{\value{theorem}}\setcounterref{theorem}{#1}\addtocounter{theorem}{-1}}
\newcommand{\unhackTcounter}{\setcounter{theorem}{\value{hackcount}}}
\newcommand{\subcase}[1]{\noindent\textbf{#1}}
\begin{document}

\title{The Partial Visibility Representation Extension Problem%
\thanks{%
This work was partially supported by ESF project EUROGIGA GraDR and preliminary ideas were formed during HOMONOLO 2014.
G.\,Gu{\'s}piel was partially supported by the MNiSW grant DI2013 000443.
G.\,Gutowski was partially supported by the Polish National Science Center grant UMO-2011/03/D/ST6/01370.
T.\,Krawczyk was partially supported by the Polish National Science Center grant UMO-2015/17/B/ST6/01873.
G.\,Liotta was partially supported by the MIUR project AMANDA ``Algorithmics for MAssive and Networked DAta'', prot. 2012C4E3KT\_001. 
\iflncs
The full version of this paper is available on arXiv:1512.00174~\cite{ARXIV}. 
Note: whenever we refer to sections in the \emph{Appendix} we mean the appendix of 
\href{http://arxiv.org/pdf/1512.00174v2.pdf}{arXiv:1512.00174v2}
\else
Appears in the Proceedings of the 24th International Symposium on Graph Drawing and Network Visualization (GD 2016).
\fi
}
}

\author{Steven~Chaplick\inst{1} \and
Grzegorz Gu\'spiel\inst{2} \and
Grzegorz Gutowski\inst{2} \and \\
Tomasz Krawczyk\inst{2} \and
Giuseppe~Liotta\inst{3}}
\authorrunning{S.~Chaplick, G.~Gu\'spiel, G.~Gutowski, T.~Krawczyk, and G.~Liotta}
\institute{
Lehrstuhl f\"ur Informatik I, Universit\"at W\"urzburg, Germany%
\\ \email{steven.chaplick@uni-wuerzburg.de}
\and
Theoretical Computer Science Department, Faculty of Mathematics and Computer Science, Jagiellonian University, Krak\'ow, Poland%
\\ \email{\{guspiel,gutowski,krawczyk\}@tcs.uj.edu.pl}
\and
Dipartimento di Ingegneria, Universit{\`a} degli Studi di Perugia, Italy%
\\ \email{giuseppe.liotta@unipg.it}
}


\maketitle

\begin{abstract}
For a graph $G$, a function $\psi$ is called a \emph{bar visibility representation} of $G$ when for each vertex $v \in V(G)$, $\psi(v)$ is a horizontal line segment (\emph{bar}) and $uv \in E(G)$ iff there is an unobstructed, vertical, $\epsi$-wide line of sight between $\psi(u)$ and $\psi(v)$.
Graphs admitting such representations are well understood (via simple characterizations) and recognizable in linear time.
For a directed graph $G$, a bar visibility representation $\psi$ of $G$, additionally, for each directed edge $(u,v)$ of $G$, puts the bar $\psi(u)$ strictly below the bar $\psi(v)$.
We study a generalization of the recognition problem where a function $\psi'$ defined on a subset $V'$ of $V(G)$ is given and the question is whether there is a bar visibility representation $\psi$ of $G$ with $\psi|V' = \psi'$.
We show that for undirected graphs this problem together with closely related problems are \NP-complete, but for certain cases involving directed graphs it is solvable in polynomial time.
\end{abstract}

\section{Introduction}
\label{sec:introduction}
The concept of a visibility representation of a graph is a classic one in computational geometry and graph drawing and the first studies on this concept date back to the early days of these fields (see, e.g.~\cite{TamassiaT86,Wismath85} and~\cite{GhoshG13} for a recent survey).
In the most general setting, a visibility representation of a graph is defined as a collection of disjoint sets from an Euclidean space such that the vertices are bijectively mapped to the sets and the edges correspond to unobstructed lines of sight between two such sets.
Many different classes of visibility representations have been studied via restricting the space (e.g., to be the plane), the sets (e.g., to be points or line segments) and/or the lines of sight (e.g., to be non-crossing or axis-parallel).
In this work we focus on a classic visibility representation setting in which the sets are horizontal line segments (\emph{bars}) in the plane and the lines of sight are vertical.
As such, whenever we refer to a visibility representation, we mean one of this type.
The study of such representations was inspired by the problems in VLSI design and was conducted by different authors~\cite{DuchetHLM83,LuccioMW87,OttenW78} under variations of the notion of visibility.
Tamassia and Tollis~\cite{TamassiaT86} gave an elegant unification of different definitions and we follow their approach.

A \emph{horizontal bar} is an open, non-degenerate segment parallel to the $x$-axis of the coordinate plane.
For a set $\Gamma$ of pairwise disjoint horizontal bars, a \emph{visibility ray} between two bars $a$ and $b$ in $\Gamma$ is a vertical closed segment spanned between bars $a$ and $b$ that intersects $a$, $b$, and no other bar.
A \emph{visibility gap} between two bars $a$ and $b$ in $\Gamma$ is an axis aligned, non-degenerate open rectangle spanned between bars $a$ and $b$ that intersects no other bar.

For a graph $G$, a \emph{visibility representation} $\psi$ is a function that assigns a distinct horizontal bar to each vertex such that these bars are pairwise disjoint and satisfy additional visibility constraints.
There are three standard visibility models:
\begin{itemize}
\item {\em Weak visibility.}
  In this model, for each edge $\{u,v\}$ of $G$, there is a visibility ray between $\psi(u)$ and $\psi(v)$ in $\psi(V(G))$.
\item {\em Strong visibility.}
  In this model, two vertices $u$, $v$ of $G$ are adjacent if and only if there is a visibility ray between $\psi(u)$ and $\psi(v)$ in $\psi(V(G))$.
\item {\em Bar visibility.}
  In this model, two vertices $u$, $v$ of $G$ are adjacent if and only if there is a visibility gap between $\psi(u)$ and $\psi(v)$ in $\psi(V(G))$.
\end{itemize}
The bar visibility model is also known as the $\epsi$-visibility model in the literature.

A graph that admits a visibility representation in any of these models is a planar graph, but the converse does not hold in general.
Tamassia and Tollis~\cite{TamassiaT86} characterized the graphs that admit a visibility representation in these models as follows.
A graph admits a weak visibility representation if and only if it is planar.
A graph admits a bar visibility representation if and only if it has a planar embedding with all cut-points on the boundary of the outer face.
For both of these models, Tamassia and Tollis~\cite{TamassiaT86} presented linear time algorithms for the recognition of representable graphs, and for constructing the appropriate visibility representations.
The situation is different for the strong visibility model.
Although the planar graphs admitting a strong visibility representation are characterized in~\cite{TamassiaT86} (via strong $st$-numberings),
Andreae~\cite{Andreae92} proved that the recognition of such graphs is \NP-complete.
Summing up, from a computational point of view, the problems of recognizing graphs that admit visibility representations and of constructing such representations are well understood.

Recently, a lot of attention has been paid to the question of extending partial representations of graphs.
In this setting a representation of some vertices of the graph is already fixed and the task is to find a representation of the whole graph that extends the given partial representation. 
Problems of this kind are often encountered in graph drawing and are sometimes computationally harder than testing for existence of an unconstrained drawing.
The problem of extending partial drawings of planar graphs is a good illustration of this phenomenon.
On the one hand, by F\'ary's theorem, every planar graph can be drawn in the plane so that each vertex is represented as a point, and edges are pairwise non-crossing, straight-line segments joining the corresponding points.
Moreover, such a drawing can be constructed in linear time.
On the other hand, testing whether a partial drawing of this kind (\ie{}, an assignment of points to some of the vertices) can be extended to a straight-line drawing of the whole graph is \NP-hard~\cite{Patrignani06}.
However, an analogous problem in the model that allows the edges to be drawn as arbitrary curves instead of straight-line segments has a linear-time solution~\cite{AngeliniBFJKPR15}.
A similar phenomenon occurs when we consider contact representations of planar graphs.
Every planar graph is representable as a disk contact graph or a triangle contact graph.
Every bipartite planar graph is representable as a contact graph of horizontal and vertical segments in the plane.
Although such representations can be constructed in polynomial time, the problems of extending partial representations of these kinds are \NP-hard~\cite{ChaplickDKMS14}.

In this paper we initiate the study of extending partial visibility representations of graphs.
From a practical point of view, it may be worth recalling that visibility representations are not only an appealing way of drawing graphs, but they are also typically used as an intermediate step towards constructing visualizations of networks in which all edges are oriented in a common direction and some vertices are aligned (for example to highlight critical activities in a PERT diagram).
Visibility representations are also used to construct orthogonal drawings with at most two bends per edge.
See, e.g.~\cite{BattistaETT99} for more details about these applications.
The partial representation extension problem that we study in this paper occurs, for example, when we want to use visibility representations to incrementally draw a large network and we want to preserve the user's mental map in a visual exploration that adds a few vertices and edges per time.

Both for weak visibility and for strong visibility, the partial representation extension problems are easily found to be \NP-hard.
For weak visibility, the hardness follows from results on contact representations by Chaplick \etal{}~\cite{ChaplickDKMS14}.
For strong visibility, it follows trivially from results by Andreae~\cite{Andreae92}.
Our contribution is the study of the partial representation extension problem for bar visibility.
Hence, the central problem for this paper is the following:

\smallskip
\noindent \underline{Bar Visibility Representation Extension}:\smallskip \\
\noindent{\bf Input:}
 $(G, \psi')$; $G$ is a graph; $\psi'$ is a map assigning bars to a $V' \subseteq V(G)$.\\
\noindent{\bf Question:}
Does $G$ admit a bar visibility representation $\psi$ with $\psi|V' = \psi'$?\smallskip\\
One of our results is the following.

\begin{theorem}\label{th:var-emb-hardness-general}
The Bar Visibility Representation Extension Problem is \NP-complete.
\end{theorem}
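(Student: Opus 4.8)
The plan is to establish membership in \NP{} by a discretization argument and then to prove hardness by a reduction from \pmthreesat.

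For membership, even though a bar visibility representation uses real coordinates, I would argue that whenever $(G,\psi')$ admits a solution $\psi$, it admits one whose every bar endpoint lies on a grid of polynomial size that is \emph{consistent with the fixed bars}. Consider the arrangement of vertical lines through the endpoints of the bars of $\psi'$ together with the horizontal lines carrying those bars. A solution $\psi$ can be ``snapped'' to a refinement of this arrangement without destroying any visibility gap or creating a spurious one: the only combinatorial data that matter are the bottom-to-top order of the bars and the left-to-right interleaving of their endpoints, both pinned to agree with $\psi'$ on $V'$. Hence a polynomial-size certificate---a linear preorder on the $y$-coordinates together with the linear order of all endpoints along the $x$-axis, anchored to the exact coordinates used by $\psi'$---can be verified in polynomial time for validity (pairwise disjointness, exactly the required gaps, and $\psi|V' = \psi'$). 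This places the problem in \NP.

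For hardness I would reduce from \pmthreesat. Recall that such an instance comes equipped with a planar rectilinear drawing in which the variables sit on a horizontal spine, all-positive clauses lie above it, all-negative clauses below, and each clause is joined to its (at most three) variables by pairwise non-crossing legs. I would translate this picture into a partial representation: a large fixed rectangular frame of bars, together with further fixed bars that carve the interior into channels mirroring the spine, the clause pockets, and the legs. The fixed bars are chosen so that every valid extension must respect the intended planar embedding, confining the free bars of each gadget to their designated channel. A \emph{variable gadget} places fixed bars so as to leave a free ``state'' bar exactly two admissible vertical positions---one adjacent (via a gap) to the channel that signals upward to positive clauses, the other adjacent to the channel going downward to negative clauses---encoding $x_i = \mathit{true}/\mathit{false}$. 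A \emph{wire gadget} is a chain of free bars whose only freedom, given the surrounding fixed bars, is to propagate the chosen state along a leg, a ``push'' realizable in exactly one of the two states. A \emph{clause gadget} contains a free clause bar that must acquire a visibility gap to at least one of its three incoming wires, which is possible precisely when at least one literal is set to satisfy the clause and is blocked by fixed bars otherwise. Routing all gadgets through the frame so that no unintended visibilities arise yields a planar graph $G$ and a partial map $\psi'$ that is extendable iff the formula is satisfiable; the construction is clearly of polynomial size.

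The main obstacle is the gadget design itself. I must ensure that (a) the fixed bars rigidly force the global embedding and the exact two-position structure of each variable, ruling out ``cheating'' layouts, and (b) the defining feature of the bar visibility model---that an edge corresponds to an $\epsi$-wide \emph{gap} rather than a mere ray---is exploited to make the clause constraint tight, so that a clause bar can ``see'' a wire only when that wire is pushed into the satisfying position. Verifying this local rigidity, and certifying that independent gadgets cannot interact through stray vertical sightlines crossing the frame, is where the bulk of the technical work lies.
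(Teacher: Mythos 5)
Your overall plan coincides with the paper's: a reduction from \pmthreesat{} that turns the given rectilinear drawing into a landscape of fixed bars, with free bars carrying truth values via one of two admissible positions, plus wires and clause checks. (Your discretization argument for \NP{} membership is actually more detailed than the paper's, which simply asserts membership is clear.) But the hardness half of your proposal is a plan, not a proof: every gadget is described only by the property it should have, and you yourself flag the gadget design and the rigidity/non-interference verification as the remaining ``bulk of the technical work.'' That work \emph{is} the proof; the paper spends essentially all of its argument constructing explicit NOT, XOR, CXOR, OR, variable, and clause gadgets and verifying, case by case, that each one forces exactly the intended behaviour.

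Beyond incompleteness, your clause mechanism rests on a premise that fails in this model. Bar visibility is an ``iff'' model: $uv \in E(G)$ if and only if there is an $\epsi$-wide visibility gap between $\psi(u)$ and $\psi(v)$. The edge set of $G$ is fixed by the reduction, so adjacency is never optional: if the clause bar is adjacent in $G$ to a wire vertex, then \emph{every} valid extension must realize that gap (a conjunction over the three wires), and if it is not adjacent, none may. A clause bar that ``can see a wire only when that wire is pushed into the satisfying position'' therefore cannot exist; as stated you would be encoding an AND, not an OR. The paper's way around this is to encode the disjunction in \emph{positions} rather than adjacencies: its OR gate has an output bar whose adjacencies are identical across all admissible placements (the paper explicitly checks, for each gadget, that the visibility between every two bars does not change across the representations, so that $G$ is well defined), but whose region (upper versus lower) can be the ``true'' one iff at least one input is true; the clause is then completed by extra fixed bars that force the final output bar into its upper region. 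Any correct completion of your proof needs this invariance-of-adjacency property, which your sketch does not establish and which your clause gadget, as described, contradicts.
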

The proof is a standard reduction from \pmthreesat{} problem, which is known to be \NP-complete thanks to de Berg and Khosravi~\cite{BergK10}.
The reduction uses gadgets that simulate logic gates and constructs a planar boolean circuit that encodes the given formula.
Theorem~\ref{th:var-emb-hardness-general} is proven in%
{
\iflncs
\arxivappendix{D}{section.A.4}.%
\else
Appendix~\ref{app:hardness}.%
\fi
}
We investigate a few natural modifications of the problem.
Most notably, we study the version of the problem for directed graphs.
We provide some efficient algorithms for extension problems in this setting.
A visibility representation induces a natural orientation on edges of the graph -- each edge is oriented from the lower bar to the upper one.
This leads to the definition of a visibility representation for a directed graph.
The function $\psi$ is a representation of a digraph $G$ if, additionally to satisfying visibility constraints, for each directed edge $(u,v)$ of $G$, the bar $\psi(u)$ is strictly below the bar $\psi(v)$.
Note that a planar digraph that admits a visibility representation also admits an \emph{upward planar drawing}~(see e.g., \cite{GargT95}), that is, a drawing in which the edges are represented as non-crossing $y$-monotone curves. 

A \emph{planar $st$-graph} is a planar acyclic digraph with exactly one source $s$ and exactly one sink $t$ which admits a planar embedding such that $s$ and $t$ are on the outer face.
Di Battista and Tamassia~\cite{BattistaT88} proved that a planar digraph admits an upward planar drawing if and only if it is a subgraph of a planar $st$-graph if and only if
it admits a weak visibility representation.
Garg and Tamassia~\cite{GargT01} showed that the recognition of planar digraphs that admit an upward planar drawing is \NP-complete.
It follows that the recognition of planar digraphs that admit a weak visibility representation is \NP-complete, and so is the corresponding partial representation extension problem.
Nevertheless, as is shown in Lemma~\ref{lem:planar_digraphs_bar_visibility} (see%
{
\iflncs
\arxivappendix{A}{section.A.1}%
\else
Appendix~\ref{app:digraphs}%
\fi
}
for the proof), the situation might be different for bar visibility.
\begin{lemma}
\label{lem:planar_digraphs_bar_visibility}
Let $st(G)$ be a graph constructed from a planar digraph $G$ by adding two vertices $s$ and $t$, the edge $(s,t)$, an edge $(s,v)$ for each source vertex $v$ of $G$, and an edge $(v,t)$ for each sink vertex $v$ of $G$.
A planar digraph $G$ admits a bar visibility representation if and only if the graph $st(G)$ is a planar $st$-graph.
\end{lemma}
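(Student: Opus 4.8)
The plan is to prove both implications by moving between a bar visibility representation of $G$ and one of $st(G)$ in which $s$ and $t$ are, respectively, the unique bottommost and topmost bars. The workhorse observation is the following: if a digraph $H$ has a bar visibility representation in which one bar $s$ lies strictly below all others and spans their entire horizontal extent, and one bar $t$ lies strictly above all others and spans their entire horizontal extent, then $H$ is a planar $st$-graph. Indeed, such a representation is an upward planar drawing, so $H$ is planar and acyclic; the bar $s$ has no bar below it and hence no in-edge, while every other bar is seen from below either by $s$ (if some column of it reaches the lower envelope) or by the topmost intermediate bar beneath it, so it has an in-edge. Thus $s$ is the unique source, and dually $t$ is the unique sink; finally $s$ and $t$ lie on the unbounded face. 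This matches the definition of a planar $st$-graph.

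For the direction $(\Leftarrow)$, assume $st(G)$ is a planar $st$-graph. By Tamassia and Tollis~\cite{TamassiaT86}, $st(G)$ admits a bar visibility representation $\Psi$, and the standard construction places $s$ as the unique lowest bar and $t$ as the unique highest bar, consistent with the $st$-orientation. Let $\psi = \Psi|_{V(G)}$ be its restriction. Deleting $s$ and $t$ cannot destroy any visibility gap between two surviving bars, and it cannot create one either: since $s$ lies below and $t$ above every bar of $G$, neither can occupy the open rectangle witnessing a gap between two $G$-bars. Hence the gaps among the $G$-bars are unchanged, so they correspond exactly to the edges of $st(G)$ joining two vertices of $G$ -- precisely the edges of $G$ -- and $\psi$ inherits the upward orientation. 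Thus $\psi$ is a bar visibility representation of $G$.

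For $(\Rightarrow)$, I would start from a bar visibility representation $\psi$ of $G$ and extend it by inserting a full-width bar $s$ strictly below and a full-width bar $t$ strictly above all bars of $\psi$, each made slightly wider than the rest so that a free column at the far left realizes the edge $(s,t)$. By the workhorse observation this is a bar visibility representation of a planar $st$-graph $H$, in which $s$ is joined to the set $L$ of bars reaching the lower envelope and $t$ to the set $U$ of bars reaching the upper envelope. Every source of $G$ lies in $L$: having no in-edge, a source cannot form a gap with anything immediately below any of its columns, which forces some column beneath it to run clear all the way to $s$; dually every sink lies in $U$. Consequently $st(G) \subseteq H$, and it remains only to guarantee that $L$ is exactly the set of sources and $U$ exactly the set of sinks, so that $H = st(G)$.

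The main obstacle is precisely this last point: a non-source may still reach the lower envelope, since it can receive an in-edge on one side while a column on its other side runs free to the bottom, thereby creating an unwanted edge from $s$. I would eliminate such configurations by a normalization performed before adding $s$ and $t$: sweeping the bars in a topological order of $G$ and, whenever a non-source $v$ has a column clear to the bottom, extending or raising the bars already supporting $v$ from below so as to close that column, and dually for non-sinks from above. The delicate part -- and the crux of the whole argument -- is to carry out this surgery while preserving the adjacency relation exactly, creating no new gaps and closing none of the required ones; this is where the planarity of $\psi$ and the acyclicity of $G$ enter. Once the normalized representation satisfies $L = \{\text{sources}\}$ and $U = \{\text{sinks}\}$, adding the full-width bars $s$ and $t$ produces a bar visibility representation of $H = st(G)$, and the workhorse observation certifies that $st(G)$ is a planar $st$-graph.
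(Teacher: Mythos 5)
Your backward direction is complete and matches the paper's: restrict the Tamassia--Tollis representation of $st(G)$ (with $\psi(s)$ bottom-most and $\psi(t)$ top-most) to $V(G)$, noting that deleting $s$ and $t$ neither creates nor destroys gaps among the surviving bars. The genuine gap is in the forward direction. You correctly identify that one must first normalize a representation of $G$ so that only source bars are visible from below and only sink bars from above, but you then leave that normalization---``the crux of the whole argument,'' in your own words---as an unproven surgery claim. Everything nontrivial in this lemma lives exactly there, so the implication is not established. Worse, the surgery you sketch is of the wrong kind: ``raising'' bars changes $y$-coordinates, which can create or destroy visibilities far away from the column being repaired, and a single sweep in topological order gives no guarantee that closing a free column under $v$ does not reopen or disturb columns under vertices processed earlier. (Also, planarity and acyclicity are not what makes the normalization work, contrary to your guess.)

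The paper closes this gap with a global fixed-point argument that needs no vertical motion and no sweep order: extend every bar horizontally as far as possible subject to not changing the represented graph and not enlarging the overall $x$-extent, and iterate until no bar can be extended. At such a fixed point, suppose some non-source bar $b_v$ is still visible from below, via a maximal free column $C$ under $b_v$. Since $v$ is not a source, $C$ cannot span all of $X(b_v)$, so some bar below $b_v$ ends exactly on a side boundary of $C$; the topmost such bar $b_u$ already sees $b_v$ through a thin strip just outside $C$, so $(u,v)$ is an edge, and $b_u$ can be pushed slightly into $C$ without creating or destroying any visibility---no bar-to-bar sight line passes through the interior of $C$, and the only new sight line, between $b_u$ and $b_v$, realizes an existing edge. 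This contradicts maximality. Hence at the fixed point only sources are visible from below (dually, only sinks from above), and adding two full-width bars $s$ below and $t$ above yields a rectangular bar visibility representation of $st(G)$, which certifies that $st(G)$ is a planar $st$-graph. Until your proposal supplies an argument of this kind, the forward implication remains unproven.
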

As planar $st$-graphs can be recognized in linear time, the same is true for planar digraphs that admit a bar visibility representation. 
The natural problem that arises is the following:

\smallskip
\noindent \underline{Bar Visibility Representation Extension for Digraphs}:\smallskip\\
\noindent{\bf Input:}
 $(G, \psi')$; $G$ is a digraph; $\psi'$ is a map assigning bars to a $V' \subseteq V(G)$.\\
\noindent{\bf Question:}
Does $G$ admit a bar visibility representation $\psi$ with $\psi|V' = \psi'$?\smallskip\\
Although we do not provide a solution for this problem, we present an efficient algorithm for an important variant.
A bar visibility representation $\psi$ of a directed graph $G$ is called \emph{rectangular} if $\psi$ has a unique bar $\psi(s)$ with the lowest $y$-coordinate, a unique bar $\psi(t)$ with the highest $y$-coordinate, $\psi(s)$ and $\psi(t)$ span the same $x$-interval, and all other bars are inside the rectangle spanned between $\psi(s)$ and $\psi(t)$.
See Figure~\ref{fig:large} for an example of a rectangular bar visibility representation of a planar $st$-graph.

\begin{figure*}
  \begin{tikzpicture}[>=latex]
    \definecolor{light-gray}{gray}{0.60}
    \definecolor{areafill-gray}{gray}{0.90}
    \begin{scope}[xscale=0.5, yscale=0.35]
      \begin{tiny} {
        \thinmuskip=0.5mu
        \medmuskip=0.5mu plus 0.5mu minus 0.5mu
        \thickmuskip=0.5mu plus 0.5mu minus 0.5mu
        
        \begin{scope}[shift={(0,0)}]
          \node[] (s) at (5,-0.25) {$s$};
          \node[] (1) at (0,2.5) {$1$};
          \node[] (2) at (2,1.5) {$2$};
          \node[] (3) at (2,4.0) {$3$};
          \node[] (4) at (3.5,2.75) {$4$};
          \node[] (5) at (3.5,6.5) {$5$};
          \node[] (6) at (4.5,8.5) {$6$};
          \node[] (7) at (6.5,2) {$7$};
          \node[] (8) at (5,4) {$8$};
          \node[] (9) at (7.5,4) {$9$};
          \node[] (10) at (6,6) {$10$};
          \node[] (11) at (7.5,1) {$11$};
          \node[] (12) at (10,2) {$12$};
          \node[] (13) at (10,8) {$13$};
          \node[] (14) at (6,9.5) {$14$};
          \node[] (t) at (6,11.5) {$t$};

          \draw (s) edge[->, bend left=20] (1);
          \draw (s) edge[->] (2);
          \draw (s) edge[->] (5);
          \draw (s) edge[->] (7);
          \draw (s) edge[->] (11);
          \draw (s) edge[->, bend right=20] (12);
          \draw (1) edge[->, bend left] (5);
          \draw (2) edge[->] (3);
          \draw (2) edge[->] (4);
          \draw (3) edge[->] (5);
          \draw (4) edge[->] (5);
          \draw (5) edge[->] (6);
          \draw (5) edge[->, bend left] (t);
          \draw (6) edge[->] (14);
          \draw (7) edge[->] (8);
          \draw (7) edge[->] (9);
          \draw (8) edge[->] (10);
          \draw (9) edge[->] (10);
          \draw (10) edge[->] (13);
          \draw (10) edge[->] (14);
          \draw (11) edge[->] (12);
          \draw (11) edge[->] (13);
          \draw (12) edge[->] (13);
          \draw (13) edge[->] (14);
          \draw (13) edge[->] (t);
          \draw (14) edge[->] (t);
        \end{scope}

        \begin{scope}[shift={(13,0.75)}]
          \draw[fill=light-gray, color=light-gray] (2,0) rectangle (3,6);
          \draw[fill=light-gray, color=light-gray] (5,0) rectangle (6,6);
          \node at (5.5,10.4) {$\psi(t)$};
          \draw[thick, (-)] (0,10)--(11,10);
          \node at (5.5,9.4) {$\psi(14)$};
          \draw[thick, (-)] (1,9)--(10,9);
          \node at (9,8.4) {$\psi(13)$};
          \draw[thick, (-)] (7,8)--(11,8);
          \node at (3.5,8.4) {$\psi(6)$};
          \draw[thick, (-)] (1,8)--(6,8);
          \node at (7,7.4) {$\psi(10)$};
          \draw[thick, (-)] (6,7)--(8,7);
          \node at (3,6.4) {$\psi(5)$};
          \draw[thick, (-)] (0,6)--(6,6);
          \node at (10,5.4) {$\psi(12)$};
          \draw[thick, (-)] (9,5)--(11,5);
          \node at (7.6,4.5) {$\psi(9)$};
          \draw[thick, (-)] (7,4)--(8,4);
          \node at (6.4,4.5) {$\psi(8)$};
          \draw[thick, (-)] (6,4)--(7,4);
          \node at (4.6,4.5) {$\psi(4)$};
          \draw[thick, (-)] (4,4)--(5,4);
          \node at (3.4,4.5) {$\psi(3)$};
          \draw[thick, (-)] (3,4)--(4,4);
          \node at (9,3.4) {$\psi(11)$};
          \draw[thick, (-)] (8,3)--(10,3);
          \node at (1,2.4) {$\psi(1)$};
          \draw[thick, (-)] (0,2)--(2,2);
          \node at (7,1.4) {$\psi(7)$};
          \draw[thick, (-)] (6,1)--(8,1);
          \node at (4,1.4) {$\psi(2)$};
          \draw[thick, (-)] (3,1)--(5,1);
          \node at (5.5,0.4) {$\psi(s)$};
          \draw[thick, (-)] (0,0)--(11,0);
          
        \end{scope}
        }
      \end{tiny}
    \end{scope}
  \end{tikzpicture}
  \caption{A planar $st$-graph $G$ and a rectangular bar visibility representation $\psi$ of $G$.
  }
  \label{fig:large}
\end{figure*}
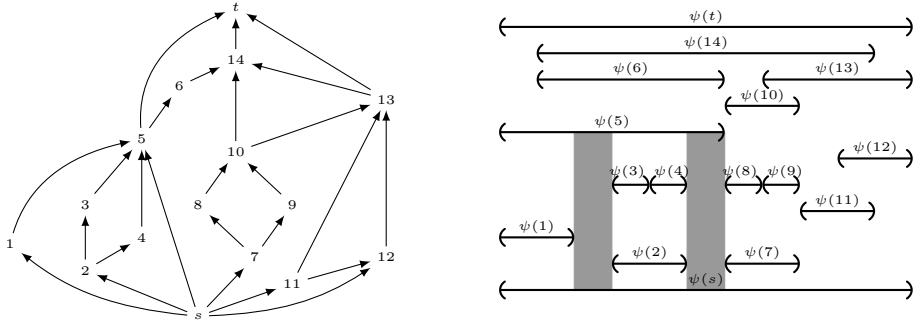

Tamassia and Tollis~\cite{TamassiaT86} showed that a planar digraph $G$ admits a rectangular bar visibility representation if and only if $G$ is a planar $st$-graph.
In Section~\ref{sec:rectangular} we give an efficient algorithm for the following problem:

\smallskip
\noindent \underline{Rectangular Bar Visibility Representation Extension for $st$-graphs}:\smallskip\\
\noindent{\bf Input:}
$(G, \psi')$; $G$ is a planar $st$-graph; $\psi'$ is a map assigning bars to a $V' \subseteq V(G)$.\\
\noindent{\bf Question:}
Does $G$ admit a rectangular bar visibility representation $\psi$ with $\psi|V' = \psi'$?\smallskip\\
The main result in this paper is the following.

\begin{theorem}\label{th:main-plane-st}
The Rectangular Bar Visibility Representation Extension Problem for an $st$-graph with $n$ vertices can be solved in $\Oh{n\log^2{n}}$ time.
\end{theorem}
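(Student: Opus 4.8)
The plan is to recast a rectangular bar visibility representation of a planar $st$-graph $G$ into a combinatorial-plus-coordinate object and then treat the extension task as a constraint-satisfaction problem over such objects. Recall from Tamassia and Tollis~\cite{TamassiaT86} that, once a planar embedding of $G$ is fixed, a rectangular representation is obtained by (i) choosing a real \emph{height} $Y(v)$ for every vertex that strictly increases along every directed edge — a real-valued linear extension of the reachability order $<_V$ of $G$ — and (ii) choosing, for every internal face and for the left and right outer faces, a real $x$-coordinate that strictly increases along the left-to-right order of faces in the dual $st$-graph; the bar of $v$ is then the segment at height $Y(v)$ whose endpoints are the $x$-coordinates of the faces immediately to the left and to the right of $v$, and in this construction the pairs of bars with a visibility gap are exactly the edges of $G$. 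The structural feature I would exploit is that a planar $st$-graph carries two orthogonal orders — the vertical reachability order $<_V$ and the horizontal left-of order $<_H$ induced by the embedding — so that any two distinct vertices are comparable in exactly one of them. A rectangular representation is thus precisely a simultaneous real realization of $<_V$ vertically and of $<_H$ horizontally, and the partial map $\psi'$ translates into three kinds of data: prescribed heights for the vertices of $V'$, prescribed bar endpoints for them, and, crucially, a prescribed left-to-right order among the fixed bars.

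The second step is to pin down the embedding. Since the input $st$-graph need not be $3$-connected, it generally admits many planar embeddings, and the freedom lives in the $P$- and $S$-nodes of the SPQR-tree of $G$, where parallel and serial components may be permuted left-to-right, whereas the $R$-nodes are rigid up to a reflection that the $st$-orientation together with the rectangular convention removes. Here the fixed bars are a powerful constraint: since each bar of $V'$ has a prescribed $x$-interval, the relative left-to-right order of the fixed bars is known in advance, and this forces ordering constraints at the SPQR-nodes separating them. First I would propagate these anchor constraints through the SPQR-tree to decide whether a compatible embedding exists and, if so, to fix the left-to-right order at every node, after which the free bars are inserted into the remaining degrees of freedom. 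This is where I expect planarity — expressed as the requirement that each vertex's in-neighbors occupy a \emph{contiguous} block of the sweep contour in the induced order — and the fixed anchors to interact most delicately, and it is the combinatorial heart of the proof.

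Given a compatible embedding, the final step realizes coordinates, which decouples into two one-dimensional feasibility problems. Vertically, the heights $Y(v)$ must strictly increase along every edge and agree with the prescribed heights on $V'$; as heights range over the reals, this is a system of difference constraints whose feasibility, and a witness, is governed by longest paths between consecutive fixed heights. Horizontally, the analogous problem places the face $x$-coordinates realizing $<_H$ while matching the prescribed bar endpoints — again a difference-constraint feasibility problem, now on the dual. I would carry both out with a bottom-to-top sweep maintaining the upper contour of already-placed bars in a balanced search tree, inserting each bar over exactly the contiguous block of its in-neighbors and charging the contour pieces it removes to the edges of $G$, for a total of $\Oh{n}$ removals by planarity. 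The main obstacle, and the expected source of the $\Oh{n\log^2{n}}$ bound, is to maintain under these insertions the feasible ranges of the still-free endpoints and heights so that every fixed bar can be met \emph{exactly}: I anticipate needing a nested data structure — a balanced decomposition of the contour carrying, at each node, a secondary structure (or a binary/parametric search) that tracks the admissible interval of positions — yielding one logarithmic factor from the sweep and a second from the feasibility propagation. Assembling the three steps, and reporting infeasibility whenever any of them fails, gives both the decision and a construction within the claimed running time.
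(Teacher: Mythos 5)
Your overall frame---$y$-coordinates as an $st$-valuation of $G$, $x$-coordinates as an $st$-valuation of the dual, and the SPQR-tree as the carrier of embedding freedom---matches the paper's setup, but the core of your plan has a genuine gap: the two-phase decoupling ``first fix a compatible embedding from the left-to-right order of the fixed bars, then solve two one-dimensional difference-constraint problems'' does not work. Which embedding is compatible cannot be read off from the order of the anchors, because feasibility is metric and recursive, not just ordinal. Concretely, consider a $P$-node two of whose children contain fixed bars with cores $[0,5]$ and $[5,8]$: the left-to-right order of these children is forced, but the instance is feasible only if the first child admits a representation whose bounding box ends \emph{exactly} at $x=5$ (right-fixed) and the second one admits a representation starting exactly at $x=5$ (left-fixed) --- discrete yes/no questions whose answers depend on choices made inside those subtrees, which in turn depend on their own children, and so on. The paper breaks this circularity with machinery your proposal lacks: each node is summarized by which of four representation types (left/right loose or fixed relative to its core) it admits, a Stretching Lemma shows this four-way summary is a complete interface, and the tree is processed bottom-up. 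Moreover, even inside a single $R$-node with a fixed flip, the coupling between face splitting lines and the children's loose/fixed choices is not a difference-constraint system: a child whose feasible set is, say, $\{LF, FL\}$ imposes ``loose on exactly one side,'' a disjunctive constraint. The paper therefore encodes the $R$-node case as a 2-SAT instance, and this step is unavoidable in spirit; longest-path propagation on the dual cannot express it. (A smaller error: the $st$-orientation and rectangular convention do \emph{not} kill the reflection of an $R$-node; the paper must try both flips of each rigid skeleton.)

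Your complexity claim also does not track the real bottleneck. The naive algorithm is quadratic not because of contour maintenance during a sweep, but because the 2-SAT formula for an $R$-node needs an implication for every pair of faces $f,g$ with $f$ to the left of $g$ whose admissible $x$-ranges overlap, and there can be quadratically many such pairs. The paper's two logarithmic factors come from a specific sparsification: the left-of relation on faces of a plane $st$-graph is a two-dimensional poset (Baker--Fishburn--Roberts), so it is the intersection of two linear orders computable via dominance numberings; a sweep in one order, with persistent balanced search trees keyed by the other data, lets auxiliary variables relay all the needed implications using only $O(n\log^2 n)$ clauses. Your proposed ``balanced decomposition of the contour with a secondary structure'' is not connected to any identified set of constraints, so the $O(n\log^2 n)$ bound in your write-up is asserted rather than derived.
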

Our algorithm exploits the correspondence between bar visibility representations and $st$-orientations of planar graphs, and utilizes the $SPQR$-decomposition.

The rest of the paper is organized as follows.
Section~\ref{sec:preliminaries} contains the necessary definitions and description of the necessary tools.
Section~\ref{sec:rectangular} contains the general ideas for the proof of Theorem~\ref{th:main-plane-st}.
The omitted parts of the proof are reported in%
{
\iflncs
\arxivappendix{C}{section.A.3}%
\else
Appendix~\ref{app:rectangular}%
\fi
}
together with some figures illustrating the ideas behind the proofs.
Section~\ref{sec:conclusion} mentions further results from the full version and open problems. 

\section{Preliminaries}
\label{sec:preliminaries}

For a horizontal bar $a$, functions $y(a)$, $l(a)$, $r(a)$ give respectively the $y$-coordinate of $a$, the $x$-coordinate of the left end of $a$, and the $x$-coordinate of the right end of $a$.
For any bounded object $Q$ in the plane, we use functions $X(Q)$ and $Y(Q)$ to denote the smallest possible, possibly degenerate, closed interval containing the projection of $Q$ on the $x$-, and on the $y$-axis respectively.
We denote the left end of $X(Q)$ by $l(Q)$ and the right end of $X(Q)$ by $r(Q)$.
Let $a$ and $b$ be two horizontal bars with $y(a) < y(b)$.
We say that $Q$ is \emph{spanned between $a$ and $b$} if $X(Q) \subseteq X(a)$, $X(Q) \subseteq X(b)$, and $Y(Q) = [y(a),y(b)]$.

For a graph $G$, we often describe the visibility representation $\psi$ by providing the values of functions $y_\psi = y(\psi(v))$, $l_\psi = l(\psi(v))$, $r_\psi = r(\psi(v))$ for any vertex $v$ of $G$.
We drop the subscripts when the representation is known from the context.

Let $G$ be a planar $st$-graph.
An \emph{$st$-embedding} of $G$ is any planar embedding with $s$ and $t$ on the boundary of the outer face.
A planar $st$-graph together with an $st$-embedding is called a \emph{plane $st$-graph}.
Vertices $s$ and $t$ of a planar (plane) $st$-graph are called the \emph{poles} of $G$.
We abuse notation and we use the term \emph{planar (plane) $uv$-graph} to mean a planar (plane) $st$-graph with poles $u$ and $v$.
An \emph{inner vertex} of $G$ is a vertex of $G$ other than the poles of $G$.
A real valued function $\xi$ from $V(G)$ is an \emph{$st$-valuation} of $G$ if for each edge $(u, v)$ we have $\xi(u) < \xi(v)$.

Tamassia and Tollis~\cite{TamassiaT86} showed that the following properties hold for any plane $st$-graph:
\begin{enumerate}
  \item \label{prop:st_inner_face} For every inner face $f$, the boundary of $f$ consists of two directed paths with a common origin and a common destination.
  \item \label{prop:st_outer_face} The boundary of the outer face consists of two directed paths, with a common origin $s$ and a common destination $t$.
  \item \label{prop:st_bipolar} For every inner vertex $v$, edges from $v$ (to $v$) are consecutive around $v$.
\end{enumerate}

Let $G$ be a plane $st$-graph.
We introduce two objects associated with the outer face of $G$: the \emph{left outer face} $s^*$ and the \emph{right outer face} $t^*$.
Properties~\eqref{prop:st_inner_face}-\eqref{prop:st_bipolar} allow us to introduce the following standard notions:
\emph{left/right face} of an edge and a vertex,
\emph{left/right path} of a face,
and the \emph{dual} $G^*$ of $G$ -- a planar $st$-graph with vertex set consisting of inner faces of $G$, $s^*$, and $t^*$.
For two faces $f$ and $g$ in $V(G^*)$ we say that \emph{$f$ is to the left of $g$}, and that \emph{$g$ is to the right of $f$}, if there is a directed path from $f$ to $g$ in $G^*$.
See%
{
\iflncs
\arxivappendix{B.2}{subsection.A.2.2}%
\else
Appendix~\ref{app:plane-st-graphs}%
\fi
} 
for the precise definitions which follow the standard definitions given by Tamassia and Tollis~\cite{TamassiaT86}.

\section{Rectangular bar visibility representations of $st$-graphs}
\label{sec:rectangular}

In this section we provide an efficient algorithm that solves the rectangular bar visibility representation extension problem for $st$-graphs.
Our algorithm employs a specific version of the $SPQR$-decomposition that allows us to describe all $st$-embeddings of a planar $st$-graph. 
See%
{
\iflncs
\arxivappendix{B.1}{subsection.A.2.1}%
\else
Appendix~\ref{app:SPQR-trees}%
\fi
} 
for the exact definition which follows the one given by Di Battista and Tamassia~\cite{BattistaT96}.
In particular, an $SPQR$-tree $T$ of a planar $st$-graph $G$ consists of nodes of four different types: $S$ for \emph{series nodes}, $P$ for \emph{parallel nodes}, $Q$ for \emph{edge nodes}, and $R$ for \emph{rigid nodes}.
Each node $\mu$ represents a \emph{pertinent graph} $G_\mu$, a subgraph of $G$ which is an $st$-graph with poles $s_\mu$ and $t_\mu$.
Additionally, $\mu$ has an associated directed multigraph $skel(\mu)$ called the \emph{skeleton} of $\mu$.
The only difference between our definition of the $SPQR$-tree and the one given in~\cite{BattistaT96} is that we do not add an additional edge between the poles of the skeleton of a node.
Our definition ensures that we have a one-to-one correspondence between the edges of $skel(\mu)$ and the children of $\mu$ in $T$. In Section~\ref{sec:rectangular_properties}, we use the $SPQR$-tree $T$ of $G$ to describe how a rectangular bar visibility representation is composed of rectangular bar visibility representations of the pertinent graphs of $T$.

The skeleton of a rigid node has only two $st$-embeddings, one being the flip of the other around the poles of the node.
The skeleton of a parallel node with $k$ children has $k!$ $st$-embeddings, one for every permutation of the edges of the skeleton.
The skeleton of a series node or an edge node has only one $st$-embedding.

Section~\ref{sec:rectangular_properties} presents structural properties of bar visibility representations in relation to an $SPQR$-decomposition. 
In Section~\ref{sec:rectangular_algorithm_slow} we present an algorithm that solves this extension problem in quadratic time.
In%
{
\iflncs
\arxivappendix{C.6}{subsection.A.3.6}%
\else
Appendix~\ref{app:rectangular_algorithm_fast}%
\fi
} 
we give a refined algorithm that works in $\Oh{n\log^2n}$ time for an $st$-graph with $n$ vertices.

\subsection{Structural properties}
\label{sec:rectangular_properties}

Let $\Gamma$ be a collection of pairwise disjoint bars.
For a pair of bars $a$, $b$ in $\Gamma$ with $y(a) < y(b)$ let the \emph{set of visibility rectangles} $R(a,b)$ be
the interior of the set of points $(x,y)$ in $\mathbb{R}^2$ where:
\begin{enumerate}
\item $a$ is the first bar in $\Gamma$ on a vertical line downwards from $(x,y)$,
\item $b$ is the first bar in $\Gamma$ on a vertical line upwards from $(x,y)$.
\end{enumerate}
Figure \ref{fig:large} shows (shaded area) the set of visibility rectangles $R(s,5)$. 
Note that there is a visibility gap between $a$ and $b$ in $\Gamma$ iff $R(a,b)$ is non-empty.
If $R(a,b)$ is non-empty, then it is a union of pairwise disjoint open rectangles spanned between $a$ and $b$.

Let $G$ be a planar $st$-graph and let $T$ be the $SPQR$-tree for $G$.
Let $\psi$ be a rectangular bar visibility representation of $G$.
For every node $\mu$ of $T$ we define the set $B_{\psi}(\mu)$, called the \emph{bounding box of $\mu$ with respect to $\psi$}, as the closure of the following union:
$$\bigcup \left\{ R(\psi(u),\psi(v)): (u,v) \text{ is an edge of the pertinent digraph $G_\mu$} \right\}\text{.}$$
If $\psi$ is clear from the context, then the set $B_\psi(\mu)$ is denoted by $B(\mu)$ and is called the \emph{bounding box of $\mu$}.
Let $B(\psi) = X(\psi(V(G))) \times Y(\psi(V(G)))$ be the minimal closed axis-aligned rectangle that contains the representation $\psi$.
It follows that:
\begin{enumerate}
  \item $B(\psi) = B_\psi(\mu)$, where $\mu$ is the root of $T$,
  \item each point in $B(\psi)$ is in the closure of at least one set of visibility rectangles $R(\psi(u),\psi(v))$ for some edge $(u,v)$ of $G$,
  \item each point in $B(\psi)$ is in at most one set of visibility rectangles.
\end{enumerate}
The following two lemmas describe basic properties of a bounding box.

\begin{lemma}[Q-Tiling Lemma]
\label{lem:tiling-lemma-Q-node}
Let $\mu$ be a $Q$-node in $T$ corresponding to an edge $(u,v)$ of $G$.
For any rectangular bar visibility representation $\psi$ of $G$ we have:
\begin{enumerate}
\item $B(\mu)$ is a union of pairwise disjoint rectangles spanned between $\psi(u)$ and $\psi(v)$.
\item If $B(\mu)$ is not a single rectangle, then the parent $\lambda$ of $\mu$ in $T$ is a $P$-node, and $u$, $v$ are the poles of the pertinent digraph $G_{\lambda}$.
\end{enumerate}
\end{lemma}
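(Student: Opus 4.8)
The plan is to analyze the geometric structure of $B(\mu)$ for a $Q$-node directly from the definition of the bounding box together with the three global properties of $B(\psi)$ stated above. Since $\mu$ corresponds to a single edge $(u,v)$, the pertinent digraph $G_\mu$ consists of exactly that edge, so the bounding box is simply the closure of $R(\psi(u),\psi(v))$. First I would invoke the unlabeled structural remark made just before the lemma: if $R(a,b)$ is non-empty then it is a union of pairwise disjoint open rectangles, each spanned between $a$ and $b$. Applying this with $a=\psi(u)$ and $b=\psi(v)$ immediately yields part (1), up to the observation that $R(\psi(u),\psi(v))$ is non-empty because $(u,v)\in E(G)$ forces a visibility gap between the two bars in any valid representation $\psi$.

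For part (2), the approach is to argue contrapositively about the parent node $\lambda$. Suppose $B(\mu)$ consists of at least two disjoint rectangles spanned between $\psi(u)$ and $\psi(v)$. My first step would be to show that this forces $u$ and $v$ to be the poles of $G_\lambda$: the two rectangles are separated by some bar (or stack of bars) that lies strictly between $y(\psi(u))$ and $y(\psi(v))$ inside the horizontal overlap of the two bars, yet by global property (3) no other visibility-rectangle set may intrude into $B(\mu)$. This means the obstructing bars belong to a part of the representation that is ``attached'' to both $u$ and $v$ but routed geometrically beside the edge $(u,v)$; in $SPQR$ terms this is exactly a sibling pertinent graph sharing the same poles, which can only happen when the parent is a $P$-node and $s_\lambda=u$, $t_\lambda=v$. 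If the parent were an $S$-node, $R$-node, or the degenerate case, the edge $(u,v)$ would not share both endpoints as poles with a parallel sibling, and I would rule each of these out by noting that in those skeletons the edge for $\mu$ cannot be in parallel with anything, so $R(\psi(u),\psi(v))$ would have to be a single rectangle.

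The key lemma tying the two parts together is the correspondence, to be developed in this section, between the $SPQR$-tree and how bounding boxes of sibling pertinent graphs tile their common region; I would phrase part (2) as saying that the multiplicity of rectangles in $B(\mu)$ is witnessed precisely by the parallel composition at $\lambda$. The main obstacle I anticipate is the rigorous geometric argument that the bars separating two components of $R(\psi(u),\psi(v))$ must come from a sibling subgraph with poles $u$ and $v$, rather than from some unrelated part of $G$; this requires using the \emph{rectangularity} of $\psi$ (so that the whole representation is confined to the rectangle spanned between $\psi(s)$ and $\psi(t)$) together with properties~\eqref{prop:st_inner_face}--\eqref{prop:st_bipolar} of plane $st$-graphs to guarantee that the edges whose visibility rectangles neighbor $B(\mu)$ are exactly those of the pertinent graphs parallel to the edge $(u,v)$. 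Once that localization is established, identifying the parent as a $P$-node with poles $u,v$ is immediate from the definition of the $SPQR$-decomposition.
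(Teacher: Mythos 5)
Your part (1) is fine and matches the paper (which dismisses it as obvious, citing the same remark that a non-empty $R(a,b)$ is a union of disjoint rectangles spanned between the two bars). Your plan for part (2) is also the paper's plan in outline: show that the bars separating two rectangles of $B(\mu)$ constitute a sibling split component with poles $u$ and $v$, and conclude that $\lambda$ is a $P$-node. However, the step you explicitly defer as ``the main obstacle I anticipate'' is precisely the entire content of the paper's proof, and you never supply it; as written, the claim that the obstructing bars are ``attached to both $u$ and $v$'' is an assertion, not an argument, and a priori those bars could belong to vertices with neighbors elsewhere in $G$. The missing argument is a short localization. Let $R_1$ and $R_2$ be two consecutive (say, the two left-most) rectangles of $B(\mu)$, and let $S$ be the rectangle spanned between $\psi(u)$ and $\psi(v)$ lying between the right side of $R_1$ and the left side of $R_2$. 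Every bar blocking visibility inside $S$ lies strictly between $y(\psi(u))$ and $y(\psi(v))$ and must be \emph{entirely contained} in $S$: if it crossed a vertical side of $S$ it would intrude into $R_1$ or $R_2$ and destroy those visibility rectangles. Since $\psi(u)$ and $\psi(v)$ span the full bottom and top of $S$, a bar contained in $S$ can see only other bars in $S$, $\psi(u)$, or $\psi(v)$; hence every edge of $G$ incident to a vertex represented in $S$ stays within this vertex set together with $u,v$. These vertices plus $u$ and $v$ therefore form a planar $uv$-graph, so $\{u,v\}$ is a split pair of $G$ with at least two split components --- the edge $(u,v)$ and one containing the bars in $S$ --- and by the construction of the $SPQR$-tree the parent $\lambda$ of $\mu$ is a $P$-node with poles $u$ and $v$.

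Two further remarks on the tools you invoke. First, the ingredients you name for closing the gap --- rectangularity of $\psi$ and the three combinatorial properties of plane $st$-graphs --- are not what is needed; the argument above uses only the geometry of vertical visibility and the definition of a split pair. Likewise, your appeal to global property (3) (disjointness of the visibility-rectangle sets) does no real work here. Second, your fallback of ruling out $S$- and $R$-node parents by saying that in those skeletons the edge ``cannot be in parallel with anything'' presupposes exactly what has to be proved, namely that the separating bars form a component hanging off the pair $\{u,v\}$; once the localization argument is in place, that case analysis is unnecessary.
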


The Basic Tiling Lemma presented below describes the relation between the bounding box of an inner node $\mu$ and the bounding boxes of the children of $\mu$ in any rectangular bar visibility representation of $G$.
The next lemma justifies the name \emph{bounding box} for $B(\mu)$.

\begin{lemma}[Basic Tiling Lemma]
\label{lem:basic-tiling-lemma}
Let $\mu$ be an inner node in $T$ with children $\mu_1,\ldots,\mu_k$, $k \geq 2$.
For a rectangular bar visibility representation $\psi$ of $G$ we have:
\begin{enumerate}
 \item \label{lem:basic-tiling-lemma-containment} $\psi(v) \subseteq B(\mu)$ for every inner vertex $v$ of $G_\mu$.
 \item \label{lem:basic-tiling-lemma-rectangle} $B(\mu)$ is a rectangle that is spanned between $\psi(s_\mu)$  and $\psi(t_\mu)$.
 \item \label{lem:basic-tiling-lemma-tiling} The sets $B(\mu_1), \ldots, B(\mu_k)$ tile the rectangle $B(\mu)$, \ie{},
 $B(\mu_1), \ldots, B(\mu_k)$ cover $B(\mu)$ and the interiors of $B(\mu_1), \ldots, B(\mu_k)$ are pairwise disjoint.
\end{enumerate}
\end{lemma}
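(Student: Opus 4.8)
The plan is to prove the three claims simultaneously by induction on the $SPQR$-tree $T$, working from the leaves upward. The leaves of $T$ are the $Q$-nodes, and for them the shape of $B(\mu)$ is already pinned down by the Q-Tiling Lemma. In the inductive step $\mu$ is an inner node of type $S$, $P$, or $R$ with children $\mu_1,\dots,\mu_k$, and I assume the conclusion of this lemma for every child that is an inner node and the Q-Tiling Lemma for every child that is a $Q$-node.

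I would dispose of claim~\ref{lem:basic-tiling-lemma-tiling} first, as it is essentially formal. The children of $\mu$ are in bijection with the edges of $skel(\mu)$, so the edge sets $E(G_{\mu_1}),\dots,E(G_{\mu_k})$ partition $E(G_\mu)$. Consequently $\bigcup_{e\in E(G_\mu)} R(e)=\bigcup_{i}\bigcup_{e\in E(G_{\mu_i})}R(e)$, and since closure commutes with finite unions this gives $B(\mu)=\bigcup_i B(\mu_i)$, \ie{} the $B(\mu_i)$ cover $B(\mu)$. That their interiors are pairwise disjoint follows from the global property that each point lies in at most one visibility-rectangle set together with the disjointness of the edge sets: a generic interior point of $B(\mu_i)$ lies in some $R(e)$ with $e\in E(G_{\mu_i})$, and if it also lay in $\operatorname{int} B(\mu_j)$ the same point would force $e\in E(G_{\mu_j})$, contradicting $E(G_{\mu_i})\cap E(G_{\mu_j})=\varnothing$. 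Claim~\ref{lem:basic-tiling-lemma-containment} then reduces to claim~\ref{lem:basic-tiling-lemma-rectangle}: an inner vertex $v$ of $G_\mu$ is either an inner vertex of some $G_{\mu_i}$, in which case $\psi(v)\subseteq B(\mu_i)\subseteq B(\mu)$ by induction and coverage, or a non-pole vertex of $skel(\mu)$ shared as a pole by several children, in which case $\psi(v)$ sits at height $y(v)\in[y(s_\mu),y(t_\mu)]$ and spans the width of $B(\mu)$, so again $\psi(v)\subseteq B(\mu)$.

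The heart of the matter is claim~\ref{lem:basic-tiling-lemma-rectangle}. The vertical direction is the easy half: because $G_\mu$ is an $st$-graph with source $s_\mu$ and sink $t_\mu$ and $\psi$ respects orientations, every vertex of $G_\mu$ has its $y$-coordinate in $[y(s_\mu),y(t_\mu)]$, every $R(e)$ with $e\in E(G_\mu)$ has $Y$-projection inside this interval, and the extreme values are attained by the edges incident to $s_\mu$ and to $t_\mu$; hence $Y(B(\mu))=[y(s_\mu),y(t_\mu)]$. To upgrade this to a filled rectangle I would prove vertical convexity: for every $x_0$ in the interior of $X(B(\mu))$ the whole segment $\{x_0\}\times[y(s_\mu),y(t_\mu)]$ lies in $B(\mu)$. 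Equivalently, the bars met by the vertical line at $x_0$ between heights $y(s_\mu)$ and $y(t_\mu)$ are all vertices of $G_\mu$ and form a directed $s_\mu$--$t_\mu$ path, so that every gap along the line is the visibility rectangle of an edge of $G_\mu$. Reading the bottom and top slices of $B(\mu)$, which are covered respectively by the rectangles of edges incident to $s_\mu$ and to $t_\mu$, also yields the inclusions $X(B(\mu))\subseteq X(\psi(s_\mu))$ and $X(B(\mu))\subseteq X(\psi(t_\mu))$, so that $B(\mu)$ is spanned between $\psi(s_\mu)$ and $\psi(t_\mu)$.

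The main obstacle is precisely this vertical-convexity step: showing that a vertical line through the interior of $B(\mu)$ cannot escape $G_\mu$. The conceptual reason is that the split pair $\{s_\mu,t_\mu\}$ separates $G_\mu$ from the rest of $G$, so in any $st$-embedding $G_\mu$ fills the region enclosed by its left and right boundary paths; the real work is to show that these two paths are realised as the straight left and right sides of the rectangle $B(\mu)$, after which no bar outside $G_\mu$ can intrude into its interior. I expect to carry this out through the case analysis dictated by the node type: vertical stacking of the children for an $S$-node, where each intermediate pole bar must span the full width of $B(\mu)$; horizontal juxtaposition of the children for a $P$-node, each already a rectangle of height $[y(s_\mu),y(t_\mu)]$ by induction; and the genuinely delicate $R$-node, where aligning the children's boundaries into straight sides of $B(\mu)$ is controlled through the correspondence between the primal $st$-graph and its dual $G^*$ that governs the $x$-coordinates.
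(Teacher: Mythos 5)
The decisive part of the lemma, Property~\eqref{lem:basic-tiling-lemma-rectangle}, is not actually proved in your proposal. You correctly isolate the difficulty (a vertical line through the interior of $B(\mu)$ must not meet bars of vertices outside $G_\mu$), but you then defer its resolution to a case analysis over the type of $\mu$ that is never carried out; for the $R$-node case you only state an intention, invoking the primal/dual correspondence that ``governs the $x$-coordinates''. In the paper that correspondence is exactly the content of the Face Condition (Lemma~\ref{lem:face-condition}) and of Lemmas~\ref{lem:tiling-lemma-S-node} and~\ref{lem:tiling-lemma-R-node}, all of which are \emph{derived from} the Basic Tiling Lemma; so your plan would have to rebuild that machinery from scratch inside the induction, and nothing in the proposal does so. Moreover, neither the induction nor the case analysis is needed: the paper's proof is direct and uniform in the node type. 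Its single key observation is that $\{s_\mu,t_\mu\}$ is a split pair of $G$, hence every edge of $G$ incident to an inner vertex $v$ of $G_\mu$ is an edge of $G_\mu$. This yields Property~\eqref{lem:basic-tiling-lemma-containment} immediately (the closures of the visibility rectangles of the edges at $v$ cover $\psi(v)$), and it yields Property~\eqref{lem:basic-tiling-lemma-rectangle} via the full-height slabs $S_\mu(v) = X(\psi(v)) \times [y(\psi(s_\mu)), y(\psi(t_\mu))]$: such a slab is internally disjoint from every bar of a vertex outside $V(G_\mu)$, since otherwise some visibility gap, \ie{} some edge of $G$, would join an inner vertex of $G_\mu$ to a vertex of $V(G) \setminus V(G_\mu)$. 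Consequently $B(\mu)$ equals the union of these slabs (together with $R(\psi(s_\mu),\psi(t_\mu))$ when $(s_\mu,t_\mu)$ is an edge of $G_\mu$), and this union is a rectangle spanned between $\psi(s_\mu)$ and $\psi(t_\mu)$.

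There is also a concrete false step in your treatment of Property~\eqref{lem:basic-tiling-lemma-containment}. For an inner vertex $v$ of $G_\mu$ that is a non-pole vertex of $skel(\mu)$ you assert that $\psi(v)$ ``spans the width of $B(\mu)$''. That holds when $\mu$ is an $S$-node---but this is precisely the S-Tiling Lemma, which is unavailable at this point of your induction---and it is false when $\mu$ is an $R$-node: internal skeleton vertices of a rigid node in general have bars spanning only part of the width of $B(\mu)$ (indeed, if every such bar spanned the full width, the skeleton would decompose in series rather than be rigid). So even the part of your argument you treat as routine rests on an incorrect claim and would have to be replaced by the direct split-pair argument above.
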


In the next three lemmas we specialize the Basic Tiling Lemma depending on whether $\mu$ is a $P$-node, an $S$-node, or an $R$-node.
These lemmas allow us to describe all tilings of $B(\mu)$ by bounding boxes of $\mu$'s children.
For Lemmas~\ref{lem:tiling-lemma-P-node},~\ref{lem:tiling-lemma-S-node}, and~\ref{lem:tiling-lemma-R-node} we let $\mu_1, \ldots, \mu_k$ be $\mu$'s children.
The next lemma follows from the Basic Tiling Lemma and the Q-Tiling Lemma.
\begin{lemma}[P-Tiling Lemma]
\label{lem:tiling-lemma-P-node}
Let $\mu$ be a $P$-node.
For any rectangular bar visibility representation $\psi$ of $G$ we have:
\begin{enumerate}
  \item If $(s_\mu, t_\mu)$ is not an edge of $G$, then the sets $B(\mu_1),\ldots,B(\mu_k)$ are rectangles spanned between $\psi(s_\mu)$ and $\psi(t_\mu)$.
  \item If $(s_\mu, t_\mu)$ is an edge of $G$, then $\mu$ has exactly one child that is a $Q$-node, say $\mu_1$, and:
 \begin{itemize}
  \item For $i=2,\ldots,k$, $B(\mu_i)$ is a rectangle spanned between $\psi(s_\mu)$ and $\psi(t_\mu)$.
  \item $B(\mu_1) \neq \emptyset$ is a union of rectangles spanned between $\psi(s_\mu)$ and $\psi(t_\mu)$.
 \end{itemize}
\end{enumerate}
\end{lemma}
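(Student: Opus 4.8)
The plan is to derive this lemma as a short corollary of the Basic Tiling and Q-Tiling Lemmas (Lemmas~\ref{lem:basic-tiling-lemma} and~\ref{lem:tiling-lemma-Q-node}) through a case analysis on the children of $\mu$. First I would record two structural features of a $P$-node supplied by the $SPQR$-decomposition. Since the skeleton of $\mu$ is a bundle of parallel edges joining $s_\mu$ and $t_\mu$, every child $\mu_i$ has the same poles as $\mu$, that is $s_{\mu_i}=s_\mu$ and $t_{\mu_i}=t_\mu$. Moreover, by the one-to-one correspondence between the edges of $skel(\mu)$ and the children of $\mu$, a child is a $Q$-node exactly when it represents a real edge $(s_\mu,t_\mu)$ of $G$; as $G$ is simple, there is at most one such edge, and hence at most one $Q$-node among the children. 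Every remaining child is an inner node (an $S$-, $P$-, or $R$-node), and each such node has at least two children, so the Basic Tiling Lemma applies to it.

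Next I would classify each child $\mu_i$. If $\mu_i$ is an inner node, then applying the Basic Tiling Lemma to $\mu_i$ shows that $B(\mu_i)$ is a single rectangle spanned between $\psi(s_{\mu_i})$ and $\psi(t_{\mu_i})$, which by the pole-sharing observation is a rectangle spanned between $\psi(s_\mu)$ and $\psi(t_\mu)$. If instead $\mu_i$ is a $Q$-node, then its pertinent digraph $G_{\mu_i}$ is the single edge $(s_\mu,t_\mu)$, so $B(\mu_i)$ is the closure of $R(\psi(s_\mu),\psi(t_\mu))$, and the Q-Tiling Lemma gives that it is a union of pairwise disjoint rectangles spanned between $\psi(s_\mu)$ and $\psi(t_\mu)$.

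With these facts the two cases of the statement follow immediately. If $(s_\mu,t_\mu)\notin E(G)$, no child is a $Q$-node, so every $B(\mu_i)$ is a single spanned rectangle by the first alternative above. If $(s_\mu,t_\mu)\in E(G)$, there is exactly one $Q$-node child, say $\mu_1$; for $i\geq 2$ the children are inner nodes, so each $B(\mu_i)$ is a single spanned rectangle, while $B(\mu_1)$ is a union of spanned rectangles. The only extra point is that $B(\mu_1)\neq\emptyset$: because $(s_\mu,t_\mu)$ is an edge of $G$ and $\psi$ is a bar visibility representation, there is a visibility gap between $\psi(s_\mu)$ and $\psi(t_\mu)$, so $R(\psi(s_\mu),\psi(t_\mu))$, and therefore its closure $B(\mu_1)$, is non-empty.

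I do not expect a genuine obstacle, since the statement is essentially bookkeeping assembled from the two preceding lemmas; the points that need care are confirming that simplicity of $G$ forces at most one $Q$-node child (so that $\mu_1$ is uniquely determined) and establishing $B(\mu_1)\neq\emptyset$, both handled above. The one conceptual subtlety worth stressing is \emph{why} the $Q$-node child can contribute a genuine union rather than a single rectangle: bars of the parallel siblings lying between $\psi(s_\mu)$ and $\psi(t_\mu)$ may split the visibility region of the edge $(s_\mu,t_\mu)$ into several rectangles, which is precisely the case allowed by the second part of the Q-Tiling Lemma and the reason $B(\mu_1)$ is stated only as a union of rectangles.
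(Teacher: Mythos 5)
Your proof is correct and follows exactly the route the paper takes: the paper itself gives no written-out proof, stating only that the lemma ``follows from the Basic Tiling Lemma and the Q-Tiling Lemma,'' and your argument is precisely that derivation with the bookkeeping (shared poles of children of a $P$-node, uniqueness of the $Q$-node child, non-emptiness of $B(\mu_1)$ via the visibility gap for the edge $(s_\mu,t_\mu)$) made explicit.
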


When $\mu$ is an $S$-node or an $R$-node, then there is no edge $(s_\mu,t_\mu)$.
By the Q-Tiling Lemma and by the Basic Tiling Lemma, each set $B(\mu_i)$ is a rectangle that is spanned between the bars representing the poles of $G_{\mu_i}$.

\begin{lemma}[S-Tiling Lemma]
\label{lem:tiling-lemma-S-node}
Let $\mu$ be an $S$-node.
Let $c_1,\ldots,c_{k-1}$ be the cut-vertices of $G_\mu$ encountered in this order on a path from $s_\mu$ to $t_\mu$.
Let $c_0 = s_\mu$, and $c_k = t_\mu$.
For any rectangular bar visibility representation $\psi$ of $G$, for every $i=1,\ldots,k-1$, we have $X(\psi(c_i)) = X(B(\mu))$.
For every $i=1,\ldots,k$, $B(\mu_i)$ is spanned between $\psi(c_{i-1})$ and $\psi(c_i)$ and $X(B(\mu_i)) = X(B(\mu))$.
\end{lemma}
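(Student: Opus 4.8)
The plan is to apply the Basic Tiling Lemma (Lemma~\ref{lem:basic-tiling-lemma}) as the backbone, and then exploit the very special structure of an $S$-node skeleton --- namely that $G_\mu$ is a chain of blocks glued at cut-vertices $c_1,\dots,c_{k-1}$, with $skel(\mu)$ being a single directed path $s_\mu = c_0 \to c_1 \to \dots \to c_k = t_\mu$ whose edges correspond bijectively to the children $\mu_1,\dots,\mu_k$, where $G_{\mu_i}$ is the pertinent graph with poles $c_{i-1}$ and $c_i$. First I would invoke part~\eqref{lem:basic-tiling-lemma-rectangle} of the Basic Tiling Lemma to get that $B(\mu)$ is a rectangle spanned between $\psi(s_\mu)$ and $\psi(t_\mu)$, and part~\eqref{lem:basic-tiling-lemma-tiling} to get that the children's bounding boxes $B(\mu_1),\dots,B(\mu_k)$ tile this rectangle. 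By the remark immediately preceding the statement (consequence of the Q-Tiling and Basic Tiling Lemmas, using that there is no edge $(s_\mu,t_\mu)$ for an $S$-node), each $B(\mu_i)$ is itself a rectangle spanned between $\psi(c_{i-1})$ and $\psi(c_i)$.

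The crux is then a one-dimensional tiling argument. Since the $B(\mu_i)$ are rectangles that tile the rectangle $B(\mu)$ with pairwise disjoint interiors, I would argue they must be stacked vertically: the $y$-intervals $Y(B(\mu_i)) = [y(\psi(c_{i-1})),y(\psi(c_i))]$ are forced to be consecutive and non-overlapping because the underlying $st$-structure gives the strict chain $y(\psi(c_0)) < y(\psi(c_1)) < \dots < y(\psi(c_k))$ (each $c_{i-1} \to c_i$ is realized by a directed subgraph, so lower pole is strictly below upper pole). Knowing the vertical extents partition $Y(B(\mu))$ exactly, and that the boxes cover the full rectangle $B(\mu)$, each horizontal slab must be filled entirely by the single box $B(\mu_i)$ occupying that $y$-range; hence $X(B(\mu_i)) = X(B(\mu))$ for every $i$. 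This is the step I expect to be the main obstacle: I must rule out the possibility that two children occupy the same horizontal slab side-by-side, which is where I would need the $S$-node series structure --- the fact that consecutive blocks share exactly the single cut-vertex $c_i$ and that $skel(\mu)$ is a path (not a parallel composition) --- to conclude the decomposition is purely vertical rather than having any horizontal subdivision.

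Finally, for the cut-vertex claim, I would observe that each shared pole $c_i$ (for $1 \le i \le k-1$) serves simultaneously as the top pole $t_{\mu_i}$ of $G_{\mu_i}$ and the bottom pole $s_{\mu_{i+1}}$ of $G_{\mu_{i+1}}$. By the spanning conditions just established, $B(\mu_i)$ is spanned between $\psi(c_{i-1})$ and $\psi(c_i)$ and $B(\mu_{i+1})$ is spanned between $\psi(c_i)$ and $\psi(c_{i+1})$, and both share the full width $X(B(\mu))$. Since $\psi(c_i)$ bounds $B(\mu_i)$ from above and $B(\mu_{i+1})$ from below, and the bar $\psi(c_i)$ must stretch across the entire top edge of $B(\mu_i)$ (equivalently the entire bottom edge of $B(\mu_{i+1})$) for the tiling to close up with no gaps, I would conclude $X(\psi(c_i)) = X(B(\mu))$. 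The remaining assertions, that $B(\mu_i)$ is spanned between $\psi(c_{i-1})$ and $\psi(c_i)$ with $X(B(\mu_i)) = X(B(\mu))$, then follow by combining the spanning already noted with the width equality derived above. The routine bookkeeping here is the verification that the visibility-rectangle definition of the bounding box forces these $x$-intervals to coincide rather than merely be contained, which follows from the covering property of the tiling.
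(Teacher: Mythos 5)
Your proof is correct, but it takes a genuinely different route from the paper's. The paper proves the lemma with a short contradiction argument that works directly with connectivity: suppose $c_i$ is the \emph{first} cut-vertex whose bar fails to span $X(B(\mu))$; then through the uncovered vertical strip some vertex of the $i$-th block sees some vertex of a later block, and since in the bar visibility model a visibility gap is equivalent to an edge, this produces an edge of $G$ bypassing $c_i$, contradicting that $c_i$ is a cut-vertex. You instead stay entirely inside the tiling machinery: the pole bars pin $Y(B(\mu_i))$ to $[y(\psi(c_{i-1})),y(\psi(c_i))]$, the strict chain $y(\psi(c_0)) < \cdots < y(\psi(c_k))$ (forced by directedness) makes these horizontal slabs internally disjoint, and the covering property of the Basic Tiling Lemma then forces each $B(\mu_i)$ to fill its entire slab, giving $X(B(\mu_i)) = X(B(\mu))$; spanned-ness of $B(\mu_i)$ between its pole bars then yields $X(B(\mu)) \subseteq X(\psi(c_i))$. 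Your version is longer but more mechanical, derives everything from lemmas already established, and explicitly proves \emph{both} assertions of the statement (the paper's proof argues only about the cut-vertex bars and leaves the claims about the $B(\mu_i)$ implicit); the paper's version is shorter and more elementary but terse. Two small remarks. First, your worry about two children sitting side by side in the same slab is a non-issue: distinct children of an $S$-node have distinct pole pairs, so their $y$-intervals are internally disjoint by the pinning just described, and no further appeal to the series structure is needed. Second, for the equality $X(\psi(c_i)) = X(B(\mu))$ you only establish the containment $X(B(\mu)) \subseteq X(\psi(c_i))$; the reverse containment requires the containment property (item 1) of the Basic Tiling Lemma, namely $\psi(c_i) \subseteq B(\mu)$ because $c_i$ is an inner vertex of $G_\mu$ --- you never cite this, though it is a one-line addition.
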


The R-Tiling Lemma should describe all possible tilings of the bounding box of an $R$-node $\mu$ that appear in all representations of $G$.
Since there is a one-to-one correspondence between the edges of $skel(\mu)$ and the children of $\mu$, we abuse notation and write $B(u,v)$ to denote the bounding box of the child of $\mu$ that corresponds to the edge $(u,v)$. 
By the Basic Tilling Lemma, $B(u,v)$ is spanned between the bars representing $u$ and $v$.

Suppose that $\psi$ is a representation of $G$.
The tiling $\tau = (B_{\psi}(\mu_1), \ldots, B_{\psi}(\mu_k))$ of $B_{\psi}(\mu)$ determines a triple $(\mathcal{E}, \xi, \chi)$, where:
 $\mathcal{E}$ is an $s_{\mu}t_{\mu}$-embedding of $skel(\mu)$,
 $\xi$ is an $st$-valuation of $\mathcal{E}$,
 and $\chi$ is an $st$-valuation of $\mathcal{E}^*$,
that are defined as follows.
Consider the following planar drawing of the $st$-graph $skel(\mu)$.
Draw every vertex $u$ in the middle of $\psi(u)$,
and every edge $e=(u,v)$ as a curve that starts in the middle of $\psi(u)$, goes a little above $\psi(u)$ towards the rectangle $B_\psi(u,v)$,
goes inside $B_\psi(u,v)$ towards $\psi(v)$, and a little below $\psi(v)$ to the middle of $\psi(v)$.
This way we obtain a plane $st$-graph $\mathcal{E}$, which is an $st$-embedding of $skel(\mu)$.
The $st$-valuation $\xi$ of $\mathcal{E}$ is just the restriction of $y _\psi$ to the vertices from $skel(\mu)$, \ie{}, $\xi = y_{\psi}|V(skel(\mu))$.
To define the $st$-valuation $\chi$ of $\mathcal{E}^*$ we use the following lemma.

\begin{lemma}[Face Condition]\label{lem:face-condition}\
\begin{enumerate}
\item
\label{lem:fc_right_face}
Let $f$ be a face in $V(\mathcal{E}^*)$ different than $t^*$, and let $v_0,v_1,\ldots,v_p$ be the right path of $f$.
There is a vertical line $L_r(f)$ that contains the left endpoints of $\psi(v_1), \ldots, \psi(v_{p-1})$ and the left sides of $B_\psi(v_0,v_1), \ldots, B_\psi(v_{p-1},v_p)$.
\item
\label{lem:fc_left_face}
Let $f$ be a face in $V(\mathcal{E}^*)$ different than $s^*$, and let $u_0,u_1,\ldots,u_m$ be the left path of $f$.
There is a vertical line $L_l(f)$ that contains the right endpoints of $\psi(u_1), \ldots, \psi(u_{q-1})$ and the right sides of $B_\psi(u_0,u_1), \ldots, B_\psi(u_{q-1},u_q)$.
\item
\label{lem:fc_face}
If $f$ is an inner face of $\mathcal{E}$ then $L_l(f)=L_r(f)$.
\end{enumerate}
\end{lemma}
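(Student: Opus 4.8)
The plan is to reduce all three parts to a single local observation about how the child boxes of the $R$-node meet one bar, and then to propagate that observation along the boundary of a face. Throughout I fix the $R$-node $\mu$, the induced $st$-embedding $\mathcal{E}$ of $skel(\mu)$, and the rectangular representation $\psi$. First I would record that, for an $R$-node, every child box $B(u,v)$ is a single rectangle spanned between $\psi(u)$ and $\psi(v)$: this is exactly what was already derived from the Q-Tiling Lemma (Lemma~\ref{lem:tiling-lemma-Q-node}) and the Basic Tiling Lemma (Lemma~\ref{lem:basic-tiling-lemma}) in the paragraph preceding the statement. The key local fact I would then prove is: the boxes of the edges entering a vertex $v$ tile the part of $B(\mu)$ immediately below $\psi(v)$, arranged from left to right in the order in which these edges occur around $v$; in particular the leftmost entering box has its left side at $l(\psi(v))$ and the rightmost at $r(\psi(v))$, and symmetrically for the edges leaving $v$ and the strip just above $\psi(v)$. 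This holds because the child boxes tile $B(\mu)$ (Lemma~\ref{lem:basic-tiling-lemma}) and each is spanned between its poles, so any box touching the underside of $\psi(v)$ must have top pole $v$; consecutiveness of the entering edges (Property~\eqref{prop:st_bipolar}) then fixes their left-to-right order.

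For Part~\ref{lem:fc_right_face} I would argue along the right path $v_0,\dots,v_p$ of $f$. At each inner vertex $v_i$ the face $f$ is the face lying immediately to the left of $\psi(v_i)$, so by the definition of the right path together with Property~\eqref{prop:st_bipolar} the edge $(v_{i-1},v_i)$ is the leftmost edge entering $v_i$ and $(v_i,v_{i+1})$ is the leftmost edge leaving $v_i$. The local fact then yields $l(B(v_{i-1},v_i)) = l(\psi(v_i)) = l(B(v_i,v_{i+1}))$. Chaining these equalities over $i=1,\dots,p-1$ shows that the left sides of $B(v_0,v_1),\dots,B(v_{p-1},v_p)$ and the left endpoints of $\psi(v_1),\dots,\psi(v_{p-1})$ share one $x$-coordinate, which defines $L_r(f)$; for $f=s^*$ the right path is the left boundary of $skel(\mu)$ and this coordinate is $l(B(\mu))$. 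Part~\ref{lem:fc_left_face} is the mirror image: along the left path one uses the rightmost entering/leaving edges and the right endpoints to produce $L_l(f)$ (with $L_l(t^*)=r(B(\mu))$).

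For Part~\ref{lem:fc_face} let $a=u_0=v_0$ and $z=u_m=v_p$ be the common source and sink of the inner face $f$. Since in $\mathcal{E}$ the left path of $f$ is drawn to the left of its right path, I would first note $L_l(f)\le L_r(f)$ and then exclude strict inequality by a no-gap argument. Assuming $L_l(f)<L_r(f)$, pick a point $P$ whose $x$-coordinate lies strictly between $L_l(f)$ and $L_r(f)$ and whose $y$-coordinate is slightly above $y(a)$. Then $P$ lies in the interior of $B(\mu)$ and, being angularly between the two boundary edges $(a,u_1)$ and $(a,v_1)$ leaving $a$, inside the face $f$; yet $P$ is covered by none of the boundary boxes, since at this height only $B(u_0,u_1)$ and $B(v_0,v_1)$ are present and they lie to the left of $L_l(f)$ and to the right of $L_r(f)$ respectively. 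By the Basic Tiling Lemma (Lemma~\ref{lem:basic-tiling-lemma}) $P$ must lie in some child box $B(x,w)$, and then the edge $(x,w)$ of $skel(\mu)$ is drawn through the interior of $f$, contradicting that $f$ is a face of $\mathcal{E}$. Hence $L_l(f)=L_r(f)$.

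I expect the main obstacle to be Part~\ref{lem:fc_face}: making the no-gap argument fully rigorous requires pinning down that the chosen point $P$ genuinely lies in the interior of the combinatorial face $f$ (not merely between the two vertical lines) and that no boundary box can reach it, both of which rest on the precise correspondence between the left-to-right order of the child boxes around a bar and the rotation system of $\mathcal{E}$. Parts~\ref{lem:fc_right_face} and~\ref{lem:fc_left_face}, by contrast, are a clean propagation of the single local fact along a directed path.
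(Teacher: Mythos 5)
Your proof is correct and follows essentially the same route as the paper's: both rest on the tiling of $B(\mu)$ by child boxes (via the Q-Tiling and Basic Tiling Lemmas) together with the correspondence between the left-to-right order of boxes meeting a bar and the rotation of edges in $\mathcal{E}$, ruling out a box --- equivalently, an edge of $skel(\mu)$ --- lying strictly inside the angular sector of $f$. Your ``local fact'' is a positive restatement of the paper's corner-based contradiction for Part~\ref{lem:fc_right_face}, and your no-gap argument at the common source of the face fills in the detail that the paper compresses into ``an immediate consequence of the Basic Tiling Lemma'' for Part~\ref{lem:fc_face}.
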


The above lemma allows us to introduce the notion of a \emph{splitting line} for every face $f$ in $V(\mathcal{E}^*)$;
namely, it is: the line $L_l(f)=L_r(f)$ if $f$ is an inner face of $\mathcal{E}$, $L_r(f)$ if $f$ is the left outer face of $\mathcal{E}$, and $L_l(f)$ if $f$ is the right outer face of $\mathcal{E}$.
Now, let $\chi(f)$ be the $x$-coordinate of the splitting line for a face $f$ in $V(\mathcal{E}^*)$.
To show that $\chi(f)$ is an $st$-valuation of $\mathcal{E}^*$, note that for any edge $(f,g)$ of $\mathcal{E}^*$ there is an edge $(u,v)$ of $\mathcal{E}$ that has $f$ on the left side and $g$ on the right side.
It follows that $\chi(f) = l(B_\psi(u,v)) < r(B_\psi(u,v)) = \chi(g)$, proving the claim.

The representation $\psi$ of $G$ determines the triple $(\mathcal{E}, \xi, \chi)$.
Note that any other representation with the same tiling $\tau = (B_{\psi}(\mu_1), \ldots ,B_\psi(\mu_k))$ of $B(\mu)$ gives the same triple.
To emphasize that the triple $(\mathcal{E}, \xi, \chi)$ is determined by tiling $\tau$, we write $(\mathcal{E}_\tau, \xi_\tau, \chi_\tau)$.

Now, assume that $\mathcal{E}$ is an $st$-embedding of $skel(\mu)$, $\xi$ is an $st$-valuation of $\mathcal{E}$, and $\chi$ is an $st$-valuation of the dual of $\mathcal{E}$.
Consider the function $\phi$ that assigns to every vertex $v$ of $skel(\mu)$ the bar $\phi(u)$ defined as follows: $y_\phi(v) = \xi(v)$, $l_\phi(v) = \chi(\text{left face of } v)$, $r_\phi(v) = \chi(\text{right face of } v)$.
Firstly, Tamassia and Tollis~\cite{TamassiaT86} showed that $\phi$ is a bar visibility representation of $skel(\mu)$ and that for $\tau = (B_\phi(\mu_1), \ldots, B_\phi(\mu_k))$, we have $(\mathcal{E}_\tau, \xi_\tau, \chi_\tau) = (\mathcal{E}, \xi, \chi)$.
Secondly, there is a representation $\psi$ of $G$ that agrees with $\tau$ on $skel(\mu)$, \ie{}, such that $\tau = (B_\psi(\mu_1), \ldots, B_\psi(\mu_k))$.
To construct such a representation, we take any representation $\psi$ of $G$, translate and scale all bars in $\psi$ to get $B_\psi(\mu) = B_\phi(\mu)$, and represent the pertinent digraphs $G_{\mu_1}, \ldots, G_{\mu_k}$ so that the bounding box of $\mu_i$ coincides with $B_\phi(\mu_i)$ for $i=1,\ldots,k$.
This leads to the next lemma.
\begin{lemma}[R-Tiling Lemma]
\label{lem:tiling-lemma-R-node}
Let $\mu$ be an $R$-node.
There is a bijection between the set $\{(B_\psi(\mu_1), \ldots, B_\psi(\mu_k)):\psi$ is a rectangular bar visibility representation of $G\}$ of all possible tilings of the bounding box of $\mu$ by the bounding boxes of $\mu_1,\ldots,\mu_k$ in all representations of $G$, and the set $\{(\mathcal{E}, \xi, \chi):\mathcal{E}$ is an $st$-embedding of $skel(\mu)$, $\xi$ is an $st$-valuation of $\mathcal{E}$, $\chi$ is an $st$-valuation of the dual of $\mathcal{E}\}$.
\end{lemma}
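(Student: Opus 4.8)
The plan is to establish the claimed bijection by exhibiting maps in both directions and showing that they are mutually inverse. Essentially all of the work has already been laid out in the paragraphs preceding the statement, so the proof will mostly consist of assembling these pieces and verifying the composition identities. I would begin by fixing the forward map: given a rectangular bar visibility representation $\psi$ of $G$, restrict attention to the tiling $\tau = (B_\psi(\mu_1),\ldots,B_\psi(\mu_k))$ of the bounding box $B(\mu)$. The preceding discussion already constructs from $\tau$ a triple $(\mathcal{E}_\tau,\xi_\tau,\chi_\tau)$: the embedding $\mathcal{E}_\tau$ comes from drawing $skel(\mu)$ with each edge $(u,v)$ routed through $B_\psi(u,v)$, the valuation $\xi_\tau = y_\psi|V(skel(\mu))$ is inherited from the $y$-coordinates, and $\chi_\tau$ assigns to each dual vertex the $x$-coordinate of its splitting line, which is well-defined and an $st$-valuation of $\mathcal{E}_\tau^*$ by the Face Condition (Lemma~\ref{lem:face-condition}) and the short verification immediately following it. I would note, as the text does, that this triple depends only on $\tau$ and not on the particular $\psi$ realizing it, so the forward map is well-defined on tilings.

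Next I would define the reverse map. Starting from an arbitrary triple $(\mathcal{E},\xi,\chi)$ with $\mathcal{E}$ an $st$-embedding of $skel(\mu)$, $\xi$ an $st$-valuation of $\mathcal{E}$, and $\chi$ an $st$-valuation of the dual $\mathcal{E}^*$, I would invoke the Tamassia--Tollis construction cited in the excerpt: the function $\phi$ with $y_\phi(v)=\xi(v)$, $l_\phi(v)=\chi(\text{left face of }v)$, and $r_\phi(v)=\chi(\text{right face of }v)$ is a bar visibility representation of $skel(\mu)$, and its induced tiling $\tau=(B_\phi(\mu_1),\ldots,B_\phi(\mu_k))$ satisfies $(\mathcal{E}_\tau,\xi_\tau,\chi_\tau)=(\mathcal{E},\xi,\chi)$. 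This immediately gives that the forward map composed after the reverse map is the identity on triples. To land back in the set of tilings coming from genuine representations of the \emph{whole} graph $G$ (not merely of $skel(\mu)$), I would use the second ingredient already supplied: one takes any representation of $G$, rescales and translates so that $B_\psi(\mu)=B_\phi(\mu)$, and then recursively realizes each pertinent digraph $G_{\mu_i}$ inside $B_\phi(\mu_i)$, producing a representation $\psi$ of $G$ whose tiling of $B(\mu)$ is exactly $\tau$. This shows the reverse map indeed outputs a tiling arising from some rectangular representation of $G$.

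It then remains to check the other composition, namely that starting from a tiling $\tau$ induced by a representation $\psi$ of $G$, passing to $(\mathcal{E}_\tau,\xi_\tau,\chi_\tau)$ and applying $\phi$ returns the same tiling $\tau$. The key point here is that a tiling of $B(\mu)$ by axis-aligned rectangles spanned between the pole bars is completely determined by the combinatorial data of which rectangle sits where (the embedding $\mathcal{E}$), together with the horizontal cut lines (the $y$-coordinates, encoded by $\xi$) and the vertical cut lines (the $x$-coordinates, encoded by $\chi$). Since the Basic Tiling Lemma (Lemma~\ref{lem:basic-tiling-lemma}) guarantees each $B(\mu_i)$ is a single rectangle spanned between the bars of its poles, each rectangle $B(\mu_i)=B(u,v)$ is recovered as $Y=[\xi(u),\xi(v)]$ and $X=[\chi(\text{left face of }(u,v)),\chi(\text{right face of }(u,v))]$, which is precisely $B_\phi(u,v)$. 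Hence $\tau$ is reconstructed verbatim.

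I expect the main obstacle to be purely expository rather than mathematical: the genuinely substantive content — that $\phi$ is a valid representation and that the splitting lines yield a dual $st$-valuation — is borrowed from Tamassia and Tollis and from the Face Condition, so the risk is in stating the bijection cleanly without conflating ``tilings arising from representations of $G$'' with ``tilings of $skel(\mu)$ alone.'' The one place demanding genuine care is verifying that \emph{every} triple $(\mathcal{E},\xi,\chi)$ is actually realizable by some representation of the whole graph $G$, not just by a representation of the skeleton; this is where the recursive fill-in of the pertinent digraphs $G_{\mu_i}$ is essential, and I would make sure to argue that such a recursive realization exists for an $R$-node precisely because the skeleton embedding $\mathcal{E}$ is one of the (two) admissible $st$-embeddings and the bounding-box rectangles leave exactly the right room for each $G_{\mu_i}$.
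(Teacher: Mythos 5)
Your proposal is correct and follows essentially the same route as the paper: the paper's argument is precisely the discussion preceding the lemma, namely the well-defined map from a tiling $\tau$ to $(\mathcal{E}_\tau,\xi_\tau,\chi_\tau)$ via the Face Condition, the reverse construction of $\phi$ from a triple using the Tamassia--Tollis result, and the realization of the resulting tiling by a representation of all of $G$ through rescaling and recursive placement of the pertinent digraphs. Your only addition is the explicit check that the reverse map followed by the forward map returns the original tiling (reconstructing each $B(u,v)$ from $\xi$ at its poles and $\chi$ at its left and right faces), which the paper leaves implicit but which is a faithful completion rather than a different approach.
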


\subsection{Algorithm}
\label{sec:rectangular_algorithm_slow}

Let $G$ be an $n$-vertex planar $st$-graph and let $\psi'$ be a partial representation of $G$ with the set $V'$ of fixed vertices.
We present a quadratic time algorithm that tests if there exists a rectangular bar visibility representation $\psi$ of $G$ that extends $\psi'$.
If such a representation exists, the algorithm can construct it in the same time.

In the first step, our algorithm calculates $y_\psi$.
Namely, the algorithm checks whether $y_{\psi'}:V' \to \mathbb{R}$ is extendable to an $st$-valuation of $G$.
When such an extension does not exist, the algorithm rejects the instance $(G, \psi')$;
otherwise any extension of $y_{\psi'}$ can be used as $y_\psi$.
The next lemma verifies this step's correctness.
\begin{lemma}
\label{lem:y-st-valuation}
Let $\psi$ be a rectangular bar visibility representation of $G$ that extends~$\psi'$.
\begin{enumerate}
  \item \label{claim:st-valuation} The function $y_{\psi}$ is an $st$-valuation of $G$ that extends $y_{\psi'}$,
  \item \label{claim:st-valuation-any} If $y$ is an $st$-valuation of $G$ that extends $y_{\psi'}$, then a function $\phi$ that sends every vertex $v$ of $G$ into a bar so that $y_\phi(v) = y(v)$, $l_\phi(v) = l_{\psi}(v)$, $r_\phi(v) = r_{\psi}(v)$ is also a rectangular bar visibility representation of $G$ that extends $\psi'$.
\end{enumerate}
\end{lemma}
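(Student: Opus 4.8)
The plan is to settle claim~\eqref{claim:st-valuation} directly from the definitions and to reduce claim~\eqref{claim:st-valuation-any} to the Tamassia--Tollis realizability theorem by isolating the part of $\psi$ that controls the horizontal coordinates. For claim~\eqref{claim:st-valuation}, the definition of a bar visibility representation of a digraph forces, for every directed edge $(u,v)$ of $G$, the bar $\psi(u)$ to lie strictly below $\psi(v)$, that is $y_\psi(u) < y_\psi(v)$. This is exactly the defining inequality of an $st$-valuation, so $y_\psi$ is an $st$-valuation of $G$. Since $\psi|V' = \psi'$, for every $v \in V'$ we have $y_\psi(v) = y(\psi(v)) = y(\psi'(v)) = y_{\psi'}(v)$, so $y_\psi$ extends $y_{\psi'}$.

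For claim~\eqref{claim:st-valuation-any}, the key observation is that the horizontal data of any rectangular bar visibility representation is completely encoded by a single $st$-valuation of the dual. I would first read off from $\psi$ the $st$-embedding $\mathcal{E}$ of $G$ that $\psi$ realizes, together with the dual $st$-valuation $\chi$ of $\mathcal{E}^*$ determined by the splitting lines, so that $l_\psi(v) = \chi(\text{left face of } v)$ and $r_\psi(v) = \chi(\text{right face of } v)$ for every vertex $v$; this is precisely the correspondence behind the triple $(\mathcal{E}, \xi, \chi)$ used before the R-Tiling Lemma, now applied to the whole graph rather than to a single skeleton. The function $\phi$ in the statement is then nothing but the Tamassia--Tollis construction applied to $(\mathcal{E}, y, \chi)$: it keeps the embedding $\mathcal{E}$ and the dual valuation $\chi$ of $\psi$ and only replaces the primal valuation $y_\psi$ by $y$.

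Because $y$ is an $st$-valuation of $G$ (a condition on the edge set alone, hence compatible with the same embedding $\mathcal{E}$) and $\chi$ is an $st$-valuation of $\mathcal{E}^*$, the theorem of Tamassia and Tollis~\cite{TamassiaT86} quoted above guarantees that $\phi$ is a valid bar visibility representation of $G$. It remains to verify that $\phi$ is rectangular and extends $\psi'$. It is rectangular because $s$, being the unique source, satisfies $y(s) < y(v)$ for every other vertex $v$ (there is a directed path from $s$ to $v$), and symmetrically $t$ is the unique maximum; moreover $l_\phi(s) = l_\psi(s)$, $r_\phi(s) = r_\psi(s)$ and likewise for $t$, so $\psi(s)$ and $\psi(t)$ still span the full $x$-interval and all other bars remain inside the resulting rectangle. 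Finally, for $v \in V'$ we have $y_\phi(v) = y(v) = y_{\psi'}(v)$ since $y$ extends $y_{\psi'}$, while $l_\phi(v) = l_\psi(v) = l_{\psi'}(v)$ and $r_\phi(v) = r_\psi(v) = r_{\psi'}(v)$ since $\psi$ extends $\psi'$; hence $\phi|V' = \psi'$.

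The main obstacle is the passage, in claim~\eqref{claim:st-valuation-any}, from ``$\psi$ is some rectangular representation'' to ``$\psi$ is the Tamassia--Tollis realization of a pair $(\xi,\chi)$'': one must check that the left and right ends of the bars of an arbitrary rectangular representation genuinely align into vertical splitting lines indexed by the faces, so that a well-defined dual valuation $\chi$ exists and can be reused verbatim with the new $y$. Once this encoding --- the whole-graph analogue of the Face Condition --- is in place, the substitution of $y$ for $y_\psi$ is immediate and everything else is bookkeeping.
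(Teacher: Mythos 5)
Your proof of claim~\eqref{claim:st-valuation} is fine and matches the paper. The problem is claim~\eqref{claim:st-valuation-any}: the encoding you rely on --- that an arbitrary rectangular bar visibility representation $\psi$ of the whole graph arises from a triple $(\mathcal{E},\xi,\chi)$, i.e.\ that there exists a dual $st$-valuation $\chi$ with $l_\psi(v)=\chi(\text{left face of }v)$ and $r_\psi(v)=\chi(\text{right face of }v)$ for every vertex $v$ --- is false in general. You correctly identify this as ``the main obstacle'' but then assert it holds; it does not. Take $G$ with $V(G)=\{s,a,t\}$ and $E(G)=\{(s,a),(a,t),(s,t)\}$, and let $\psi(s)$ and $\psi(t)$ span $[0,10]$ at heights $0$ and $2$, while $\psi(a)$ spans $[3,7]$ at height $1$. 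This is a valid rectangular bar visibility representation of $G$: the visibility region of the edge $(s,t)$ is the union of the two rectangles $(0,3)\times(0,2)$ and $(7,10)\times(0,2)$, and $a$ sees both $s$ and $t$ through the middle strip. But in any $st$-embedding of $G$ one of the outer faces is the left (or right) face of $a$, so a Tamassia--Tollis realization would force $l_\psi(a)=l_\psi(s)=0$ (or $r_\psi(a)=r_\psi(t)=10$), contradicting $l_\psi(a)=3$, $r_\psi(a)=7$. In short, bars' endpoints need not align into face-indexed splitting lines: parallel components leave horizontal slack. This is exactly why the paper states and proves the Face Condition (Lemma~\ref{lem:face-condition}) only inside the skeleton of an $R$-node, where the Basic Tiling Lemma and the absence of parallel edges force the alignment, and why the Q-Tiling Lemma explicitly allows the bounding box of a $Q$-node to be a disjoint union of several rectangles when its parent is a $P$-node. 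So the Tamassia--Tollis theorem cannot be applied to $\phi$ as you propose, and repairing your route essentially amounts to redoing the SPQR analysis --- circular overkill for this lemma.

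The paper's own argument bypasses embeddings entirely and is much more elementary: since $X(\phi(v))=X(\psi(v))$ for every $v$, one only needs that the relative vertical order of horizontally overlapping bars is the same under $y_\psi$ and under $y$. If the interior of $X(\psi(u))\cap X(\psi(v))$ is non-empty, then (in a bar visibility representation) there is a directed path between $u$ and $v$ in $G$, so \emph{every} $st$-valuation, in particular $y$, orders them the same way as $y_\psi$. Consequently $(x_1,x_2)\times(y_\psi(u),y_\psi(v))$ is a visibility gap for $\psi$ if and only if $(x_1,x_2)\times(y(u),y(v))$ is one for $\phi$, so $\phi$ represents the same digraph, is still rectangular, and extends $\psi'$. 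You should replace the reduction to the Tamassia--Tollis realizability theorem by an argument of this kind.
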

Clearly, checking whether $y_{\psi'}$ is extendable to an $st$-valuation of $G$, and constructing such an extension can be done in $\Oh{n}$ time.
In the second step, the algorithm computes the $SPQR$-tree $T$ for $G$, which also takes linear time.

Before we describe the last step in our algorithm, we need some preparation.
For an inner node $\mu$ in $T$ we define the sets $V'(\mu)$ and $C(\mu)$ as follows:
$$\begin{array}{rcl}
  V'(\mu) & = & \text{the set of fixed vertices in $V(G_{\mu}) \setminus \{s_\mu, t_\mu\}$,}\\
  C(\mu) & = & \left\{
\begin{array}{l}
\text{$\emptyset$, if $V'(\mu) = \emptyset$;} \\
\text{the smallest closed rectangle containing $\psi'(u)$ for all $ u \in V'(\mu)$, otherwise.}
\end{array}
\right.\\
\end{array}
$$
The set $C(\mu)$ is called the \emph{core} of $\mu$.
For a node $\mu$ whose core is empty, our algorithm can represent $G_\mu$ in any rectangle spanned between the poles of $G_\mu$.
Thus, we focus our attention on nodes whose core is non-empty.

Assume that $\mu$ is a node whose core is non-empty.
We describe the `possible shapes' the bounding box of $\mu$ might have in a representation of $G$ that extends $\psi'$.
The bounding box of $\mu$ is a rectangle that is spanned between the bars corresponding to the poles of $G_\mu$.
By the Basic Tiling Lemma, if $C(\mu)$ is non-empty then $B(\mu)$ contains $C(\mu)$.
For our algorithm it is important to distinguish whether the left (right) side of $B(\mu)$ contains the left (right) side of $C(\mu)$.
This criterion leads to four types of representations of $\mu$ with respect to the core of $\mu$.

The main idea of the algorithm is to decide for each inner node $\mu$ whose core is non-empty, which of the four types of representation of $\mu$ are possible and which are not.
The algorithm traverses the tree bottom-up and for each node and each type of representation it tries to construct the appropriate tiling using the information about possible representations of its children.
The types chosen for different children need to fit together to obtain a tiling of the parent node.
In what follows, we present our approach in more detail.

Let $\mu$ be an inner node in $T$.
Fix $\phi' = \psi'|V'(\mu)$.
Function $\phi'$ gives a partial representation of the pertinent digraph $G_\mu$ obtained by restricting $\psi'$ to the inner vertices of $G_\mu$.
Let $x,x'$ be two real values.
A rectangular bar visibility representation $\phi$ of $G_\mu$ is called an
\emph{$[x,x']$-representation of $\mu$} if $\phi$ extends $\phi'$ and $X(\phi(s_\mu)) = X(\phi(t_\mu)) = [x,x']$.
We say that an $[x,x']$-representation of $\mu$ is:
\begin{itemize}
\item \emph{left-loose, right-loose} (\emph{LL}), when $x < l(C(\mu))$ and $x' > r(C(\mu))$,
\item \emph{left-loose, right-fixed} (\emph{LF}), when $x < l(C(\mu))$ and $x' = r(C(\mu))$,
\item \emph{left-fixed, right-loose} (\emph{FL}), when $x = l(C(\mu))$ and $x' > r(C(\mu))$,
\item \emph{left-fixed, right-fixed} (\emph{FF}), when $x = l(C(\mu))$ and $x' = r(C(\mu))$.
\end{itemize}

The next lemma justifies this categorization of representations.
It says that if a representation of a given type exists, then every representation of the same type is also realisable.
\begin{lemma}[Stretching Lemma]\label{lem:stretching} Let $\mu$ be an inner node whose core is non-empty.
 If $\mu$ has an LL-representation, then $\mu$ has an $[x,x']$-representation for any $x < l(C(\mu))$ and any $x' > r(C(\mu))$.
 If $\mu$ has an LF-representation, then $\mu$ has an $[x,x']$-representation for any $x < l(C(\mu))$ and $x' = r(C(\mu))$.
 If $\mu$ has an FL-representation, then $\mu$ has an $[x,x']$-representation for $x = l(C(\mu))$ and any $x' > r(C(\mu))$.
\end{lemma}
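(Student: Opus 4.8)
The plan is to prove all three implications by a single device: apply a horizontal stretching to a given representation that leaves the core region pointwise fixed while moving the pole endpoints to the prescribed positions. The key is that horizontal monotone re-parametrisations preserve rectangular bar visibility representations.

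First I would record the invariance principle. Let $\phi_0$ be any rectangular bar visibility representation of $G_\mu$ and let $h:\mathbb{R}\to\mathbb{R}$ be any continuous, strictly increasing bijection. Define $\phi$ by $y_\phi=y_{\phi_0}$, $l_\phi(v)=h(l_{\phi_0}(v))$, and $r_\phi(v)=h(r_{\phi_0}(v))$ for every vertex $v$ of $G_\mu$. I claim $\phi$ is again a rectangular bar visibility representation. Indeed, $h$ leaves every $y$-coordinate unchanged and, being an order isomorphism of the $x$-axis, preserves the left-to-right order of all bar endpoints; hence it preserves disjointness of the bars. A visibility gap between bars $a$ and $b$ with $y(a)<y(b)$ is witnessed by an open $x$-interval $(p,q)$ over which both $a$ and $b$ extend and no third bar lies strictly between heights $y(a)$ and $y(b)$; since $h$ is an order isomorphism fixing all $y$-values, $(p,q)$ is a valid witness for $\phi_0$ if and only if $(h(p),h(q))$ is one for $\phi$, so the adjacency relation is preserved in both directions. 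The poles remain the unique lowest and highest bars, they still span a common interval $[h(x_0),h(x_0')]$, and every other bar, being strictly inside $(x_0,x_0')$ in $\phi_0$, stays strictly inside $(h(x_0),h(x_0'))$; thus $\phi$ is rectangular.

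The second step is to choose $h$ so that $\phi$ still extends $\phi'$ and realises the target width. Since $C(\mu)$ is the smallest rectangle containing $\psi'(u)$ for all $u\in V'(\mu)$, every fixed bar satisfies $X(\psi'(u))\subseteq[l(C(\mu)),r(C(\mu))]$. Hence, if $h$ is the identity on $[l(C(\mu)),r(C(\mu))]$, then $\phi(u)=\phi_0(u)=\psi'(u)$ for every fixed vertex, so $\phi$ extends $\phi'$. For the LL case, let $\phi_0$ be the given LL-representation with pole-width $[x_0,x_0']$, so $x_0<l(C(\mu))$ and $x_0'>r(C(\mu))$, and let $x<l(C(\mu))$, $x'>r(C(\mu))$ be arbitrary. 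I would take the piecewise-linear $h$ that is the identity on $[l(C(\mu)),r(C(\mu))]$ and on the two outer rays is the unique affine map
\[
h(t)=l(C(\mu))+\frac{x-l(C(\mu))}{x_0-l(C(\mu))}\,\bigl(t-l(C(\mu))\bigr)\quad\text{for }t\le l(C(\mu)),
\]
\[
h(t)=r(C(\mu))+\frac{x'-r(C(\mu))}{x_0'-r(C(\mu))}\,\bigl(t-r(C(\mu))\bigr)\quad\text{for }t\ge r(C(\mu)).
\]
Both slopes are ratios of two negatives (respectively two positives), hence positive, and the pieces agree at the junctions $l(C(\mu))$ and $r(C(\mu))$, so $h$ is a continuous strictly increasing bijection with $h(x_0)=x$ and $h(x_0')=x'$. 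Applying the invariance principle yields the desired $[x,x']$-representation. The LF and FL cases are the same construction with only one side stretched: in an LF-representation the right pole endpoint already equals $r(C(\mu))=x'$, so I take $h$ to be the identity on $[l(C(\mu)),\infty)$ and use only the left affine piece; the FL case is symmetric.

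I expect the main obstacle to be the careful bookkeeping in the invariance principle rather than the construction of $h$: one must check that a horizontal order-isomorphism preserves visibility gaps in \emph{both} directions (no spurious gaps are created and none are destroyed) and that non-fixed bars straddling the lines $x=l(C(\mu))$ or $x=r(C(\mu))$ are still mapped to well-formed open bars whose endpoints keep their relative order. Once this is established, the three implications follow immediately, and the absence of an FF-statement is explained by the fact that an FF-representation has both pole endpoints pinned to the core boundary, leaving no freedom to stretch.
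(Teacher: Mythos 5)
Your proposal is correct and takes essentially the same approach as the paper: the paper's (much terser) proof simply says one can obtain the target representation ``by appropriately stretching the part of the drawing of $\phi$ that is to the left of $l(C(\mu))$'' (and symmetrically to the right of $r(C(\mu))$), which is precisely the piecewise-linear, core-fixing horizontal homeomorphism $h$ you construct, with your invariance principle supplying the justification the paper leaves implicit.
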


The main task of the algorithm is to verify which representations are feasible for nodes that have non-empty cores.
We assume that:
 $\mu$ is an inner node whose core is non-empty;
 $\mu_1,\ldots,\mu_k$ are the children of $\mu$, $k \geq 2$;
 $\lambda_1,\ldots,\lambda_{k'}$ are the children of $\mu$ with $C(\lambda_i) \neq \emptyset$, $0 \leq k' \leq k$;
 $\theta(\lambda_i)$ is the set of feasible types of representations for $\lambda_i$, $\theta(\lambda_i) \subseteq \{LL,LF,FL,FF\}$.
We process the tree bottom-up and assume that $\theta(\lambda_i)$ is already computed and non-empty.

Let $x$ and $x'$ be two real numbers such that $x \leq l(C(\mu))$ and $x' \geq r(C(\mu))$.
We provide an algorithm that tests whether an $[x,x']$-representation of $\mu$ exists.
We use it to find feasible types for $\mu$ by calling it $4$ times with appropriate values of $x$ and $x'$.
While searching for an $[x,x']$-representation of $\mu$ our algorithm tries to tile the rectangle $[x,x'] \times [y(s_\mu),y(t_\mu)]$ with $B(\mu_1), \ldots, B(\mu_k)$.
The tiling procedure is determined by the type of $\mu$.
Note that as the core of a $Q$-node is empty, the algorithm splits into three cases: $\mu$ is an $S$-node, a $P$-node, and an $R$-node.
The pseudocode for the algorithms is given in%
{
\iflncs
\arxivappendix{C.3}{subsection.A.3.3}.%
\else
Appendix~\ref{app:pseudocode}.%
\fi
}

\subcase{Case S. $\mu$ is an $S$-node.}
In this case we
attempt to align the left and right side of the bounding box of each child $\lambda$ of $\mu$ to $x$ and $x'$ respectively.
For example, if the core of $\lambda$ is strictly contained in $[x,x']$, then $\lambda$ must have an LL-representation. The other cases follow similarly.
We also must set the $x$-intervals of the bars of the cut vertices of $G_\mu$ to $[x,x']$.
The S-Tiling Lemma and the Stretching Lemma imply the correctness of this approach.

\subcase{Case P. $\mu$ is a $P$-node.}
In this case we attempt to tile the rectangle $[x,x'] \times [y(s_\mu),y(t_\mu)]$ by placing the bounding boxes of the children of $\mu$ side by side from left to right.
The order of children whose cores are non-empty is determined by the position of those cores.
We sort $\lambda_1,\ldots,\lambda_{k'}$ by the left ends of their cores.
Let $l_i = l(C(\lambda_i))$ and $r_i = r(C(\lambda_i))$, $r_0=x$, $l_{k'+1}=x'$, and without loss of generality $l_1 < \ldots < l_{k'}$.

We need to find enough space to place the bounding boxes of children whose cores are empty.
Additionally, if $(s_\mu,t_\mu)$ is an edge of $G$, then we need to leave at least one visibility gap in the tiling for that edge.
Otherwise, if $(s_\mu,t_\mu)$ is not an edge of $G$, we need to close all the gaps in the tiling.
A more detailed description of the algorithm follows.

If there are $\lambda_i,\lambda_{i+1}$ such that the interior of the set $X(C(\lambda_i)) \cap X(C(\lambda_{i+1}))$ is non-empty, then we prove that there is no $[x,x']$-representation of $G_\mu$.
Indeed, by the P-Tiling Lemma and by $C(\lambda_i) \subseteq B(\lambda_i)$, the interior of $B(\lambda_i) \cap B(\lambda_{i+1})$ is non-empty and hence tiling of $B(\mu)$ with $B(\mu_1),\ldots,B(\mu_k)$ is not possible.
Additionally, if $r(C(\lambda_i)) = l(C(\lambda_{i+1}))$, then neither a right-loose representation of $\lambda_{i}$ nor a left-loose representation of $\lambda_{i+1}$ can be used, so we delete such types of representations from $\theta(\lambda_i)$ and $\theta(\lambda_{i+1})$.
If that leaves some $\theta(\lambda_i)$ empty, then an $[x,x']$-representation of $\mu$ does not exist.
These checks take $\Oh{k'}$ time.

Let $Q_i = [r_{i}, l_{i+1}] \times [y(s_\mu), y(t_\mu)]$ for $i \in [0,k']$.
We say that $Q_{i}$ is an \emph{open gap} (after $\lambda_{i}$, before $\lambda_{i+1}$) if $Q_i$ has non-empty interior.
In particular, if $x = r_0 < l_1$ ($r_{k'} < l_{k'+1} = x'$) then there is an open gap before $\lambda_1$ (after $\lambda_{k'}$).
On the one hand, if there is an edge $(s_\mu,t_\mu)$ or there is at least one $\mu_i$ whose core is empty then we need at least one open gap to construct an $[x,x']$-representation.
On the other hand, if $(s_\mu,t_\mu)$ is not an edge of $G$ then we need to close all the gaps in the tiling.
There are two ways to close the gaps.
Firstly, the representation of each child node whose core is empty can be placed so that it closes a gap.
The second way is to use loose representations for children nodes $\lambda_1,\ldots,\lambda_{k'}$.

Suppose that $c$ is a function that assigns to every $\lambda_i$ a feasible type of representation from the set $\theta(\lambda_i)$.
Whenever $c(\lambda_{i})$ is right-loose or $c(\lambda_{i+1})$ is left-loose, we can stretch the representation of $\lambda_i$ or $\lambda_{i+1}$, so that it closes the gap $Q_i$.
We describe a simple greedy approach to close the maximum number of gaps in this way.
We processes the $\lambda_i$'s from left to right and for each one:
we close both adjacent gaps if we can (\ie{} $LL \in \theta(\lambda_i)$); otherwise, we prefer to close the left gap if it is not yet closed rather than the right gap.
This is optimal by a simple greedy exchange argument.

If there are still $g>0$ open gaps left and $(s_\mu, t_\mu)$ is not an edge of $G$, then each open gap needs to be closed by placing in this gap a representation of one or more of the children whose core is empty.
Thus, it is enough to check that $k-k' \geq g$.
The correctness of the described algorithm follows by the P-Tiling Lemma, and the Stretching Lemma.

\subcase{Case R. $\mu$ is an $R$-node.}
The detailed discussion of this case is reported in%
{
\iflncs
\arxivappendix{C.4}{subsection.A.3.4}.%
\else
Appendix~\ref{app:rnode}.%
\fi
}
Here, we sketch our approach.
By the R-Tiling Lemma, the set of possible tilings of $B(\mu)$ by $B(\mu_1), \ldots, B(\mu_k)$
is in correspondence with the triples $(\mathcal{E}, \xi, \chi)$, where $\mathcal{E}$ is a planar embedding of $skel(\mu)$, $\xi$ is an $st$-valuation of $\mathcal{E}$, and $\chi$ is an $st$-valuation of $\mathcal{E}^*$.
To find an appropriate tiling of $B(\mu)$ (that yields an $[x,x']$-representation of $\mu$) we search through the set of such triples.
Since $\mu$ is a rigid node, there are only two $st$-embeddings of $skel(\mu)$ and we consider both of them separately.
Let $\mathcal{E}$ be one of these planar embeddings.
Since the $y$-coordinate for each vertex of $G$ is already fixed, the $st$-valuation $\xi$ is given by the $y$-coordinates of the vertices from $skel(\mu)$.
It remains to find an $st$-valuation $\chi$ of $\mathcal{E}^*$, \ie{}, to determine the $x$-coordinate of the splitting line for every face.

We claim that the existence of an $st$-valuation $\chi$ is equivalent to checking the satisfiability of a carefully designed 2-CNF formula.
For every child $\lambda$ of $\mu$ whose core is non-empty, we introduce two boolean variables that indicate which type (LL, LF, FL, FF) of representation is used for $\lambda$.
Additionally, for every inner face $f$ of $\mathcal{E}$ we introduce two boolean variables: the first (the second) indicates if the splitting line of $f$ is set to the leftmost (rightmost) possible position determined by the bounding boxes of nodes on the left (right) path of $f$.
Now, using those variables, we can express that:
feasible representations of the children nodes are used,
splitting line of a face $f$ agrees with the choice of representation for the nodes on the boundary of $f$ (see Face Condition Lemma),
the choice of splitting lines gives an $st$-valuation of $\mathcal{E}^*$.

In%
{
\iflncs
\arxivappendix{C.4}{subsection.A.3.4}%
\else
Appendix~\ref{app:rnode}%
\fi
} 
we present a formula construction that uses a quadratic number of clauses and results in a quadratic time algorithm.
In%
{
\iflncs
\arxivappendix{C.6}{subsection.A.3.6},%
\else
Appendix~\ref{app:rectangular_algorithm_fast},%
\fi
} 
we present a different, less direct, approach that constructs smaller formulas for $R$-nodes and leads to the $\Oh{n\log^2{n}}$ time algorithm.
Therefore, Lemma~\ref{lem:y-st-valuation}, and the discussion of cases S, P, and R, together with the results in%
{
\iflncs
\arxivappendix{C}{section.A.3}%
\else
Appendix~\ref{app:rectangular}%
\fi
} 
imply Theorem~\ref{th:main-plane-st}.

\section{Concluding Remarks and Open Problems}
\label{sec:conclusion}
We considered the representation extension problem for bar visibility representations and provided an efficient algorithm for st-graphs and showed NP-completeness for planar graphs. 
An important variant of bar visibility representations is when all bars used in the representation have integral coordinates, i.e., \emph{grid representations}. 
Any visibility representation can be easily modified into a grid representation.
However, this transformation does not preserve coordinates of the given vertex bars.  Indeed, we can show (in%
{
\iflncs
\arxivappendix{D.3}{subsection.A.4.3}%
\else
Appendix~\ref{app:hardness_grid}%
\fi
} 
that the (Rectangular) Bar Visibility Representation Extension problem is NP-hard on series-parallel st-graphs when one desires a grid representation. 

We conclude with two natural, interesting open problems.
The first one is to decide if there exists a polynomial time algorithm that checks whether a partial representation of a directed planar graph is extendable to a bar visibility representation of the whole graph.
Although we show an efficient algorithm for an important case of planar $st$-graphs, it seems that some additional ideas are needed to resolve this problem in general.
The second one is to decide if there is an efficient algorithm for recognition of digraphs admitting strong visibility representation, and for the corresponding partial representation extension problem.

\bibliographystyle{splncs03}

\bibliography{PVRE}

\iflncs
\else
\newpage

\appendix

\section{Planar digraphs that admit bar visibility representations}\label{app:digraphs}

\hackLcounter{lem:planar_digraphs_bar_visibility}
\begin{lemma}
Let $st(G)$ be a graph constructed from a planar digraph $G$ by adding two vertices $s$ and $t$, the edge $(s,t)$, edges $(s,v)$ for each source vertex $v$ of $G$, and edges $(v,t)$ for each sink vertex $v$ of $G$.
A planar digraph $G$ admits a bar visibility representation if and only if the graph $st(G)$ is a planar $st$-graph.
\end{lemma}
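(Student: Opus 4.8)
The plan is to prove both implications through the characterization of Tamassia and Tollis~\cite{TamassiaT86} that a planar digraph admits a rectangular bar visibility representation if and only if it is a planar $st$-graph. Before touching the geometry I would record the purely combinatorial facts about $st(G)$. It is acyclic: $G$ is acyclic because every bar visibility representation of a digraph orients each edge strictly upward in the $y$-coordinate, and adding $s$ (which has no in-edges) and $t$ (which has no out-edges) cannot create a cycle through them. Moreover $s$ is the unique source and $t$ the unique sink, since every vertex of $G$ acquires an in-edge (from $s$ if it was a source of $G$, and from $G$ otherwise) and symmetrically an out-edge. Thus in both directions it remains only to handle planarity and the placement of $s,t$ on the outer face.

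For the direction ($\Leftarrow$), assume $st(G)$ is a planar $st$-graph. By~\cite{TamassiaT86} it admits a rectangular bar visibility representation $\psi^{+}$, in which $\psi^{+}(s)$ is the unique lowest bar, $\psi^{+}(t)$ the unique highest bar, and every other bar lies strictly inside the rectangle spanned between them. I would set $\psi=\psi^{+}|_{V(G)}$ and argue that it realizes $G$. The only nontrivial point is that deleting $\psi^{+}(s)$ and $\psi^{+}(t)$ creates no \emph{new} visibility gap between two bars of $G$: a rectangle spanned between $\psi(u)$ and $\psi(v)$ with $u,v\in V(G)$ has $Y$-extent contained in the open interval $\brac{y(\psi^{+}(s)),y(\psi^{+}(t))}$, so it never met $\psi^{+}(s)$ or $\psi^{+}(t)$ in the first place. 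Hence the visibility gaps among $V(G)$-bars are unchanged, and since the edges of $st(G)$ with both endpoints in $V(G)$ are exactly the edges of $G$, the map $\psi$ is a bar visibility representation of $G$.

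For the direction ($\Rightarrow$), assume $G$ has a bar visibility representation $\psi$; I would build a planar $st$-embedding of $st(G)$ directly from $\psi$. First I draw $G$ by routing each edge as a short vertical segment through one of its visibility gaps, obtaining a planar (indeed upward) drawing. The key geometric lemma I would establish is that every source of $G$ is \emph{bottom-exposed}, \ie{}, some vertical column over its bar meets no lower bar: if a source $v$ were covered from below in every column, then a bar $u$ realizing the topmost part of the region immediately beneath $v$ would span with $v$ a thin rectangle meeting no other bar, that is, a visibility gap, contradicting that $v$ has no in-edge. Symmetrically every sink is top-exposed. I then place $s$ below all bars and route each edge $(s,v)$ to a source $v$ upward along a clear column of $v$; these routes cross neither the bars of $G$, nor the vertical edge-segments (there are none below an exposed column), nor one another (pick distinct columns and order the attachments along $s$ accordingly). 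Placing $t$ above and routing the edges to the sinks symmetrically, and finally drawing $(s,t)$ through the empty region to the left of all bars, yields a planar drawing of $st(G)$ with $s$ and $t$ on the outer face, so $st(G)$ is a planar $st$-graph.

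The step I expect to be the main obstacle is the bottom-/top-exposure lemma in the forward direction: one must argue carefully, using that the bars form finitely many segments so the upper envelope of the region beneath a source is piecewise constant, that the absence of an in-edge forces a genuinely clear column and not merely the absence of a \emph{rectangular} gap. Everything else — the source/sink count, the no-new-gaps observation upon restriction, and the crossing-free routing at $s$ and $t$ — is routine once this lemma is in hand.
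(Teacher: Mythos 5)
Your backward direction is the same as the paper's: restrict the Tamassia--Tollis rectangular representation of $st(G)$ to $V(G)$ (the paper dismisses this step with ``clearly''; you supply the right justification, namely that any rectangle spanned between two $G$-bars lies strictly between the extreme bars and therefore never met $\psi^{+}(s)$ or $\psi^{+}(t)$ in the first place). Your forward direction, however, is genuinely different. The paper never leaves the world of visibility representations: it extends the given bars maximally, without creating new visibilities, until the source bars are the only bars visible from below and the sink bars the only ones visible from above, then adds two long bars for $s$ and $t$, obtaining a rectangular bar visibility representation of $st(G)$, and invokes Tamassia--Tollis a second time (rectangular representation iff planar $st$-graph). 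You instead leave the representation untouched, prove an exposure lemma by the upper-envelope argument, and build a planar embedding of $st(G)$ by hand-routing the new edges. Both routes are sound: the paper's buys brevity -- no routing or embedding argument at all -- at the price of a terse bar-extension procedure whose termination and correctness are only sketched; yours is more elementary and self-contained but has to worry about degenerate routing. On that last point, note that your envelope argument actually proves more than you state: if any nondegenerate subinterval of the interior of $X(\psi(v))$ were covered by lower bars, the topmost covering bar over a piece of the envelope would see $v$, so for a source $v$ the entire open strip below its bar, $X(\psi(v))^{\circ}\times\bigl(-\infty,\,y(\psi(v))\bigr)$, contains no bar at all. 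This stronger form implies that distinct sources have disjoint open spans and that every column under a source is clear, so the hedges in your routing step (``pick distinct columns'', avoiding the vertical edge segments) become automatic rather than potential trouble spots.
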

\unhackLcounter

\begin{proof}
Suppose that $st(G)$ is a planar $st$-graph.
Tamassia and Tollis~\cite{TamassiaT86} showed that $st(G)$ has a rectangular bar visibility representation $\psi$ with the bottom-most bar $\psi(s)$ and the top-most bar $\psi(t)$.
Clearly, $\psi|V(G)$ is a bar visibility representation for $G$.

Conversely, assume that $\psi$ is a bar visibility representation of $G$ and $\Gamma$ is the image of $\psi$.
We claim that some bars in $\Gamma$ can be extended, so that the new set of bars represents the same graph and have the following property:
For every vertical line $L$ intersecting a bar in $\Gamma$, if we traverse $L$ upwards:
\begin{itemize}
 \item the first encountered bar from $\Gamma$ represents a source of $G$,
 \item the last encountered bar from $\Gamma$ represents a sink of $G$.
\end{itemize}
In other words, we can transform every bar representation of $G$ so that the bars representing the sources of $G$
are the only bars visible from below and the bars representing the sinks of $G$ are the only bars visible from above.

Let $X(\psi) = X(\bigcup \psi(V(G)))$.
To get the above condition for $\psi$, consider the following procedure.
For any bar $b$ in $\psi(V(G))$, we extend $b$ as much as possible without introducing new visibilities and without increasing $X(\psi)$.
The procedure finishes, when no bar in the representation can be further extended.
Suppose this procedure completes and some non-source bar $b_v$ is visible from below.
Clearly, there is a neighbour $u$ of $v$ with $b_u$ below $b_v$ where $b_u$ can be extended without introducing new visibilities.

Suppose that $\psi$ satisfies the above condition.
Clearly, we can define two bars $\psi(s)$ and $\psi(t)$ such that $X(s)=X(t)$,
$X(\psi) \subsetneq X(s)$, and $y_{\psi}(s) < y_{\psi}(v) < y_{\psi}(t)$ for every vertex $v$ of $G$.
This extension of $\psi$ is a rectangular bar representation of $st(G)$.
It follows that $st(G)$ is a planar $st$-graph.
\end{proof}

\section{Preliminaries}
\label{app:preliminaries}

\subsection{$SPQR$-decomposition}
\label{app:SPQR-trees}
Our algorithm employs a specific version of $SPQR$-decomposition that allows us to describe all $st$-embeddings of a planar $st$-graph.
We follow Di Battista and Tamassia~\cite{BattistaT96}, who were the first to define such $SPQR$-trees, and to prove the properties presented below.

Let $G$ be a planar $st$-graph.
A \emph{cut-vertex} of $G$ is a vertex whose removal disconnects $G$.
A \emph{separation pair} of $G$ is a pair of vertices whose removal disconnects $G$.
A \emph{split pair} of $G$ is either a separation pair or a pair of adjacent vertices.
A \emph{split component} of a split pair $\{u,v\} $ is either an edge $(u,v)$ or a maximal subgraph $C$ of $G$ such that $C$ is a planar $uv$-graph and $\{u,v\}$ is not a split pair of $C$.
A \emph{maximal split pair} $\{u,v\}$ of $G$ is a split pair such that there is no other split pair $\{u',v'\}$ where $\{u,v\}$ is contained in some split component of $\{u',v'\}$.

An \emph{$SPQR$-tree} $T$ for a planar $st$-graph $G$ is a recursive decomposition of $G$ with respect to the split pairs of $G$.
$T$ is a rooted tree whose nodes are of four types: $S$ for \emph{series nodes}, $P$ for \emph{parallel nodes}, $Q$ for \emph{edge nodes}, and $R$ for \emph{rigid nodes}.
Each node $\mu$ of $T$ represents an $st$-graph (a subgraph of $G$) called the \emph{pertinent digraph} of $\mu$ and denoted by $G_{\mu}$.
We use $s_\mu$ and $t_\mu$ to denote the poles of $G_\mu$: $s_\mu$ is the source of $G_\mu$, and $t_\mu$ is the sink of $G_\mu$.
The pertinent digraph of the root node of $T$ is $G$.
Each node $\mu$ of $T$ has an associated directed multigraph $skel(\mu)$ called the \emph{skeleton} of $\mu$.
If $\mu$ is not the root of the tree, then let $\lambda$ be the parent of $\mu$ in $T$.
The node $\mu$ is associated with an edge of the skeleton of $\lambda$, called the \emph{virtual edge} of $\mu$, which connects the poles of $G_{\mu}$ and represents $G_{\mu}$ in $skel(\lambda)$.
The tree $T$ is defined recursively as follows.
\begin{itemize}

\item \emph{Trivial case.} If $G$ consists of a single edge $(s,t)$, then $T$ is simply a $Q$-node $\mu$.
The skeleton $skel(\mu)$ is $G$.

\item \emph{Series case.} If $G$ is a chain of biconnected components $G_1, \dots, G_k$ for some $k \geq 2$ and $c_1, \dots, c_{k-1}$ are the cut-vertices encountered in this order on any path from $s$ to $t$, then the root of $T$ is an $S$-node $\mu$ with children $\mu_1, \dots, \mu_k$.
Let $c_0 = s$ and $c_k = t$.
The skeleton $skel(\mu)$ is the directed path $c_0,\ldots,c_k$.
The pertinent digraph of $\mu_i$ is $G_i$, and edge $(c_{i-1},c_i)$ of $skel(\mu)$ is the virtual edge of $\mu_i$.

\item \emph{Parallel case.} If $\{s,t\}$ is a split pair of $G$ with split components $G_1, \dots, G_k$ for some $k \geq 2$, then the root of $T$ is a $P$-node $\mu$ with children  $\mu_1,\ldots, \mu_k$.
The skeleton $skel(\mu)$ has $k$ parallel edges $(s,t)$: $e_1,\ldots,e_k$.
The pertinent digraph of $\mu_i$ is $G_i$, and edge $e_i$ of $skel(\mu)$ is the virtual edge of $\mu_i$.

\item \emph{Rigid case.} If none of the above applies, let $\{s_1,t_1\}, \dots, \{s_k,t_k\}$ for some $k \geq 2$ be the maximal split pairs of $G$.
For $i=1, \ldots, k$, let $G_i$ be the union of all split components of $\{s_i,t_i\}$.
The root of $T$ is an $R$-node $\mu$ with children $\mu_1, \ldots, \mu_k$.
The skeleton $skel(\mu)$ is obtained from $G$ by replacing each subgraph $G_i$ with an edge $e_i=(s_i,t_i)$.
The pertinent digraph of $\mu_i$ is $G_i$, and edge $e_i$ of $skel(\mu)$ is the virtual edge of $\mu_i$.

\end{itemize}

\begin{figure*}
  \begin{tikzpicture}[>=latex]
    \definecolor{light-gray}{gray}{0.60}
    \definecolor{areafill-gray}{gray}{0.90}
    \begin{scope}[xscale=0.45, yscale=0.45]
      \begin{tiny} {
        \thinmuskip=0.5mu
        \medmuskip=0.5mu plus 0.5mu minus 0.5mu
        \thickmuskip=0.5mu plus 0.5mu minus 0.5mu
        
        \begin{scope}[shift={(0,0)}]
          \draw[fill=white, stroke=black] (0,0) rectangle (10,7);
          \draw[fill=white, stroke=black] (7,6) rectangle (10,7);
          \node[align=center] at (8.5,6.5) {$R$-node};
          \begin{scope}[shift={(0.5,0.5)}]
            \node[] (s) at (3,0) {$s$};
            \node[] (5) at (0,2) {$5$};
            \node[] (10) at (6,1) {$10$};
            \node[] (13) at (9,2) {$13$};
            \node[] (14) at (3,4) {$14$};
            \node[] (t) at (3,6) {$t$};

            \draw (s) edge[->,dashed] (5);
            \draw (s) edge[->,dashed] (10);
            \draw (s) edge[->,dashed,bend right] (13);
            \draw (5) edge[->,dashed] (14);
            \draw (5) edge[->] (t);
            \draw (10) edge[->] (13);
            \draw (10) edge[->] (14);
            \draw (13) edge[->] (14);
            \draw (13) edge[->] (t);
            \draw (14) edge[->] (t);
          \end{scope}
        \end{scope}

        \draw[thick, ->] (0,3.5)--(-1,3.5);
        \draw[thick, ->] (10,3.5)--(11,3.5);
        \draw[thick, ->] (1.5,0)--(1.5,-1);
        \draw[thick, ->] (8.5,0)--(8.5,-1);
        \draw[thick, ->] (-1.5,-3.5)--(-2.5,-3.5);
        \draw[thick, ->] (11.5,-3.5)--(12.5,-3.5);
        \draw[thick, ->] (1.5,-6)--(1.5,-7);
        \draw[thick, ->] (-1.5,-9.5)--(-2.5,-9.5);
        \draw[thick, ->] (15.5,-6)--(15.5,-7);
        \draw[thick, ->] (12.5,-6)--(11.5,-7);
        \draw[thick, ->] (-5.5,-12)--(-5.5,-13);
        \draw[thick, ->] (-2.5,-12)--(-1.5,-13);

        \begin{scope}[shift={(-1.5,-6)}]
          \draw[fill=white, stroke=black] (0,0) rectangle (6,5);
          \draw[fill=white, stroke=black] (4,4) rectangle (6,5);
          \node[align=center] at (5,4.5) {$P$-node};
          \begin{scope}[shift={(0.5,0.5)}]
            \node[] (s) at (2,0) {$s$};
            \node[] (5) at (2,4) {$5$};
            \draw (s) edge[->,dashed,bend left] (5);
            \draw (s) edge[->] (5);
            \draw (s) edge[->,dashed,bend right] (5);
          \end{scope}
        \end{scope}

        \begin{scope}[shift={(5.5,-6)}]
          \draw[fill=white, stroke=black] (0,0) rectangle (6,5);
          \draw[fill=white, stroke=black] (4,4) rectangle (6,5);
          \node[align=center] at (5,4.5) {$S$-node};
          \begin{scope}[shift={(0.5,0.5)}]
            \node[] (s) at (2,0) {$s$};
            \node[] (7) at (2,2) {$7$};
            \node[] (10) at (2,4) {$10$};
            \draw (s) edge[->] (7);
            \draw (7) edge[->,dashed] (10);
          \end{scope}
        \end{scope}

        \begin{scope}[shift={(11,1)}]
          \draw[fill=white, stroke=black] (0,0) rectangle (6,5);
          \draw[fill=white, stroke=black] (4,4) rectangle (6,5);
          \node[align=center] at (5,4.5) {$R$-node};
          \begin{scope}[shift={(0.5,0.5)}]
            \node[] (s) at (2,0) {$s$};
            \node[] (11) at (1,1.5) {$11$};
            \node[] (12) at (3,2.5) {$12$};
            \node[] (13) at (2,4) {$13$};
            \draw (s) edge[->] (11);
            \draw (s) edge[->] (12);
            \draw (11) edge[->] (12);
            \draw (11) edge[->] (13);
            \draw (12) edge[->] (13);
          \end{scope}
        \end{scope}

        \begin{scope}[shift={(-7,1)}]
          \draw[fill=white, stroke=black] (0,0) rectangle (6,5);
          \draw[fill=white, stroke=black] (4,4) rectangle (6,5);
          \node[align=center] at (5,4.5) {$S$-node};
          \begin{scope}[shift={(0.5,0.5)}]
            \node[] (5) at (2,0) {$5$};
            \node[] (6) at (2,2) {$6$};
            \node[] (14) at (2,4) {$14$};
            \draw (5) edge[->] (6);
            \draw (6) edge[->] (14);
          \end{scope}
        \end{scope}

        \begin{scope}[shift={(-8.5,-6)}]
          \draw[fill=white, stroke=black] (0,0) rectangle (6,5);
          \draw[fill=white, stroke=black] (4,4) rectangle (6,5);
          \node[align=center] at (5,4.5) {$S$-node};
          \begin{scope}[shift={(0.5,0.5)}]
            \node[] (s) at (2,0) {$s$};
            \node[] (1) at (2,2) {$1$};
            \node[] (5) at (2,4) {$5$};
            \draw (s) edge[->] (1);
            \draw (1) edge[->] (5);
          \end{scope}
        \end{scope}

        \begin{scope}[shift={(-1.5,-12)}]
          \draw[fill=white, stroke=black] (0,0) rectangle (6,5);
          \draw[fill=white, stroke=black] (4,4) rectangle (6,5);
          \node[align=center] at (5,4.5) {$S$-node};
          \begin{scope}[shift={(0.5,0.5)}]
            \node[] (s) at (2,0) {$s$};
            \node[] (2) at (2,2) {$2$};
            \node[] (5) at (2,4) {$5$};
            \draw (s) edge[->] (2);
            \draw (2) edge[->,dashed] (5);
          \end{scope}
        \end{scope}

        \begin{scope}[shift={(12.5,-6)}]
          \draw[fill=white, stroke=black] (0,0) rectangle (6,5);
          \draw[fill=white, stroke=black] (4,4) rectangle (6,5);
          \node[align=center] at (5,4.5) {$P$-node};
          \begin{scope}[shift={(0.5,0.5)}]
            \node[] (7) at (2,0) {$7$};
            \node[] (10) at (2,4) {$10$};
            \draw (7) edge[->,dashed,bend left] (10);
            \draw (7) edge[->,dashed,bend right] (10);
          \end{scope}
        \end{scope}

        \begin{scope}[shift={(-8.5,-12)}]
          \draw[fill=white, stroke=black] (0,0) rectangle (6,5);
          \draw[fill=white, stroke=black] (4,4) rectangle (6,5);
          \node[align=center] at (5,4.5) {$P$-node};
          \begin{scope}[shift={(0.5,0.5)}]
            \node[] (2) at (2,0) {$2$};
            \node[] (5) at (2,4) {$5$};
            \draw (2) edge[->,dashed,bend left] (5);
            \draw (2) edge[->,dashed,bend right] (5);
          \end{scope}
        \end{scope}

        \begin{scope}[shift={(-8.5,-18)}]
          \draw[fill=white, stroke=black] (0,0) rectangle (6,5);
          \draw[fill=white, stroke=black] (4,4) rectangle (6,5);
          \node[align=center] at (5,4.5) {$S$-node};
          \begin{scope}[shift={(0.5,0.5)}]
            \node[] (2) at (2,0) {$2$};
            \node[] (3) at (2,2) {$3$};
            \node[] (5) at (2,4) {$5$};
            \draw (2) edge[->] (3);
            \draw (3) edge[->] (5);
          \end{scope}
        \end{scope}

        \begin{scope}[shift={(-1.5,-18)}]
          \draw[fill=white, stroke=black] (0,0) rectangle (6,5);
          \draw[fill=white, stroke=black] (4,4) rectangle (6,5);
          \node[align=center] at (5,4.5) {$S$-node};
          \begin{scope}[shift={(0.5,0.5)}]
            \node[] (2) at (2,0) {$2$};
            \node[] (4) at (2,2) {$4$};
            \node[] (5) at (2,4) {$5$};
            \draw (2) edge[->] (4);
            \draw (4) edge[->] (5);
          \end{scope}
        \end{scope}

        \begin{scope}[shift={(5.5,-12)}]
          \draw[fill=white, stroke=black] (0,0) rectangle (6,5);
          \draw[fill=white, stroke=black] (4,4) rectangle (6,5);
          \node[align=center] at (5,4.5) {$S$-node};
          \begin{scope}[shift={(0.5,0.5)}]
            \node[] (7) at (2,0) {$7$};
            \node[] (8) at (2,2) {$8$};
            \node[] (10) at (2,4) {$10$};
            \draw (7) edge[->] (8);
            \draw (8) edge[->] (10);
          \end{scope}
        \end{scope}

        \begin{scope}[shift={(12.5,-12)}]
          \draw[fill=white, stroke=black] (0,0) rectangle (6,5);
          \draw[fill=white, stroke=black] (4,4) rectangle (6,5);
          \node[align=center] at (5,4.5) {$S$-node};
          \begin{scope}[shift={(0.5,0.5)}]
            \node[] (7) at (2,0) {$7$};
            \node[] (9) at (2,2) {$9$};
            \node[] (10) at (2,4) {$10$};
            \draw (7) edge[->] (9);
            \draw (9) edge[->] (10);
          \end{scope}
        \end{scope}

        }
      \end{tiny}
    \end{scope}
  \end{tikzpicture}
  \caption{The $SPQR$-tree for the graph in Figure~\ref{fig:large}.
    The $Q$-nodes (leaves of the tree) have been omitted for clarity.
    For each $S$-, $P$-, and $R$-node, the skeleton is given such that each solid edge corresponds to a $Q$-node child and each dashed edge corresponds to a $S$-, $P$-, or $R$-node child.
  }
  \label{fig:large_spqr}
\end{figure*}
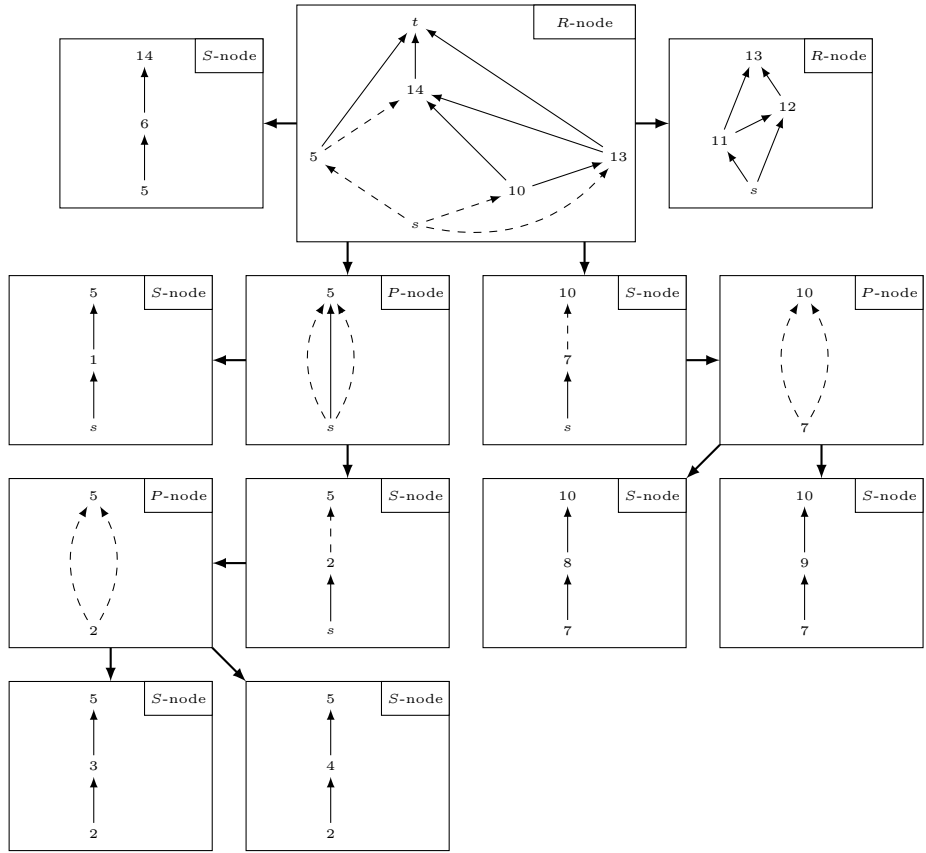

Note also that there is no additional edge between the poles of the skeleton of a series, parallel or rigid node -- this is the only difference in the $SPQR$-tree definition given above and the one given in~\cite{BattistaT96}.
In particular, our definition ensures that we have a one-to-one correspondence between the edges of $skel(\mu)$ and the children of $\mu$.
See Figure~\ref{fig:large_spqr} for an example of an $SPQR$-decomposition of the planar $st$-graph presented in Figure \ref{fig:large}.

Observe that the skeleton of a rigid node has only two $st$-embeddings, one being the flip of the other around the poles of the node.
The skeleton of a parallel node with $k$ children has $k!$ $st$-embeddings, one for every permutation of the edges of $skel(\mu)$.
The skeleton of a series node or a edge node has only one $st$-embedding.

There is a correspondence between $st$-embeddings of an $st$-graph $G$ and $st$-embeddings of the skeletons of $P$-nodes and $R$-nodes in the $SPQR$-tree $T$ for $G$.
Having selected an $st$-embedding of the skeleton of all $P$-nodes and all $R$-nodes, we can construct an embedding of $G$ as follows.
Let $t$ be the root of $T$.
We replace every virtual edge $(u,v)$ in the embedding of $skel(t)$ with the recursively defined embedding of the pertinent digraph of a child of $t$ associated with the edge $(u,v)$.
On the other hand, any $st$-embedding of $G$ determines:
\begin{itemize}
 \item one of the two possible flips of the skeleton of every $R$-node in $T$;
 \item a permutation of the edges in the skeleton of every $P$-node.
\end{itemize}

Di Battista and Tamassia~\cite{BattistaT96} showed that the $SPQR$-tree $T$ for a planar $st$-graph with $n$ vertices has $\Oh{n}$ nodes, that the total number of edges of all skeletons is $\Oh{n}$, and that $T$ can be computed in linear time.

\subsection{Plane $st$-graphs}
\label{app:plane-st-graphs}

Tamassia and Tollis~\cite{TamassiaT86} showed that the following properties are satisfied for any plane $st$-graph:
\begin{enumerate}
  \item \label{app_prop:st_inner_face} For every inner face $f$, the boundary of $f$ consists of two directed paths with a common origin and a common destination.
  \item \label{app_prop:st_outer_face} The boundary of the outer face consists of two directed paths, with a common origin $s$ and a common destination $t$.
  \item \label{app_prop:st_bipolar} For every inner vertex $v$, its incoming (outgoing) edges are consecutive around $v$.
\end{enumerate}


Let $G$ be a plane $st$-graph.
We introduce two special objects associated with the outer face of $G$: the \emph{left outer face} $s^*$ and the \emph{right outer face} $t^*$.
Let $e = (u,v)$ be an edge of $G$.
The \emph{left face} (\emph{right face}) of $e$ is the face of $G$ that is to the left (right) of $e$ when we traverse $e$ from $u$ to $v$.
If the outer face of $G$ is to the left (right) of $e$ then we say that the left face (right face) of $e$ is $s^*$ ($t^*$).

Using property~\eqref{app_prop:st_inner_face} we can define the \emph{left path} and the \emph{right path} for each inner face of $G$ as follows.
If $f$ is an inner face of $G$ then the left path (right path) of $f$ consists of edges from the boundary of $f$ for which $f$ is the right face (left face).

Using property~\eqref{app_prop:st_outer_face} we can define the left path for $t^*$ and the right path for $s^*$ as follows.
The right path of $s^*$ consists of edges from the boundary of the outer face that have the outer face on their left side.
The left path of $t^*$ consists of edges from the boundary of the outer face that have the outer face on their right side.
The left path for $s^*$ and the right path for $t^*$ are not defined.

Using property~\eqref{app_prop:st_bipolar} we can define the \emph{left face} and the \emph{right face} for each vertex of $G$ as follows.
The left face (right face) of an inner vertex $v$ is the unique face $f$ incident to $v$ such that there are two edges $e_1$ and $e_2$ on the right path (left path) of $f$, where $e_1$ is an incoming edge for $v$ and $e_2$ is an outgoing edge for $v$.
If the left face (right face) of $v$ is the outer face of $G$, we say that the left face (right face) of $u$ is $s^*$ ($t^*$).
We also say that $s^*$ ($t^*$) is the left face (right face) for $s$ and $t$.

Let $G$ be a plane $st$-graph.
Let $F$ be the set of inner faces of $G$ together with $s^*$ and $t^*$.
The \emph{dual} of $G$ is the directed graph $G^*$ with vertex set $F$ and edge set consisting of all pairs $(f,g)$ such that there exists an edge $e$ of $G$ with $f$ being the left face of $e$ and $g$ being the right face of $e$.
Di Battista and Tamassia~\cite{BattistaT88} showed that $G^*$ is a planar $s^*t^*$-graph.

Let $G$ be a plane $st$-graph and let $G^*$ be the dual of $G$.
For two faces $f$ and $g$ in $V(G^*)$ we say that \emph{$f$ is to the left of $g$}, and that \emph{$g$ is to the right of $f$}, if there is a directed path from $f$ to $g$ in $G^*$.

\section{Rectangular bar visibility representations of $st$-graphs}
\label{app:rectangular}

\subsection{Structural properties}
\label{app:rectangular_properties}

\hackLcounter{lem:tiling-lemma-Q-node}
\begin{lemma}[Q-Tiling Lemma]
Let $\mu$ be a $Q$-node in $T$ that corresponds to an edge $(u,v)$ of $G$.
For any rectangular bar visibility representation $\psi$ of $G$ we have:
\begin{enumerate}
\item $B(\mu)$ is a union of pairwise disjoint rectangles spanned between $\psi(u)$ and $\psi(v)$.
\item If $B(\mu)$ is not a single rectangle, then the parent $\lambda$ of $\mu$ in $T$ is a $P$-node, and $u$, $v$ are the poles of the pertinent digraph $G_{\lambda}$.
\end{enumerate}
\end{lemma}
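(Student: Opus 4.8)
The plan is to dispatch the first part directly from the definitions and to spend the real effort on the second, which is a statement about the shape of the $SPQR$-tree around a $Q$-node. For part~1, I would use that, since $\mu$ is a $Q$-node for the edge $(u,v)$, its pertinent digraph $G_\mu$ consists of exactly this edge, so by definition $B(\mu)$ is the closure of the single set $R(\psi(u),\psi(v))$. Because $(u,v)$ is an edge of $G$ and $\psi$ is a bar visibility representation, there is a visibility gap between $\psi(u)$ and $\psi(v)$, so $R(\psi(u),\psi(v))\neq\emptyset$. The fact recalled immediately before the lemma then says this set is a union of pairwise disjoint open rectangles spanned between $\psi(u)$ and $\psi(v)$; since consecutive components are separated in the $x$-direction by a bar blocking the view, their closures stay disjoint, and taking the closure gives part~1.

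For part~2, assume $B(\mu)$ is not a single rectangle, so $R(\psi(u),\psi(v))$ has at least two components $R_1,R_2$, ordered left to right. First I would produce a \emph{blocking vertex}. A column $x_0$ lying between $R_1$ and $R_2$ satisfies $x_0\in X(\psi(u))\cap X(\psi(v))$ yet contributes nothing to $R(\psi(u),\psi(v))$, which (since the two bars are present in that column) can only happen if some bar $\psi(w)$ with $y(u)<y(w)<y(v)$ spans $x_0$. Walking up the column $x_0$ from just above $\psi(u)$, the first bar encountered is visible from $\psi(u)$, hence adjacent to $u$; iterating yields bars $\psi(w_1),\dots,\psi(w_r)$ with $y(u)<y(w_1)<\dots<y(w_r)<y(v)$ before reaching $\psi(v)$, none equal to $v$ because the view is blocked. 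Since $\psi$ represents the digraph $G$, every such adjacency is oriented from the lower to the upper bar, so $u\to w_1\to\dots\to w_r\to v$ is a directed $u$--$v$ path $P$ in $G$, internally disjoint from the edge $(u,v)$. Thus $\{u,v\}$ is a split pair whose split components include both the edge $(u,v)$ and the component carrying $P$.

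It remains to show that the parent $\lambda$ of $\mu$ is a $P$-node with poles $u,v$, which I would do by excluding the $S$- and $R$-cases. The central tool is a \emph{locality} argument from $y$-monotonicity: $G_\lambda$ meets the rest of $G$ only at its poles $s_\lambda,t_\lambda$, and every vertex $z$ of $G_\lambda$ satisfies $y(s_\lambda)\le y(z)\le y(t_\lambda)$. If $P$ left $G_\lambda$ it would have to cross out and back through poles; but the interior vertices $w_j$ have $y(w_j)\in(y(u),y(v))$, forcing $y(s_\lambda)\le y(u)<y(w_j)<y(v)\le y(t_\lambda)$, so no $w_j$ is a pole, and a short case check on where a crossing edge could attach shows that leaving $G_\lambda$ would require $u=s_\lambda$ and $v=t_\lambda$ simultaneously. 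For an $S$-node that is impossible ($\mu$ would be the edge $(c_{i-1},c_i)$ and $m\ge 2$), and for an $R$-node it is also impossible, because if $(s_\lambda,t_\lambda)$ were an edge of $G_\lambda$ then $\{s_\lambda,t_\lambda\}$ would be a split pair of $G_\lambda$ with at least two split components, so $G_\lambda$ would be decomposed by the parallel case (which has priority over the rigid case), contradicting that $\lambda$ is an $R$-node. Hence $P\subseteq G_\lambda$. For an $S$-node this contradicts that the biconnected component between the consecutive cut-vertices $u,v$ is a single edge; for an $R$-node it means $\{u,v\}$ is a split pair of $G_\lambda$ with the edge as a separate component, so the edge $(u,v)$ would be absorbed into the maximal split component $G_i$ of some maximal split pair and could not be a standalone $Q$-node child. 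Ruling out both, $\lambda$ is a $P$-node; since every child's virtual edge joins the poles of $G_\lambda$ and $y(u)<y(v)$, we get $u=s_\lambda$ and $v=t_\lambda$, proving part~2.

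I expect the main obstacle to be exactly this last combinatorial step: turning the purely geometric blocking bar into the clean conclusion about the tree. The delicate points are verifying that the second path cannot escape $G_\lambda$ (the boundary sub-cases where $u$ or $v$ is a pole) and invoking the precedence of the parallel over the rigid decomposition rule to kill the $R$-node pole-edge case; the geometric extraction of $P$ and part~1 are routine by comparison.
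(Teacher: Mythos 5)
Your proposal is correct in substance, and its geometric core is the same as the paper's: both arguments exploit the blocked strip between the two leftmost rectangles $R_1,R_2$ of $R(\psi(u),\psi(v))$ to produce a second split component of $\{u,v\}$, and conclude that the parent must be a $P$-node with poles $u,v$. Where you diverge is in the execution of both halves. For the geometric half, the paper does not extract a path: it observes that the bars trapped in the strip $S$ between $R_1$ and $R_2$, together with $u$ and $v$, form a planar $uv$-graph, which immediately gives a second split component. Your column-walking extraction of a directed path achieves the same thing but has one genuine (if repairable) flaw: in the bar visibility model, adjacency requires an $\epsi$-wide visibility gap, not a visibility ray, so "the first bar encountered on the column is adjacent to $u$" is false for certain columns --- e.g., two bars at equal height whose open $x$-intervals share an endpoint $x_0$ let the ray at $x_0$ pass while blocking every positive-width corridor. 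You must choose $x_0$ avoiding the finitely many abscissae of bar endpoints (possible since the strip has positive width), after which a small perturbation argument yields genuine visibility gaps; the paper's trapped-subgraph formulation sidesteps this issue entirely. For the combinatorial half, the roles are reversed: the paper simply asserts "thus $\lambda$ is a $P$-node with poles $u$ and $v$" from the existence of two split components, leaving the underlying structural fact about $SPQR$-trees implicit, whereas you prove it --- the $y$-monotonicity/locality argument confining the second component to $G_\lambda$, the exclusion of the $S$-case via blocks, and the exclusion of the $R$-case via the priority of the parallel case over the rigid case. That case analysis is sound (I checked the pole sub-cases and the maximal-split-pair argument against the paper's $SPQR$ definition) and supplies justification the paper's one-line conclusion omits; the price is length and the genericity caveat above.
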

\unhackLcounter
\begin{proof}
The first assertion is obvious.
Suppose that $B(\mu)$ is a union of at least $2$ rectangles.
Let $R_1$ and $R_2$ be the two left-most rectangles of $B(\mu)$.
Consider the rectangle $S$ spanned between $\psi(u)$ and $\psi(v)$ and between the right side of $R_1$ and left side of $R_2$.
There are some bars in $\psi(G)$ that are contained in $S$.
The vertices corresponding to these bars together with $u$ and $v$ form a planar $uv$-graph.
Hence, the split pair $\{u,v\}$ has at least two split components: the edge $(u,v)$ and at least one other component.
Thus, $\lambda$ is a $P$-node with poles $u$ and $v$.
\end{proof}

In Figure~\ref{fig:large_tiling_ps} observe that the set $R(s,5)$ is a union of two rectangles.
Recall the $SPQR$-decomposition presented in Figure~\ref{fig:large_spqr} and that the $Q$-node corresponding to the edge $(s,5)$ is a child of a $P$-node.

\begin{figure*}
  \begin{tikzpicture}[>=latex]
    \definecolor{light-gray}{gray}{0.92}
    \definecolor{dark-gray}{gray}{0.78}
    \definecolor{areafill-gray}{gray}{0.90}
    \begin{scope}[xscale=0.5, yscale=0.5]
      \begin{tiny} {
        \thinmuskip=0.5mu
        \medmuskip=0.5mu plus 0.5mu minus 0.5mu
        \thickmuskip=0.5mu plus 0.5mu minus 0.5mu
        
        \begin{scope}[shift={(0,0)}]
          \node[] (s) at (5,-0.25) {$s$};
          \node[] (1) at (0,2.5) {$1$};
          \node[] (2) at (2,1.5) {$2$};
          \node[] (3) at (2,4.0) {$3$};
          \node[] (4) at (3.5,2.75) {$4$};
          \node[] (5) at (3.5,6.5) {$5$};
          \node[] (6) at (4.5,8.5) {$6$};
          \node[] (7) at (6.5,2) {$7$};
          \node[] (8) at (5,4) {$8$};
          \node[] (9) at (7.5,4) {$9$};
          \node[] (10) at (6,6) {$10$};
          \node[] (11) at (7.5,1) {$11$};
          \node[] (12) at (10,2) {$12$};
          \node[] (13) at (10,8) {$13$};
          \node[] (14) at (6,9.5) {$14$};
          \node[] (t) at (6,11.5) {$t$};

          \draw (s) edge[->, bend left=20, very thick] (1);
          \draw (s) edge[->, very thick] (2);
          \draw (s) edge[->, very thick] (5);
          \draw (s) edge[->, very thick, dashed] (7);
          \draw (s) edge[->] (11);
          \draw (s) edge[->, bend right=20] (12);
          \draw (1) edge[->, bend left, very thick] (5);
          \draw (2) edge[->, very thick] (3);
          \draw (2) edge[->, very thick] (4);
          \draw (3) edge[->, very thick] (5);
          \draw (4) edge[->, very thick] (5);
          \draw (5) edge[->] (6);
          \draw (5) edge[->, bend left] (t);
          \draw (6) edge[->] (14);
          \draw (7) edge[->, very thick, dashed] (8);
          \draw (7) edge[->, very thick, dashed] (9);
          \draw (8) edge[->, very thick, dashed] (10);
          \draw (9) edge[->, very thick, dashed] (10);
          \draw (10) edge[->] (13);
          \draw (10) edge[->] (14);
          \draw (11) edge[->] (12);
          \draw (11) edge[->] (13);
          \draw (12) edge[->] (13);
          \draw (13) edge[->] (14);
          \draw (13) edge[->] (t);
          \draw (14) edge[->] (t);
        \end{scope}

        \begin{scope}[shift={(13,0.75)}]
          \draw[fill=light-gray, color=light-gray] (0,0) rectangle (2,6);
          \draw[fill=dark-gray, color=dark-gray] (2,0) rectangle (3,6);
          \draw[fill=light-gray, color=light-gray] (3,0) rectangle (5,6);
          \draw[fill=dark-gray, color=dark-gray] (5,0) rectangle (6,6);

          \fill[pattern=horizontal lines light gray] (6,0) rectangle (8,1);
          \fill[pattern=horizontal lines gray] (6,1) rectangle (8,7);

          \node at (5.5,10.4) {$\psi(t)$};
          \draw[thick, (-)] (0,10)--(11,10);
          \node at (5.5,9.4) {$\psi(14)$};
          \draw[thick, (-)] (1,9)--(10,9);
          \node at (9,8.4) {$\psi(13)$};
          \draw[thick, (-)] (7,8)--(11,8);
          \node at (3.5,8.4) {$\psi(6)$};
          \draw[thick, (-)] (1,8)--(6,8);
          \node at (7,7.4) {$\psi(10)$};
          \draw[thick, (-)] (6,7)--(8,7);
          \node at (3,6.4) {$\psi(5)$};
          \draw[thick, (-)] (0,6)--(6,6);
          \node at (10,5.4) {$\psi(12)$};
          \draw[thick, (-)] (9,5)--(11,5);
          \node at (7.6,4.5) {$\psi(9)$};
          \draw[thick, (-)] (7,4)--(8,4);
          \node at (6.4,4.5) {$\psi(8)$};
          \draw[thick, (-)] (6,4)--(7,4);
          \node at (4.6,4.5) {$\psi(4)$};
          \draw[thick, (-)] (4,4)--(5,4);
          \node at (3.4,4.5) {$\psi(3)$};
          \draw[thick, (-)] (3,4)--(4,4);
          \node at (9,3.4) {$\psi(11)$};
          \draw[thick, (-)] (8,3)--(10,3);
          \node at (1,2.4) {$\psi(1)$};
          \draw[thick, (-)] (0,2)--(2,2);
          \node at (7,1.4) {$\psi(7)$};
          \draw[thick, (-)] (6,1)--(8,1);
          \node at (4,1.4) {$\psi(2)$};
          \draw[thick, (-)] (3,1)--(5,1);
          \node at (5.5,0.4) {$\psi(s)$};
          \draw[thick, (-)] (0,0)--(11,0);
        \end{scope}
        }
      \end{tiny}
    \end{scope}
  \end{tikzpicture}
  \caption{
    The graph $G$,
      the pertinent digraph of the $P$-node $\mu_1$ (solid thick edges), 
      the pertinent digraph of the $S$-node $\mu_2$ (dashed thick edges), 
      the representation $\psi(G)$,
      the tiling of $\mu_1$ in $\psi(G)$ (solid fill), and
      the tiling of $\mu_2$ in $\psi(G)$ (patterned fill).
  }
  \label{fig:large_tiling_ps}
\end{figure*}
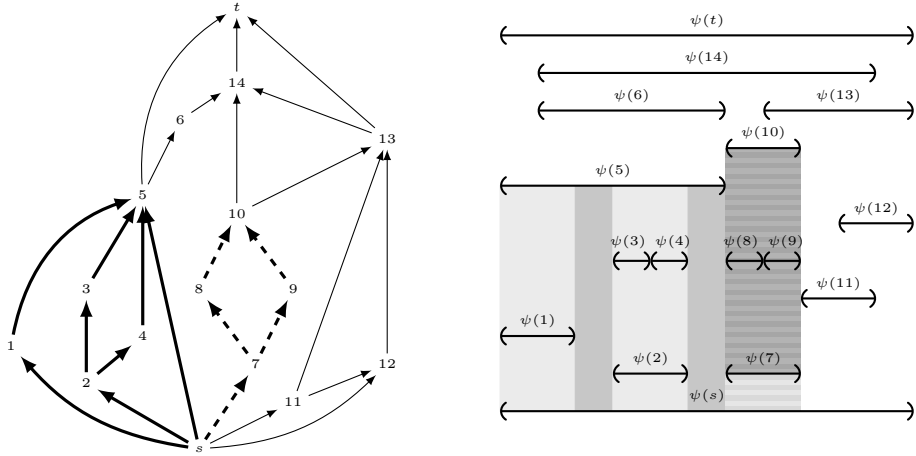

\hackLcounter{lem:basic-tiling-lemma}
\begin{lemma}[Basic Tiling Lemma]
Let $\mu$ be an inner node in $T$ with children $\mu_1,\ldots,\mu_k$, $k \geq 2$.
For any rectangular bar visibility representation $\psi$ of $G$ we have:
\begin{enumerate}
 \item $\psi(v) \subseteq B(\mu)$ for every inner vertex $v$ of $G_\mu$.
 \item $B(\mu)$ is a rectangle that is spanned between $\psi(s_\mu)$  and $\psi(t_\mu)$.
 \item The sets $B(\mu_1), \ldots, B(\mu_k)$ tile the rectangle $B(\mu)$, \ie{},
 $B(\mu_1), \ldots, B(\mu_k)$ cover $B(\mu)$ and the interiors of $B(\mu_1), \ldots, B(\mu_k)$ are pairwise disjoint.
\end{enumerate}
\end{lemma}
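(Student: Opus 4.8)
The plan is to prove the three claims in the order (1), then the covering-and-disjointness content of (3), and finally (2); once $B(\mu)$ is known to be a rectangle, claim (3) is immediate. For (1), let $v$ be an inner vertex of $G_\mu$. Since $G_\mu$ is an $st$-graph and $v$ is neither pole, $v$ has an incoming edge inside $G_\mu$, and since a non-pole vertex has all of its $G$-edges inside $G_\mu$, I would argue pointwise: fix an interior abscissa $x_0$ of $X(\psi(v))$, let $\psi(u)$ be the first bar met going straight down from $\psi(v)$ at $x_0$, and note that for small $\epsi>0$ the point $(x_0,y(v)-\epsi)$ lies in $R(\psi(u),\psi(v))$. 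Then there is a visibility gap between $\psi(u)$ and $\psi(v)$, so $(u,v)$ is an edge, necessarily of $G_\mu$, and the point lies in $B(\mu)$. Letting $\epsi\to 0$ and ranging over all interior $x_0$ puts the open bar into $B(\mu)$, and the endpoints follow since $B(\mu)$ is closed.

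For the tiling in (3), the edge sets of the children partition $E(G_\mu)$, and since closure commutes with finite unions we get $B(\mu)=\bigcup_i B(\mu_i)$, which is the covering. Writing $A_i=\bigcup\set{R(\psi(u),\psi(v)):(u,v)\in E(G_{\mu_i})}$, each $A_i$ is open with $B(\mu_i)=\overline{A_i}$, and the $A_i$ are pairwise disjoint because no point lies in two distinct visibility rectangles. If some point were in the interior of both $B(\mu_i)$ and $B(\mu_j)$, I would take a small ball $U$ contained in both; then $A_i\cap U$ and $A_j\cap U$ are open and dense in $U$, so they must intersect, contradicting $A_i\cap A_j=\emptyset$. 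Hence the interiors are pairwise disjoint.

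Claim (2) is the heart of the lemma. That $Y(B(\mu))=[y(s_\mu),y(t_\mu)]$ is immediate, because $y_\psi$ is an $st$-valuation and $s_\mu,t_\mu$ are the unique source and sink of $G_\mu$. The real content is that $B(\mu)$ is a filled rectangle whose vertical sides lie in $X(\psi(s_\mu))\cap X(\psi(t_\mu))$, which I would prove by induction over $T$ with the Q-Tiling Lemma as base case. For an $S$-node I would show, just as in (1), that every interior point of a cut-vertex bar $\psi(c_i)$ is covered from below by an edge of $G_{\mu_i}$ and from above by an edge of $G_{\mu_{i+1}}$; together with the inductive ``spanned between poles'' property this forces $X(B(\mu_i))=X(\psi(c_i))=X(B(\mu_{i+1}))$, so all children share one $x$-interval and stack in $y$. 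For a $P$-node each child box is already full height, so their union is a filled rectangle once the children's $x$-intervals are shown to be contiguous. For an $R$-node the children are spanned between the endpoints of the skeleton edges, and the task reduces to showing that the left and right sides of $B(\mu)$ are each a single vertical line.

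I expect the main obstacle to be ruling out gaps in the $P$-case and, above all, the vertical alignment of the boundary in the $R$-case, which is where planarity must enter. The key fact is that $\set{s_\mu,t_\mu}$ separates $G_\mu$ from the rest of $G$, so in the upward drawing induced by $\psi$ the visibility rectangles of the edges of $G_\mu$ fill exactly the region enclosed between the two directed $s_\mu$-to-$t_\mu$ boundary paths of $G_\mu$. A hypothetical gap or hole inside the bounding rectangle would be a point of $B(\psi)$ lying in no $B(\mu_i)$, hence covered by a visibility rectangle of an edge \emph{outside} $G_\mu$ that is horizontally trapped between two $G_\mu$-rectangles at the same height; since the enclosing $G_\mu$-paths meet at $s_\mu$ below and $t_\mu$ above while the rest of $G$ attaches to $G_\mu$ only at the poles, such an edge could not connect to the rest of $G$, a contradiction. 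The delicate $R$-node alignment I would extract from the plane $st$-graph face structure of Tamassia and Tollis: along the left boundary path the relevant left bar-endpoints and left box-sides all face one common region, whose right boundary is a single vertical line, which is exactly the assertion that $B(\mu)$ has straight vertical sides.
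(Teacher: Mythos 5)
Your handling of items (1) and (3) is correct and essentially the paper's own argument: for (1), every edge of $G$ incident to an inner vertex of $G_\mu$ is an edge of $G_\mu$, so the visibility rectangles directly above and below the bar belong to edges of $G_\mu$ and their closures cover $\psi(v)$; for (3), the covering follows since $E(G_{\mu_1}),\ldots,E(G_{\mu_k})$ partition $E(G_\mu)$, and your open-dense argument for disjointness of interiors is a careful write-up of the fact (stated before the lemma) that each point lies in at most one set of visibility rectangles.

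The genuine gap is in item (2), which you yourself call the heart of the lemma and then leave open. Your mechanism for excluding gaps and holes -- that an outside edge whose rectangles fill a hypothetical gap ``could not connect to the rest of $G$'' -- fails in exactly the dangerous case: a gap in $B(\mu)$ would be bounded below by $\psi(s_\mu)$ and above by $\psi(t_\mu)$, and the rest of $G$ is allowed to attach to $G_\mu$ precisely at these poles. Concretely, if $\mu$ is an $S$- or $R$-node whose parent is a $P$-node with the same poles, then the edge $(s_\mu,t_\mu)$, or an entire sibling split component of $\{s_\mu,t_\mu\}$, lies outside $G_\mu$ yet attaches to both poles, so your argument does not prevent its visibility rectangles from occupying a full-height strip between two pieces of $B(\mu)$. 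Excluding this interleaving is the actual content of (2), and it needs a graph-theoretic argument you never supply: for $S$- and $R$-nodes, that $G_\mu$ minus its poles is connected and that adjacent inner vertices have overlapping $x$-intervals, so their full-height columns chain into one interval; for $P$-nodes, that every split component of $\{s_\mu,t_\mu\}$ already belongs to $G_\mu$, so nothing foreign can sit between the children. Your $P$-node ``contiguity'' and $R$-node ``boundary alignment'' steps, which you defer as expected obstacles, are exactly these missing arguments, so the proposal is a plan rather than a proof.

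The paper also takes a structurally different and much shorter route for (2): no induction over $T$ and no case analysis by node type. For each inner vertex $v$ of $G_\mu$ it considers the full-height column $S_\mu(v) = X(\psi(v)) \times [y(\psi(s_\mu)), y(\psi(t_\mu))]$, shows it is internally disjoint from every bar $\psi(w)$ with $w \notin V(G_\mu)$ (otherwise a visibility gap would yield an edge leaving $G_\mu$ at an inner vertex, contradicting that $\{s_\mu,t_\mu\}$ separates $G_\mu$ from the rest of $G$), and identifies $B(\mu)$ as the union of these columns, together with $R(\psi(s_\mu),\psi(t_\mu))$ when $(s_\mu,t_\mu)$ is an edge of $G_\mu$. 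Note also that in the paper the S-, P- and R-Tiling Lemmas are consequences of the Basic Tiling Lemma, so an argument for (2) that proceeds by $S$/$P$/$R$ case distinctions ends up re-proving, inside each case, the very geometric fact the lemma is supposed to provide.
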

\unhackLcounter
\begin{proof}
  Observe that for an inner vertex $v$ of $G_\mu$, any edge of $G$ incident to $v$ is an edge of $G_\mu$.
  The closures of the sets of visibility rectangles corresponding to all edges incident to $v$ cover $\psi(v)$ and Property~\eqref{lem:basic-tiling-lemma-containment} follows.

  To prove~\eqref{lem:basic-tiling-lemma-rectangle} note that for every inner vertex $v$ of $G_\mu$, the set
  $$S_\mu(v) = X(\psi(v)) \times [y(\psi(s_\mu)), y(\psi(t_\mu))]$$ is a rectangle
  that is spanned between $\psi(s_\mu)$ and $\psi(t_\mu)$ and it is internally disjoint from $\psi(w)$ for any vertex $w$ not in $V(G_\mu)$.
  Otherwise, there would be a visibility gap that would correspond to an edge between an inner vertex of $G_\mu$ and a vertex in $V(G) \setminus V(G_\mu)$.

  If $(s_\mu, t_\mu)$ is not an edge of $G_\mu$, then
  $$B(\mu) = \bigcup \{S_\mu(v): v \text{ is an inner vertex of $G_\mu$}\};$$
  otherwise
  $$B(\mu) = \bigcup \{S_\mu(v): v \text{ is an inner vertex of $G_\mu$}\} \cup R(\psi(s_\mu), \psi(t_\mu)).$$
  In both cases $B(\mu)$ is a rectangle spanned between $\psi(s_\mu)$ and $\psi(t_\mu)$.

  Property~\eqref{lem:basic-tiling-lemma-tiling} follows immediately from the fact that the edges of $G_{\mu_1}, \ldots, G_{\mu_k}$ form a partition of the edges of $G_\mu$.
\end{proof}

Figures~\ref{fig:large_tiling_ps} and~\ref{fig:large_tiling_r} give a graphical presentation of the Basing Tiling Lemma.
Figure ~\ref{fig:large_tiling_ps} shows a possible tiling of a $P$-node and an $S$-node.

\hackLcounter{lem:tiling-lemma-S-node}
\begin{lemma}[S-Tiling Lemma]
Let $\mu$ be an $S$-node.
Let $c_1,\ldots,c_{k-1}$ be the cut-vertices of $G_\mu$ encountered in this order on a path from $s_\mu$ to $t_\mu$.
Let $c_0 = s_\mu$, and $c_k = t_\mu$.
For any rectangular bar visibility representation $\psi$ of $G$, for every $i=1,\ldots,k-1$, we have $X(\psi(c_i)) = X(B(\mu))$.
For every $i=1,\ldots,k$, $B(\mu_i)$ is spanned between $\psi(c_{i-1})$ and $\psi(c_i)$ and $X(B(\mu_i)) = X(B(\mu))$.
\end{lemma}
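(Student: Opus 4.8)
The plan is to reduce the claim to the Basic Tiling Lemma together with the structural description of an $S$-node coming from the $SPQR$-decomposition. Recall that for an $S$-node $\mu$, the skeleton $skel(\mu)$ is a directed path $c_0,\ldots,c_k$ with $c_0=s_\mu$, $c_k=t_\mu$, and the vertices $c_1,\ldots,c_{k-1}$ are precisely the cut-vertices of $G_\mu$ appearing in this order on any $s_\mu$-to-$t_\mu$ path; moreover the child $\mu_i$ has pertinent digraph $G_{\mu_i}$ with poles $c_{i-1}$ and $c_i$. By the Basic Tiling Lemma applied to $\mu$, each $B(\mu_i)$ is a rectangle spanned between $\psi(c_{i-1})$ and $\psi(c_i)$, and the boxes $B(\mu_1),\ldots,B(\mu_k)$ tile $B(\mu)$. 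So the content of the lemma is really the statement about $x$-intervals: that every cut-vertex $c_i$ spans the full width of $B(\mu)$ and that each $B(\mu_i)$ does as well.

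First I would fix an arbitrary rectangular bar visibility representation $\psi$ of $G$ and observe that, since $B(\mu)$ is a single rectangle spanned between $\psi(s_\mu)=\psi(c_0)$ and $\psi(t_\mu)=\psi(c_k)$, we have $X(B(\mu))=X(\psi(c_0))=X(\psi(c_k))$. The heart of the argument is to show $X(\psi(c_i))=X(B(\mu))$ for every internal cut-vertex $c_i$. Here I would exploit that each $c_i$ is a cut-vertex of $G_\mu$: removing $c_i$ separates $G_{\mu_1}\cup\cdots\cup G_{\mu_i}$ from $G_{\mu_{i+1}}\cup\cdots\cup G_{\mu_k}$. Geometrically, the boxes $B(\mu_1),\ldots,B(\mu_i)$ (stacked below $c_i$) together tile a sub-rectangle whose top side lies on $\psi(c_i)$, and likewise $B(\mu_{i+1}),\ldots,B(\mu_k)$ tile a sub-rectangle whose bottom side lies on $\psi(c_i)$. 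Since $B(\mu)$ is a single rectangle, the horizontal cross-section at the level of $\psi(c_i)$ must be exactly $X(B(\mu))$, and $\psi(c_i)$ is the bar achieving this cross-section; formally one shows $X(\psi(c_i))\supseteq X(B(\mu))$, and combined with $\psi(c_i)\subseteq B(\mu)$ from the Basic Tiling Lemma this forces equality.

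The cleanest way to make the inclusion $X(\psi(c_i))\supseteq X(B(\mu))$ rigorous is by an argument on vertical lines. Take any vertical line $L$ with $x$-coordinate in the interior of $X(B(\mu))$. Since the $B(\mu_j)$ tile $B(\mu)$ and each is spanned between the bars of its poles, traversing $L$ from $y(\psi(s_\mu))$ upward one passes through a stack of boxes whose poles form a directed path in $skel(\mu)$; because $skel(\mu)$ is a single path through all of $c_0,\ldots,c_k$, this stack must meet the bars $\psi(c_0),\psi(c_1),\ldots,\psi(c_k)$ in order, so $L$ crosses $\psi(c_i)$ for every $i$. As $L$ was an arbitrary interior vertical line of $B(\mu)$, this gives $X(B(\mu))\subseteq X(\psi(c_i))$, and the reverse inclusion is immediate from $\psi(c_i)\subseteq B(\mu)$. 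Finally, since $B(\mu_i)$ is spanned between $\psi(c_{i-1})$ and $\psi(c_i)$ and both of these bars now have $x$-interval equal to $X(B(\mu))$, we conclude $X(B(\mu_i))=X(B(\mu))$ as well.

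The main obstacle I anticipate is making precise the informal claim that a vertical line through $B(\mu)$ must successively meet all the pole bars $\psi(c_0),\ldots,\psi(c_k)$: this relies on the fact that the $B(\mu_j)$ tile $B(\mu)$ \emph{without gaps} (Basic Tiling Lemma, part~\eqref{lem:basic-tiling-lemma-tiling}) and that consecutive boxes share the bar of their common cut-vertex. One must rule out the possibility that some $B(\mu_j)$ is ``narrower'' than $B(\mu)$ in a way that lets a vertical line skip a cut-vertex; this is exactly what the no-gap tiling property prevents, since any vertical strip of $B(\mu)$ not covered by the boxes below $\psi(c_i)$ would have to be covered by a box not incident to $c_i$, contradicting the path structure of $skel(\mu)$. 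Once this single-line traversal argument is set up carefully, the rest is bookkeeping.
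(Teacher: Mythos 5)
Your proof is correct, but it follows a genuinely different route from the paper's. The paper's own proof is a three-line contradiction argument that goes straight back to the visibility model and the graph structure: if $\psi(c_i)$ were the first cut-vertex bar not spanning $X(B(\mu))$, there would be a visibility gap — and hence, in the bar visibility model, an \emph{edge} of $G$ — between a vertex of the $i$-th biconnected component and a vertex of a later one, bypassing $c_i$ and contradicting the fact that $c_i$ is a cut-vertex. You instead treat the Basic Tiling Lemma as a black box and argue purely geometrically: the child boxes occupy pairwise internally disjoint horizontal strips $[y(\psi(c_{j-1})),y(\psi(c_j))]$ (disjoint because $y_\psi$ increases along the directed path $c_0,\ldots,c_k$), so the no-gap covering of the rectangle $B(\mu)$ forces each point of the interior of strip $j$ to lie in $B(\mu_j)$ and in no other box; hence $X(B(\mu_j)) = X(B(\mu))$, and the spanned-between condition pushes this equality onto $X(\psi(c_i))$, with the reverse inclusion coming from $\psi(c_i)\subseteq B(\mu)$. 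Both arguments are sound; the paper's buys brevity and stays close to the combinatorial meaning of a cut-vertex, while yours is more mechanical and arguably more self-contained, since all visibility reasoning is quarantined inside the already-proven tiling lemmas. Two small points of care: the claim that each $B(\mu_i)$ is a single rectangle spanned between its poles' bars is not literally part of the Basic Tiling Lemma applied to $\mu$ — for $Q$-node children it needs the Q-Tiling Lemma together with the fact that their parent is an $S$-node, not a $P$-node (the paper states exactly this combination just before the S-Tiling Lemma); and your ``vertical line meets the pole bars in order'' step should be phrased via the disjoint-strip argument above rather than via paths in $skel(\mu)$, which is exactly the tightening you flag yourself in your final paragraph.
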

\unhackLcounter
\begin{proof}
  Suppose to the contrary, that the bar assigned to cut-vertex $c_i$ is the first one that does not span the whole interval $X(B(\mu))$.
  This creates a gap of visibility between a vertex in the $i$-th biconnected component and a vertex in one of the later components.
  This contradicts $c_i$ being a cut-vertex.
\end{proof}

\hackLcounter{lem:face-condition}
\begin{lemma}[Face Condition]\
\begin{enumerate}
\item
Let $f$ be a face in $V(\mathcal{E}^*)$ different than $t^*$, and let $v_0,v_1,\ldots,v_n$ be the right path of $f$.
There is a vertical line $L_r(f)$ that contains the left endpoints of $\psi(v_1), \ldots, \psi(v_{n-1})$ and the left sides of $B_\psi(v_0,v_1), \ldots, B_\psi(v_{n-1},v_n)$.
\item
Let $f$ be a face in $V(\mathcal{E}^*)$ different than $s^*$, and let $u_0,u_1,\ldots,u_m$ be the left path of $f$.
There is a vertical line $L_l(f)$ that contains the right endpoints of $\psi(u_1), \ldots, \psi(u_{m-1})$ and the right sides of $B_\psi(u_0,u_1), \ldots, B_\psi(u_{m-1},u_m)$.
\item
If $f$ is an inner face of $\mathcal{E}$ then $L_l(f)=L_r(f)$.
\end{enumerate}
\end{lemma}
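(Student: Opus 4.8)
The plan is to reduce all three parts to a single local statement about how the bounding boxes of the edges incident to a fixed vertex fill the strips immediately below and above its bar. Fix a vertex $w$ of $\mathcal{E}$ and list its incoming edges $e^{\mathrm{in}}_1,\ldots,e^{\mathrm{in}}_a$ from left to right according to the $x$-position of their bounding boxes, and likewise its outgoing edges $e^{\mathrm{out}}_1,\ldots,e^{\mathrm{out}}_b$. Since $\mathcal{E}$ is precisely the embedding induced by $\psi$ (vertices at bar midpoints, each edge routed through its box), this left-to-right order coincides with the rotation of $\mathcal{E}$ around $w$; in particular ``the left face of $w$ is $f$'' translates geometrically into ``$e^{\mathrm{in}}_1$ and $e^{\mathrm{out}}_1$ are the leftmost incoming and outgoing edges at $w$,'' and ``two edges are consecutive in the rotation'' translates into ``their boxes are neighbours in the above order.'' I would first record the \emph{Local Claim}: the boxes $B_\psi(e^{\mathrm{in}}_1),\ldots,B_\psi(e^{\mathrm{in}}_a)$ tile the underside of $\psi(w)$ with no gaps and flush with its ends, i.e. $l(B_\psi(e^{\mathrm{in}}_1))=l(\psi(w))$, $r(B_\psi(e^{\mathrm{in}}_a))=r(\psi(w))$, and $r(B_\psi(e^{\mathrm{in}}_j))=l(B_\psi(e^{\mathrm{in}}_{j+1}))$ for every $j$; the symmetric equalities hold for the outgoing boxes above $\psi(w)$.

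The Local Claim follows from a single gap argument. Each box is spanned between $\psi(w)$ and the other bar, so every incoming box lies within $X(\psi(w))$ just below $\psi(w)$, and since the boxes have pairwise disjoint interiors and are ordered left to right, the inequalities $l(\psi(w))\leq l(B_\psi(e^{\mathrm{in}}_1))$ and $r(B_\psi(e^{\mathrm{in}}_j))\leq l(B_\psi(e^{\mathrm{in}}_{j+1}))$ hold automatically. For the reverse, suppose some gap existed, say a point $p$ with $x$-coordinate in $X(\psi(w))$, lying just below $\psi(w)$ but to the left of $B_\psi(e^{\mathrm{in}}_1)$ or strictly between two consecutive incoming boxes. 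Looking upward from $p$ the first bar is $\psi(w)$, and since $\psi$ is rectangular there is a first bar $\psi(z)$ below $p$; hence $p$ lies in the interior of $R(\psi(z),\psi(w))$, so $(z,w)$ is an incoming edge of $w$ whose box contains $p$. This box would then sit to the left of $e^{\mathrm{in}}_1$ or inside the gap between $e^{\mathrm{in}}_j$ and $e^{\mathrm{in}}_{j+1}$, contradicting either the leftmost choice or consecutiveness. This proves the Local Claim; the outgoing case is identical with ``below/upward'' replaced by ``above/downward.''

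With the Local Claim in hand, parts 1 and 2 are immediate. On the right path $v_0,\ldots,v_p$ of a face $f\neq t^*$, each interior vertex $v_i$ has $f$ as its left face, so $(v_{i-1},v_i)$ is its leftmost incoming edge and $(v_i,v_{i+1})$ its leftmost outgoing edge; the Local Claim gives $l(B_\psi(v_{i-1},v_i))=l(\psi(v_i))=l(B_\psi(v_i,v_{i+1}))$. Chaining these equalities over $i=1,\ldots,p-1$ shows that all the left endpoints $l(\psi(v_1)),\ldots,l(\psi(v_{p-1}))$ and all the left sides $l(B_\psi(v_0,v_1)),\ldots,l(B_\psi(v_{p-1},v_p))$ coincide, which is the vertical line $L_r(f)$. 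Part 2 is the mirror statement, using the rightmost incoming/outgoing edges and the right endpoints of the bars along the left path of a face $f\neq s^*$.

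Finally, for part 3 let $f$ be an inner face with common source $v_0=u_0$, right path beginning with $(v_0,v_1)$ and left path beginning with $(u_0,u_1)$. Because $f$ is a single face lying between these two edges in the rotation around $v_0$, they are consecutive outgoing edges at $v_0$, with $(u_0,u_1)$ immediately to the left of $(v_0,v_1)$. The outgoing form of the Local Claim then yields $r(B_\psi(u_0,u_1))=l(B_\psi(v_0,v_1))$; by part 2 the left-hand side equals $L_l(f)$ and by part 1 the right-hand side equals $L_r(f)$, so $L_l(f)=L_r(f)$. I expect the only delicate point to be the bookkeeping that identifies the geometric left-to-right order of the boxes with the combinatorial rotation of $\mathcal{E}$ (and hence the notions of left/right face and of consecutive edges); once that correspondence is pinned down, the fact that an inner face collapses to a single vertical line in part 3 is just the adjacency instance of the Local Claim applied at the source of $f$.
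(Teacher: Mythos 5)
Your overall strategy closely parallels the paper's: both proofs reduce the lemma to the statement that, at each internal vertex $v_i$ of the right path, the boxes $B_\psi(v_{i-1},v_i)$ and $B_\psi(v_i,v_{i+1})$ are flush with the left endpoint of $\psi(v_i)$, and then chain these equalities (the paper obtains flushness by contradiction from the Basic Tiling Lemma, you from a visibility argument). However, your Local Claim, as stated, is false, and the step that fails is ``so $(z,w)$ is an incoming edge of $w$ whose box contains $p$.'' The visibility argument produces an edge $(z,w)$ of $G$, but the boxes you are tiling are bounding boxes of the \emph{children of $\mu$}, i.e., of edges of $skel(\mu)$; the edge $(z,w)$ corresponds to such a box only if it is an edge of $G_\mu$. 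For an \emph{inner} vertex $w$ of $skel(\mu)$ this is true: every edge of $G$ incident to an inner vertex of $G_\mu$ lies in $G_\mu$ (exactly the observation the paper makes when proving the Basic Tiling Lemma), and since $w$ is a skeleton vertex it must be a pole of the child containing $(z,w)$, so that child is indeed one of your $e^{\mathrm{in}}_j$. But for the poles $w\in\{s_\mu,t_\mu\}$ it fails: the bar $\psi(s_\mu)$ may extend horizontally beyond $B(\mu)$ (for instance when $\mu$ is a child of a $P$-node, sibling boxes sit above $\psi(s_\mu)$ next to $B(\mu)$), a gap point just above $\psi(s_\mu)$ outside $X(B(\mu))$ yields an edge of $G$ belonging to a sibling subgraph outside $G_\mu$, and your flushness conclusion $l(B_\psi(e^{\mathrm{out}}_1))=l(\psi(w))$ is simply wrong at such a pole.

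This is not cosmetic, because part 3 is precisely where you apply the claim at a vertex that can be a pole: the common origin $v_0=u_0$ of an inner face can be $s_\mu$ (and the common destination can be $t_\mu$). The repair is easy but must be made explicit. Restrict the Local Claim to inner vertices of $skel(\mu)$ -- parts 1 and 2 only ever use it there. For part 3 at a pole, note that a gap point $p$ strictly between the two consecutive outgoing boxes lies in the interior of $B(\mu)$; by the tiling property of the Basic Tiling Lemma, $p$ is covered by the box of some child $\lambda$ of $\mu$, and since each point lies in at most one set of visibility rectangles, the visibility edge $(s_\mu,z)$ at $p$ must be an edge of $G_\lambda$, forcing $\lambda$ to be an outgoing skeleton edge at $s_\mu$ whose box contains $p$ -- the desired contradiction. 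Alternatively, part 3 follows as in the paper directly from the covering property: if $L_l(f)\neq L_r(f)$, the open part of $B(\mu)$ between these two lines inside the face would be covered by no child's box.
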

\unhackLcounter
\begin{proof}
To prove~\eqref{lem:fc_right_face} we first show that for every $i=1,\ldots,n-1$, we have
$$l(\psi(v_i)) = l(B_\psi(v_{i-1}, v_{i}))\text{.}$$
By the Basic Tiling Lemma, $B_\psi(v_{i-1}, v_{i})$ is a rectangle spanned between $\psi(v_{i-1})$ and $\psi(v_{i})$.
It follows that $l(\psi(v_i)) \leq l(B_\psi(v_{i-1}, v_{i}))$.
Suppose that $l(\psi(v_i)) < l(B_\psi(v_{i-1}, v_{i}))$.
By the Basic Tiling Lemma again, there is a child $\lambda$ of $\mu$ such that the rectangle $B_\psi(\lambda)$ has its top right corner located at the intersection of the left side of $B_\psi(v_{i-1}, v_{i})$ and $\psi(v_i)$.
Clearly, $\lambda$ corresponds to an edge of $skel(\mu)$ that is in the embedding $\mathcal{E}$ between $(v_{i-1},v_{i})$ and $(v_i,v_{i+1})$ in the clockwise order around $v_i$.
However, there is no such edge in $\mathcal{E}$, a contradiction.
Similarly, for every $i=1,\ldots,n-1$, we have
$$l(\psi(v_i)) = l(B_\psi(v_{i}, v_{i+1}))\text{.}$$
It follows that the left sides of the bounding boxes $B_\psi(v_0,v_1), \ldots, B_\psi(v_{n-1},v_n)$ and the left endpoints of $\psi(v_1),\ldots,\psi(v_{n-1})$
are aligned to the same vertical line $L_r(f)$.

The proof of~\eqref{lem:fc_left_face} is analogous.
Property~\eqref{lem:fc_face} is an immediate consequence of the Basic Tiling Lemma.
\end{proof}

Figure ~\ref{fig:large_tiling_r} shows a possible tiling of an $R$-node and illustrates the Face Condition Lemma.

\begin{figure*}
  \begin{tikzpicture}[>=latex]
    \definecolor{gray-1}{gray}{0.95}
    \definecolor{gray-2}{gray}{0.80}
    \definecolor{gray-3}{gray}{0.65}
    \definecolor{gray-4}{gray}{0.50}
    \begin{scope}[xscale=0.43, yscale=0.5]
      \begin{tiny} {
        \thinmuskip=0.5mu
        \medmuskip=0.5mu plus 0.5mu minus 0.5mu
        \thickmuskip=0.5mu plus 0.5mu minus 0.5mu
        
        \begin{scope}[shift={(18,0.5)}]
          \begin{scope}[scale=0.7]
            \node[] (ss) at (1.5,10) {$s^*$};
            \node[] (ts) at (13.5,10) {$t^*$};
            \node[] (s) at (7,0) {$s$};
            \node[] (5) at (3,8) {$5$};
            \node[] (10) at (7,8) {$10$};
            \node[] (13) at (12,12) {$13$};
            \node[] (14) at (7,16) {$14$};
            \node[] (t) at (7,20) {$t$};

            \node[] at (4.5,14.5) {$f_1$};
            \node[] at (5,8) {$f_2$};
            \node[] at (9,12) {$f_3$};
            \node[] at (8.5,6.5) {$f_4$};
            \node[] at (9,16.5) {$f_5$};

            \draw (s) edge[->] (5);
            \draw (s) edge[->] (10);
            \draw (s) edge[->] (13);
            \draw (5) edge[->, bend left=30] (t);
            \draw (5) edge[->] (14);
            \draw (10) edge[->] (13);
            \draw (10) edge[->] (14);
            \draw (13) edge[->] (14);
            \draw (13) edge[->, bend right=30] (t);
            \draw (14) edge[->] (t);
          \end{scope}
        \end{scope}

        \begin{scope}[shift={(0,0)}]
          \begin{scope}[xscale=1.5, yscale=1.5]
            \draw[fill=gray-1, color=gray-1] (0,0) rectangle (6,6);
            \draw[fill=gray-2, color=gray-2] (6,0) rectangle (8,7);
            \draw[fill=gray-4, color=gray-4] (0,6) rectangle (1,10);
            \draw[fill=gray-4, color=gray-4] (6,7) rectangle (7,9);
            \draw[fill=gray-1, color=gray-1] (7,7) rectangle (8,8);
            \draw[fill=gray-4, color=gray-4] (8,0) rectangle (11,8);
            \draw[fill=gray-3, color=gray-3] (7,8) rectangle (10,9);
            \draw[fill=gray-3, color=gray-3] (1,6) rectangle (6,9);
            \draw[fill=gray-1, color=gray-1] (1,9) rectangle (10,10);
            \draw[fill=gray-2, color=gray-2] (10,8) rectangle (11,10);

            \draw[very thick, densely dashed, -] (0,0)--(0,10);
            \node[rectangle,fill=white,draw] at (0,4) {$\chi(s^*)$};
            \draw[very thick, densely dashed, -] (1,6)--(1,10);
            \node[rectangle,fill=white,draw] at (1,7) {$\chi(f_1)$};
            \draw[very thick, densely dashed, -] (6,0)--(6,9);
            \node[rectangle,fill=white,draw] at (6,2.5) {$\chi(f_2)$};
            \draw[very thick, densely dashed, -] (7,7)--(7,9);
            \node[rectangle,fill=white,draw] at (7,7.5) {$\chi(f_3)$};
            \draw[very thick, densely dashed, -] (8,0)--(8,8);
            \node[rectangle,fill=white,draw] at (8,5.5) {$\chi(f_4)$};
            \draw[very thick, densely dashed, -] (10,8)--(10,10);
            \node[rectangle,fill=white,draw] at (10,9.5) {$\chi(f_5)$};
            \draw[very thick, densely dashed, -] (11,0)--(11,10);
            \node[rectangle,fill=white,draw] at (11,4) {$\chi(t^*)$};

            \node at (5.5,10.27) {$\psi(t)$};
            \draw[thick, (-)] (0,10)--(11,10);
            \node at (5.5,9.27) {$\psi(14)$};
            \draw[thick, (-)] (1,9)--(10,9);
            \node at (9,8.27) {$\psi(13)$};
            \draw[thick, (-)] (7,8)--(11,8);
            \draw[dotted, (-)] (1,8)--(6,8);
            \node at (7,6.73) {$\psi(10)$};
            \draw[thick, (-)] (6,7)--(8,7);
            \node at (3,6.27) {$\psi(5)$};
            \draw[thick, (-)] (0,6)--(6,6);
            \draw[dotted, (-)] (9,5)--(11,5);
            \draw[dotted, (-)] (7,4)--(8,4);
            \draw[dotted, (-)] (6,4)--(7,4);
            \draw[dotted, (-)] (4,4)--(5,4);
            \draw[dotted, (-)] (3,4)--(4,4);
            \draw[dotted, (-)] (8,3)--(10,3);
            \draw[dotted, (-)] (0,2)--(2,2);
            \draw[dotted, (-)] (6,1)--(8,1);
            \draw[dotted, (-)] (3,1)--(5,1);
            \node at (5.5,0.27) {$\psi(s)$};
            \draw[thick, (-)] (0,0)--(11,0);
          \end{scope}
        \end{scope}
        }
      \end{tiny}
    \end{scope}
  \end{tikzpicture}
  \caption{The tiling of the $R$-node $\mu$, the embedding $\mathcal{E}$ of $skel(\mu)$, splitting lines (dashed) for faces of $\mathcal{E}$, and the $st$-valuation $\chi$ of $\mathcal{E}^*$.
  }
  \label{fig:large_tiling_r}
\end{figure*}
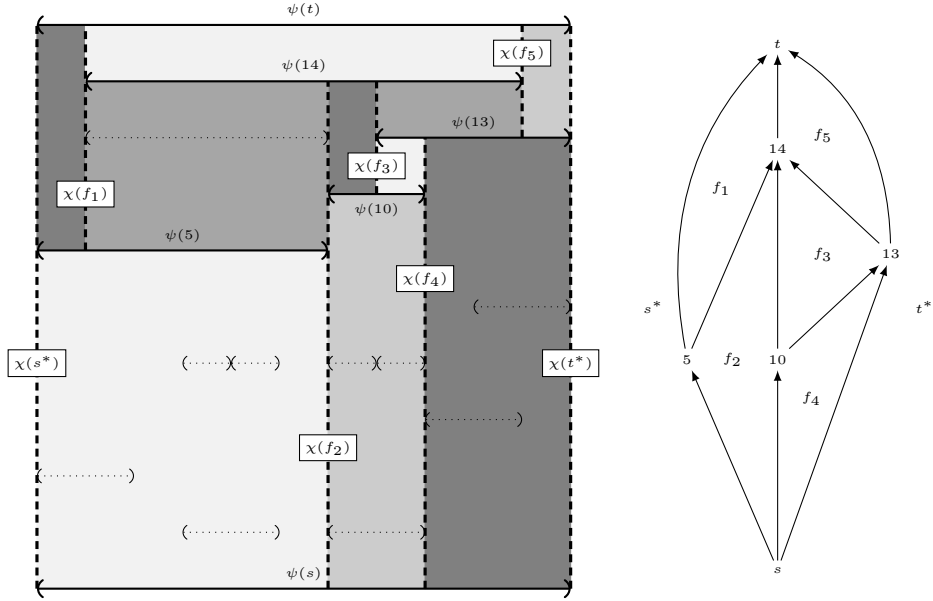

\subsection{Algorithm}
\label{app:rectangular_algorithm_slow}

\hackLcounter{lem:y-st-valuation}
\begin{lemma}
Let $\psi$ be a rectangular bar visibility representation of $G$ that extends $\psi'$.
\begin{enumerate}
  \item The function $y_{\psi}$ is an $st$-valuation of $G$ that extends $y_{\psi'}$,
  \item If $y$ is an $st$-valuation of $G$ that extends $y_{\psi'}$, then a function $\phi$ that sends every vertex $v$ of $G$ into a bar so that $y_\phi(v) = y(v)$, $l_\phi(v) = l_{\psi}(v)$, $r_\phi(v) = r_{\psi}(v)$ is also a rectangular bar visibility representation of $G$ that extends $\psi'$.
\end{enumerate}
\end{lemma}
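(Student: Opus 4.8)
The first statement is immediate from the definitions, so I would settle it first. Since $\psi$ is a bar visibility representation of the \emph{digraph} $G$, its defining property guarantees that for every directed edge $(u,v)$ of $G$ the bar $\psi(u)$ lies strictly below $\psi(v)$, \ie{} $y_\psi(u) < y_\psi(v)$; this is exactly the condition for $y_\psi$ to be an $st$-valuation of $G$. Moreover, as $\psi$ extends $\psi'$ we have $\psi(v) = \psi'(v)$ for every $v \in V'$, whence $y_\psi(v) = y_{\psi'}(v)$, so $y_\psi$ extends $y_{\psi'}$.

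For the second statement my plan is to exploit the correspondence of Tamassia and Tollis~\cite{TamassiaT86} between rectangular bar visibility representations of a plane $st$-graph and triples $(\mathcal{E},\xi,\chi)$, where $\mathcal{E}$ is an $st$-embedding, $\xi$ is an $st$-valuation of $\mathcal{E}$, and $\chi$ is an $st$-valuation of the dual $\mathcal{E}^*$; the representation is recovered by $y(v)=\xi(v)$, $l(v)=\chi(\text{left face of }v)$, and $r(v)=\chi(\text{right face of }v)$. First I would read off from $\psi$ the embedding $\mathcal{E}$ it induces on $G$, together with $\xi=y_\psi$ and the dual valuation $\chi$ determined by the vertical splitting lines of the faces (this is precisely the content of the Face Condition), so that $l_\psi(v)=\chi(\text{left face of }v)$ and $r_\psi(v)=\chi(\text{right face of }v)$.

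The key step is then to observe that the given $y$ is again an $st$-valuation of $\mathcal{E}$, since this property depends only on the orientation of the edges and not on the embedding. Hence $(\mathcal{E},y,\chi)$ is a legitimate triple, and the correspondence produces a rectangular bar visibility representation whose $y$-coordinate of $v$ equals $y(v)$ and whose left and right endpoints equal $\chi(\text{left face of }v)=l_\psi(v)$ and $\chi(\text{right face of }v)=r_\psi(v)$; this is exactly the function $\phi$ of the statement, so $\phi$ is a valid rectangular bar visibility representation of $G$. Finally I would check that $\phi$ extends $\psi'$: for $v \in V'$ we have $y_\phi(v)=y(v)=y_{\psi'}(v)$ because $y$ extends $y_{\psi'}$, while $l_\phi(v)=l_\psi(v)$ and $r_\phi(v)=r_\psi(v)$ agree with $\psi'$ because $\psi$ does.

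The one delicate point I anticipate is the disjointness of the bars of $\phi$ when $y$ has ties: if $y(u)=y(v)$ while $X(\psi(u))$ and $X(\psi(v))$ overlap, the two bars would sit at the same height and intersect. To rule this out I would show that any two horizontally overlapping bars are joined by a directed path in $G$: following a vertical line through their common $x$-range, in a rectangular representation the bars it meets form a chain from $\psi(s)$ to $\psi(t)$ whose consecutive bars are adjacent (each consecutive pair having a visibility gap), and this chain contains both $u$ and $v$. Consequently every $st$-valuation, in particular $y$, orders them strictly, so ties never arise among horizontally overlapping vertices and the bars of $\phi$ remain pairwise disjoint. The very same observation shows that the relative vertical order of any two overlapping bars is unchanged in passing from $y_\psi$ to $y$, which in fact yields a self-contained verification that $\phi$ realises precisely the edges of $G$ without invoking the correspondence at all.
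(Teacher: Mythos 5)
Your proposal is really two proofs in one, and it is your final paragraph --- not the correspondence argument --- that coincides with the paper's proof. Part (1) is handled identically in both. For part (2), the paper argues exactly as in your closing observation: if the interiors of $X(\psi(u))$ and $X(\psi(v))$ intersect, then disjointness of the bars gives $y_\psi(u)\neq y_\psi(v)$, the horizontal overlap forces a directed path between $u$ and $v$, and hence every $st$-valuation (in particular $y$) orders $u$ and $v$ the same way as $y_\psi$ does; consequently $(x_1,x_2)\times(y_\psi(u),y_\psi(v))$ is a visibility gap between $\psi(u)$ and $\psi(v)$ if and only if $(x_1,x_2)\times(y(u),y(v))$ is one between $\phi(u)$ and $\phi(v)$. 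In fact the paper merely \emph{asserts} the overlap-implies-path step, whereas your chain-along-a-vertical-line argument proves it --- just choose the line's $x$-coordinate to avoid the finitely many bar endpoints, so that consecutive bars of the chain have genuine $\epsi$-wide visibility gaps rather than mere rays.

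By contrast, your primary route through triples $(\mathcal{E},\xi,\chi)$ rests on a direction of the correspondence that you cannot simply cite: you need that \emph{every} rectangular representation of the whole graph $G$ has its bar endpoints aligned on face splitting lines, so that a dual $st$-valuation $\chi$ with $l_\psi(v)=\chi(\text{left face of }v)$ and $r_\psi(v)=\chi(\text{right face of }v)$ can be read off from $\psi$. The paper's Face Condition Lemma states this only for the skeleton of an $R$-node and the bounding boxes of its children (and the full bijection appears only in the R-Tiling Lemma, again for $R$-nodes), while Tamassia and Tollis~\cite{TamassiaT86} supply the opposite direction (triple $\Rightarrow$ representation). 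The whole-graph alignment property is true and provable by an analogous tiling argument, but as written that step is a gap --- one that your self-contained last paragraph renders harmless, since it verifies directly that $\phi$ realises precisely the edges of $G$ without any appeal to the correspondence.
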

\unhackLcounter
\begin{proof}
  The function $y_{\psi}$ extends $y_{\psi'}$, because $\psi$ extends $\psi'$.
  It is an $st$-valuation of $G$ because for an edge $(u,v)$ of $G$, the bar of $u$ is below the bar of $v$.

  For the proof of~\eqref{claim:st-valuation-any}, observe that for each vertex $u$ of $G$ we have $X(\phi(u))=X(\psi(u))$.
  We claim that for any two vertices $u$, $v$ of $G$ such that the interior of $X(\psi(u)) \cap X(\psi(v))$ is non-empty we have that  $y_\psi(u) < y_\psi(v)$ iff $y(u) < y(v)$.
  For the proof of this claim, let $u$ and $v$ be vertices of $G$ such that the interior of $X(\psi(u)) \cap X(\psi(v))$ is non-empty.
  From the fact that $\psi(V(G))$ is a collection of pairwise disjoint bars, it follows that $y_\psi(u) \neq y_{\psi}(v)$.
  Without loss of generality assume that $y_\psi(u) < y_{\psi}(v)$.
  The non-empty interior of $X(\psi(u)) \cap X(\psi(v))$ means that there is a path from $u$ to $v$ in $G$.
  Hence $y(u) < y(v)$ as $y$ is an $st$-valuation of $G$.

  As a consequence we have that $(x_1,x_2) \times (y_\psi(u),y_\psi(v))$ is a visibility gap between bars $\psi(u)$ and $\psi(v)$ in representation $\psi$ iff $(x_1,x_2) \times (y(u),y(v))$ is a visibility gap between $\phi(u)$ and $\phi(v)$ and $\phi$ is a rectangular bar visibility representation of $G$.
\end{proof}

\hackLcounter{lem:stretching}
\begin{lemma}[Stretching Lemma]
Let $\mu$ be an inner node whose core is non-empty.
 If $\mu$ has an LL-representation, then $\mu$ has an $[x,x']$-representation for any $x < l(C(\mu))$ and any $x' > r(C(\mu))$.
 If $\mu$ has an LF-representation, then $\mu$ has an $[x,x']$-representation for any $x < l(C(\mu))$ and $x' = r(C(\mu))$.
 If $\mu$ has an FL-representation, then $\mu$ has an $[x,x']$-representation for $x = l(C(\mu))$ and any $x' > r(C(\mu))$.
\end{lemma}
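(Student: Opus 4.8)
The plan is to prove all three implications by a single, uniform device: a piecewise-linear ``stretch'' of the $x$-axis that is the identity on the horizontal span of the core and that rescales only the loose region(s). I will carry out the \emph{LL} case in full; the \emph{LF} and \emph{FL} cases are the same argument with one side held fixed.

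First I would unpack the hypothesis. An \emph{LL}-representation is, by definition, an $[a,a']$-representation $\phi_0$ of $\mu$ with $a < l(C(\mu))$ and $a' > r(C(\mu))$; in particular $X(\phi_0(s_\mu)) = X(\phi_0(t_\mu)) = [a,a']$, and, since $\phi_0$ is rectangular, every bar of $\phi_0$ lies horizontally inside $[a,a']$. Given a target $x < l(C(\mu))$ and $x' > r(C(\mu))$, I define a continuous strictly increasing $h$ on $[a,a']$ that is linear on $[a, l(C(\mu))]$ onto $[x, l(C(\mu))]$, the identity on $[l(C(\mu)), r(C(\mu))]$, and linear on $[r(C(\mu)), a']$ onto $[r(C(\mu)), x']$. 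This is well defined because every bar is non-degenerate, so $l(C(\mu)) < r(C(\mu))$, and because the four endpoint inequalities hold. I extend $h$ to a strictly increasing homeomorphism of $\mathbb{R}$ (the extension is immaterial, as all bars lie within $[a,a']$) and define $\phi$ as the image of $\phi_0$ under the planar map $(X,Y)\mapsto(h(X),Y)$, i.e. I replace each bar $[l,r]\times\{y\}$ by $[h(l),h(r)]\times\{y\}$ and leave all $y$-coordinates untouched.

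Next I would verify the three required properties. \emph{Extension:} every fixed bar $\psi'(u)$ with $u \in V'(\mu)$ satisfies $X(\psi'(u)) \subseteq X(C(\mu)) = [l(C(\mu)), r(C(\mu))]$, on which $h$ is the identity, so $\phi(u) = \phi_0(u) = \psi'(u)$ and $\phi$ extends $\phi'$. \emph{Representation:} the map $(X,Y)\mapsto(h(X),Y)$ is a homeomorphism of the plane that sends horizontal bars to horizontal bars and preserves vertical order, so it carries each visibility gap of $\phi_0$ to an axis-aligned open rectangle spanned between the corresponding image bars; since $h$ is a bijection, a bar meets the image rectangle iff its preimage meets the original gap. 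Thus the visibility gaps of $\phi_0$ and $\phi$ correspond bijectively, and $\phi$ realizes exactly the edges of $G_\mu$. \emph{Shape:} the $y$-coordinates are unchanged, so $s_\mu$ and $t_\mu$ remain the unique bottom and top bars; their $x$-spans become $h([a,a']) = [x,x']$; and every other bar, lying in $[a,a']$, is mapped into $[x,x']$. Hence $\phi$ is a rectangular $[x,x']$-representation of $\mu$.

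For the \emph{LF} (resp.\ \emph{FL}) case the source representation already has its right (resp.\ left) side at $r(C(\mu))$ (resp.\ $l(C(\mu))$), so I reuse the same $h$ but omit the linear piece on the fixed side, keeping $h$ the identity there; the identical verification applies. The only point requiring genuine care is the claim that the stretch creates no new visibilities and destroys none, and this is exactly the crux: it reduces to the observation that $(X,Y)\mapsto(h(X),Y)$ with $h$ strictly increasing is a planar homeomorphism acting trivially on the $y$-coordinate, so visibility gaps are preserved in both directions. Everything else is routine bookkeeping on the piecewise-linear map $h$.
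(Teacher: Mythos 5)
Your proof is correct and takes essentially the same approach as the paper: the paper's own proof is a two-line sketch that says to ``appropriately stretch'' the parts of the drawing to the left of $l(C(\mu))$ and to the right of $r(C(\mu))$, which is precisely the piecewise-linear horizontal map $h$ you construct. Your write-up simply supplies the details (identity on the core span, bijectivity of $(X,Y)\mapsto(h(X),Y)$ preserving visibility gaps) that the paper leaves implicit.
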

\unhackLcounter
\begin{proof}
  Let $x_l = l(C(\mu))$.
  Suppose that $\phi$ is some left-loose $[x_1,x']$-representation of $\mu$ with $x_1 < x_l$.
  For any $x_2 < x_l$ we can obtain an $[x_2,x']$-representation of $\mu$ by appropriately stretching the part of the drawing of $\phi$ that is to the left of $x_l$.
  If the representation is right-loose, then we can arbitrarily stretch the part of the drawing that is to the right of $r(C(\mu))$.
\end{proof}

\subsection{Pseudocode}
\label{app:pseudocode}

Section~\ref{sec:rectangular_algorithm_slow} presents an algorithm that tests if there exists a rectangular bar visibility representation that extends a partial representation of an $st$-graph $G$.
The algorithm divides into a few cases for different types of nodes of the $SPQR$-decomposition of $G$.
Algorithm~\ref{alg:s-node} presents pseudocode of the procedure for an $S$-node.
Algorithm~\ref{alg:p-node} presents pseudocode of the procedure for a $P$-node.

\begin{algorithm}
  \caption{Algorithm for series node}\label{alg:s-node}
  \begin{algorithmic}[1]
    \ForEach{cut-vertex $c$ in $G_\mu$}
      \If{$c \in V'(\mu)$ {\bf and} $X(\psi'(c)) \neq [x,x']$}
        \State {\bf return} {\it False}
        \Comment{fixed cut-vertex does not span $[x,x']$}
      \EndIf
    \EndFor
    \For{$i = 1,\ldots,k'$}
      \If{$l(C(\lambda_i)) > x$ {\bf and} $r(C(\lambda_i)) < x'$ {\bf and} $LL \notin \theta(\lambda_i)$}
        \State {\bf return} {\it False}
        \Comment{$\lambda_i$ must stretch on both sides}
      \EndIf
      \If{$l(C(\lambda_i)) > x$ {\bf and} $r(C(\lambda_i)) = x'$ {\bf and} $LF \notin \theta(\lambda_i)$}
        \State {\bf return} {\it False}
        \Comment{$\lambda_i$ must stretch only on the left side}
      \EndIf
      \If{$l(C(\lambda_i)) = x$ {\bf and} $r(C(\lambda_i)) < x'$ {\bf and} $FL \notin \theta(\lambda_i)$}
        \State {\bf return} {\it False}
        \Comment{$\lambda_i$ must stretch only on the right side}
      \EndIf
      \If{$l(C(\lambda_i)) = x$ {\bf and} $r(C(\lambda_i)) = x'$ {\bf and} $FF \notin \theta(\lambda_i)$}
        \State {\bf return} {\it False}
        \Comment{$\lambda_i$ must stretch on neither side}
      \EndIf
    \EndFor
    \State {\bf return} {\it True}
  \end{algorithmic}
\end{algorithm}

\begin{algorithm}
  \caption{Algorithm for parallel node}\label{alg:p-node}
  \begin{algorithmic}[1]
    \State\label{alg:p_sort}{\bf sort} $\lambda_i$'s by the value $l(C(\lambda_i))$
    \For{$i=1,\ldots,k'$}\label{alg:p_init_b}
      \State $l_i \gets l(C(\lambda_i))$, $r_i \gets r(C(\lambda_i))$
      \Comment{left- and right- endpoints of cores}
    \EndFor
    \State $r_0 \gets x$, $l_{k'+1} \gets x'$
    \State\label{alg:p_init_e}$Q \gets \emptyset$ \Comment{set of closed gaps}
    \For{$i=0,\ldots,k'$}\label{alg:p_check_b}
      \SimpleIf{$r_{i} > l_{i+1}$}{{\bf return} {\it False}} \Comment{cores overlap}
      \If{$r_{i} = l_{i+1}$}
        \State $Q \gets Q \cup \{i\}$
        \Comment{$\lambda_{i}$ and $\lambda_{i+1}$ touch}
        \SimpleIf{$i>0$}{$\theta(\lambda_{i}) \gets \theta(\lambda_{i}) \setminus \{FL,LL\}$}
          \Comment Use right-fixed rep.\ of $\lambda_{i}$
        \SimpleIf{$i<k'$}{$\theta(\lambda_{i+1}) \gets \theta(\lambda_{i+1}) \setminus \{LF,LL\}$}
          \Comment Use left-fixed rep.\ of $\lambda_{i+1}$
      \EndIf
    \EndFor\label{alg:p_check_e}
    \If{$\left(k > k' \text{ {\bf or} } (s_\mu,t_\mu) \in E(G_\mu)\right)$ {\bf and} $|Q| = k'+1$}\label{alg:p_onegap}
      \State {\bf return} {\it False}
      \Comment{there is no gap}
    \EndIf
    \For{$i=1,\ldots,k'$}\label{alg:p_greedy_b}
      \SimpleIf{$\theta(\lambda_i) = \emptyset$}{{\bf return} {\it False}}
      \SimpleIf{$LL \in \theta(\lambda_i)$}{$Q \gets Q \cup \{i-1,i\}$}
        \Comment{close both gaps}
      \SimpleElsIf{$i-1 \notin Q$ {\bf and} $LF \in \theta(\lambda_i)$}{$Q \gets Q \cup \{i-1\}$}
        \Comment{close left gap}
      \SimpleElsIf{$FL \in \theta(\lambda_i)$}{$Q \gets Q \cup \{i\}$}
        \Comment{close right gap}
    \EndFor\label{alg:p_greedy_e}
    \State {\bf return} $(s_\mu,t_\mu) \in E(G_\mu)$ {\bf or} $k-k' >= k'+1-|Q|$
    \Comment{can close all gaps}
  \end{algorithmic}
\end{algorithm}

\subsection{Algorithm for $R$-node}
\label{app:rnode}

Section~\ref{sec:rectangular_algorithm_slow} presents an algorithm that tests if there exists a rectangular bar visibility representation that extends a partial representation of an $st$-graph $G$.
In the following, we present a detailed discussion of the algorithm for an $R$-node in the decomposition.

\subcase{Case R. $\mu$ is an $R$-node.}
By the R-Tiling Lemma, the set of possible tilings of $B(\mu)$ by $B(\mu_1), \ldots, B(\mu_k)$
is in correspondence with the triples $(\mathcal{E}, \xi, \chi)$, where $\mathcal{E}$ is a planar embedding of $skel(\mu)$, $\xi$ is an $st$-valuation of $\mathcal{E}$, and $\chi$ is an $st$-valuation of $\mathcal{E}^*$.
To find an appropriate tiling of $B(\mu)$ (that yields an $[x,x']$-representation of $\mu$) we search through the set of such triples.
Since $\mu$ is a rigid node, there are only two planar embeddings of $skel(\mu)$ and we consider both of them separately.
Let $\mathcal{E}$ be one of these planar embeddings.
Since the $y$-coordinate for each vertex of $G$ is already fixed, the $st$-valuation $\xi$ is given by the $y$-coordinates of the vertices from $skel(\mu)$.
Now, it remains to find an $st$-valuation $\chi$ of $\mathcal{E}^*$, \ie{}, to determine the $x$-coordinate of the splitting line for every face.
First, for every face $f$ in $V(\mathcal{E}^*)$ we compute an initial set of possible placements for the splitting line of $f$ by taking into account the partial representation $\phi'$.
If $f$ is an inner face of $\mathcal{E}$, then we have the following restrictions on $\chi(f)$:
\begin{itemize}
\item If $u$ is a fixed vertex on the left path of $f$, then $\chi(f) = r(u)$.
\item If $u$ is a fixed vertex on the right path of $f$, then $\chi(f) = l(u)$.
\item If $\lambda$ is a child of $\mu$ whose core is non-empty, and the edge corresponding to $\lambda$ is on the left path of $f$, then $\chi(f) \geq r(C(\lambda))$.
\item If $\lambda$ is a child of $\mu$ whose core is non-empty, and the edge corresponding to $\lambda$ is on the right path of $f$, then $\chi(f) \leq l(C(\lambda))$.
\end{itemize}
We impose analogous conditions for the faces $s^*$ and $t^*$.

Let $\mathcal{X}'(f)$ be a set of all $\chi(f)$ in $[x,x']$ that satisfy all the above conditions.
If $\mathcal{X}'(f) = \emptyset$ for some face $f$ in $V(\mathcal{E}^*)$ or $x \notin \mathcal{X}'(s^*)$ or $x' \notin \mathcal{X}'(t^*)$ then there is no $[x,x']$-representation of $G_\mu$.
Since we are looking for an $[x,x']$-representation of $G_\mu$, we set $\mathcal{X}'(s^*) = [x,x]$ and $\mathcal{X}'(t^*) = [x',x']$ as the splitting line for $s^*$ ($t^*$) must be set to $x$ ($x'$).

Now, we further restrict the possible values for $\chi(f)$ by taking into account the fact that $\chi$ needs to be an $st$-valuation of $\mathcal{E}^*$.
For every two faces $f$ and $g$ in $V(\mathcal{E}^*)$:
\begin{itemize}
  \item If $g$ is to the left of $f$, then $\chi(f) > l(\mathcal{X}'(g))$.
  \item If $g$ is to the right of $f$, then $\chi(f) < r(\mathcal{X}'(g))$.
\end{itemize}
Let $\mathcal{X}(f)$ be the set of all $\chi(f)$ such that $\chi(f) \in \mathcal{X}'(f)$ and that satisfy the above conditions.
If $\mathcal{X}(f)$ is empty for some face $f$ in $V(\mathcal{E}^*)$, then there is no $[x,x']$-representation of $G_\mu$.
We assume that $\mathcal{X}(f)$ is non-empty for every $f$ in $V(\mathcal{E}^*)$.
One can easily verify the following.
\begin{claim} \
\label{claim:X_sets}
  For every face $f$ in $V(\mathcal{E}^*)$, $\mathcal{X}(f)$ is an interval in $[x,x']$.
  For every two faces $f$ and $g$ such that $f$ is to the left of $g$, we have that:
  \begin{itemize}
    \item $l(\mathcal{X}(f)) \leq l(\mathcal{X}(g))$ and if $l(\mathcal{X}(f)) = l(\mathcal{X}(g))$ then $\mathcal{X}(g)$ is open from the left side.
    \item $r(\mathcal{X}(f)) \leq r(\mathcal{X}(g))$ and if $r(\mathcal{X}(f)) = r(\mathcal{X}(g))$ then $\mathcal{X}(f)$ is open from the right side.
  \end{itemize}
\end{claim}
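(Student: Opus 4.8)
The plan is to reduce everything to a single observation: each set $\mathcal{X}(f)$ is the intersection of the closed interval $\mathcal{X}'(f)$ with one open lower bound and one open upper bound coming from the $st$-valuation constraints. First I would record that $\mathcal{X}'(f)$ is itself a closed subinterval of $[x,x']$ (possibly empty or a single point): the fixed-vertex conditions are equalities $\chi(f)=r(u)$ or $\chi(f)=l(u)$, while the core conditions are the one-sided bounds $\chi(f)\geq r(C(\lambda))$ and $\chi(f)\leq l(C(\lambda))$, and an intersection of $[x,x']$ with equalities and half-lines is a closed interval; for $s^*$ and $t^*$ it is the point $[x,x]$ resp.\ $[x',x']$ by fiat. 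Writing $L_f=\sup\{l(\mathcal{X}'(h)):h\text{ is to the left of }f\}$ and $R_f=\inf\{r(\mathcal{X}'(h)):h\text{ is to the right of }f\}$, the two families of $st$-valuation constraints say exactly that $\chi(f)>L_f$ and $\chi(f)<R_f$, so $\mathcal{X}(f)=\mathcal{X}'(f)\cap(L_f,R_f)$. Since a closed interval meets an open interval in an interval, the first assertion of the claim is immediate.

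For the monotonicity statements I would use that ``to the left of'' is a strict partial order, as it is reachability in the acyclic dual $\mathcal{E}^*$; in particular it is transitive. Fix $f$ to the left of $g$, and recall $\mathcal{X}(f)$ is non-empty by assumption, so its left endpoint is the value $\max(l(\mathcal{X}'(f)),L_f)$. Every $h$ to the left of $f$ is also to the left of $g$, so taking the supremum over a larger set gives $L_f\leq L_g$; moreover $f$ itself lies to the left of $g$, whence $l(\mathcal{X}'(f))\leq L_g$. These two inequalities combine to $l(\mathcal{X}(f))\leq L_g\leq \max(l(\mathcal{X}'(g)),L_g)=l(\mathcal{X}(g))$, which is the desired inequality on left endpoints. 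The inequality on right endpoints is the mirror image: $\{h:h\text{ right of }g\}\subseteq\{h:h\text{ right of }f\}$ gives $R_f\leq R_g$, and $g$ being to the right of $f$ gives $R_f\leq r(\mathcal{X}'(g))$, so $r(\mathcal{X}(f))\leq R_f\leq \min(r(\mathcal{X}'(g)),R_g)=r(\mathcal{X}(g))$.

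Finally I would deal with the ``open endpoint'' refinements. If $l(\mathcal{X}(f))=l(\mathcal{X}(g))$ then the chain $l(\mathcal{X}(f))\leq L_g\leq l(\mathcal{X}(g))$ collapses, forcing $l(\mathcal{X}(g))=L_g$; but the constraint defining $L_g$ is the strict inequality $\chi(g)>L_g$, so the value $L_g$ is excluded from $\mathcal{X}(g)=\mathcal{X}'(g)\cap(L_g,R_g)$ and hence $\mathcal{X}(g)$ is open from the left (this holds whether the maximum is realised by $L_g$ alone or jointly with $l(\mathcal{X}'(g))$). Symmetrically, equality of right endpoints forces $r(\mathcal{X}(f))=R_f$, and the strict constraint $\chi(f)<R_f$ makes $\mathcal{X}(f)$ open from the right.

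I expect the only real care needed is the bookkeeping in this last paragraph: tracking, for each endpoint of $\mathcal{X}(f)$, whether it is inherited (closed) from $\mathcal{X}'(f)$ or imposed (open) by the valuation bound $L_f$ or $R_f$, and checking that the boundary cases where $\mathcal{X}'(f)$ and the open bound coincide still yield an open endpoint. Everything else is a direct consequence of transitivity of the left-of order together with the monotonicity of suprema and infima under set inclusion.
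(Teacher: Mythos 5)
Your proof is correct. The paper itself offers no argument for this claim (it is introduced only with ``One can easily verify the following''), and your decomposition $\mathcal{X}(f)=\mathcal{X}'(f)\cap(L_f,R_f)$ combined with transitivity of the left-of relation is exactly the straightforward verification the authors intended; the only implicit point worth making explicit is that $L_f$ and $R_f$ are maxima/minima rather than genuine suprema/infima (since $\mathcal{E}^*$ has finitely many faces), which is what justifies rewriting the family of strict constraints as the single conditions $\chi(f)>L_f$ and $\chi(f)<R_f$.
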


A face $f$ in $V(\mathcal{E}^*)$ is \emph{determined} if $\mathcal{X}(f)$ is a singleton (\ie{}, the location of the splitting line of $f$ is already fixed); otherwise $f$ is \emph{undetermined}.

In what follows, we construct a 2-CNF formula $\Phi$ that is satisfiable if and only if an $[x,x']$-representation of $\mu$ exists.

\noindent {\bf Variables of $\Phi$.}
For every child $\lambda$ of $\mu$ whose core is non-empty, we introduce two boolean variables: $l_\lambda$ and $r_\lambda$, which have the following interpretation:
\begin{itemize}
  \item The positive (negative) value of variable $l_\lambda$ means that we use a left-loose (left-fixed) representation of node $\lambda$.
\item The positive (negative) value of variable $r_\lambda$ means that we use a right-loose (right-fixed) representation of node $\lambda$.
\end{itemize}
For every inner face $f$ of $\mathcal{E}$ we introduce two boolean variables: $l_f$ and $r_f$, which have the following interpretation:
\begin{itemize}
\item The variable $l_f$ is positive when the splitting line of $f$ is set strictly to the right of $l(\mathcal{X}(f))$.
It is negative when $\chi(f) = l(\mathcal{X}(f))$.
\item The variable $r_f$ is positive when the splitting line of $f$ is set strictly to the left of $r(\mathcal{X}(f))$.
It is negative when $\chi(f) = r(\mathcal{X}(f))$.
\end{itemize}
In particular, if $\mathcal{X}(f)$ is open from the left (right) then $l_f$ ($r_f$) is positive.
When $f$ is determined then $l_f$ and $r_f$ are negative.
For the left outer face $s^*$ and the right outer face $t^*$ we introduce variables $r_{s^*}$ and $l_{t^*}$.
Since $s^*$ and $t^*$ are determined, the corresponding variables are always set to negative.

\noindent {\bf Clauses of $\Phi$.}
We split the clauses of $\Phi$ into four types.

\subcase{Type I.}
These clauses ensure that no infeasible representation of a child of $\mu$ is used.
\begin{itemize}
\item For every child $\lambda$ with non-empty core and for every type of representation of $\lambda$ which is not feasible, we add a clause that forbids using representation of this type.
For example, when there is no LF-representation of $\lambda$ we add a clause $\lnot (l_\lambda \wedge \lnot r_\lambda)$ (\ie{}, $\lnot l_\lambda \vee r_\lambda$) to $\Phi$.
\end{itemize}

\subcase{Type II.}
These clauses enforce the meaning of variables $l_f$ and $r_f$ for every face $f$ in $V(\mathcal{E}^*)$.
\begin{itemize}
\item For every determined inner face $f$, we add the clauses $(\lnot l_f)$ and $(\lnot r_f)$, and for $s^*$ and $t^*$ we add the clauses
$(\lnot r_{s^*})$ and $(\lnot l_{t^*})$.

\item For every undetermined inner face $f$, we add the clause $(l_f)$ if $\mathcal{X}(f)$ is open from the left side, and $(r_f)$ if $\mathcal{X}(f)$ is open from the right side.
If $\mathcal{X}(f)$ is closed from the left and closed from the right side, we add the clause $(l_f \vee r_f)$ as the splitting line of $f$ cannot be placed in both endpoints of $\mathcal{X}(f)$ simultaneously.
\end{itemize}

\subcase{Type III.}
These clauses enforce a `proper tiling' of every face $f$ in $V(\mathcal{E}^*)$ (see the Face Condition Lemma).
This ensures that the bounding boxes associated with the left path and the right path of $f$ can be aligned to the splitting line of $f$.
\begin{itemize}
\item For every face $f$ and for every node $\lambda$ on the left path of $f$ with a non-empty core:
\begin{itemize}
\item We add the clause $(l_f \implies r_\lambda)$.
\item If $r(C(\lambda)) < l(\mathcal{X}(f))$, we add the clause $(r_\lambda)$.
\item If $r(C(\lambda)) = l(\mathcal{X}(f))$, we add the clause  $(\lnot l_f \implies \lnot r_\lambda)$.
\end{itemize}
\end{itemize}
The clause $(l_f \implies r_\lambda)$ asserts that whenever the splitting line of $f$ is set to the right of $l(\mathcal{X}(f))$,
then a right-loose representation of $\lambda$ is necessary to align the bounding box of $\lambda$ to the splitting line of $f$.
The two remaining clauses have similar meaning.

We add analogous clauses for the nodes whose cores are non-empty and that correspond to the edges from the right path of $f$.

\subcase{Type IV.}
These clauses ensure that the $x$-coordinates of the splitting lines form an $st$-valuation of $\mathcal{E}^*$.
\begin{itemize}
  \item For every pair of faces $f$ and $g$ in $V(\mathcal{E}^*)$ such that $f$ is to the left of $g$ and such that $r(\mathcal{X}(f)) \geq l(\mathcal{X}(g))$, we add the clause $(\lnot r_f \implies l_g)$.
\end{itemize}
Such a clause forbids setting $\chi(f)=r(\mathcal{X}(f))$ and $\chi(g) = l(\mathcal{X}(g))$ -- such an assignment of $\chi(f)$ and $\chi(g)$ would not be a valid $st$-valuation.

\begin{claim}\label{clm:2sat_formula}
Let $\mu$ be an $R$-node and let $\mathcal{E}$ be a planar embedding of the skeleton of $\mu$.
An $[x,x']$-representation of $\mu$ that corresponds to a planar embedding $\mathcal{E}$ exists iff $\Phi$ is satisfiable.
\end{claim}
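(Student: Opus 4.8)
The plan is to prove the biconditional by establishing both directions, using the R-Tiling Lemma (Lemma~\ref{lem:tiling-lemma-R-node}) as the bridge between geometric representations and the combinatorial data $(\mathcal{E}, \xi, \chi)$. By that lemma, since the embedding $\mathcal{E}$ and the $st$-valuation $\xi = y_\psi|V(skel(\mu))$ are already fixed (the former by assumption, the latter by Lemma~\ref{lem:y-st-valuation}), finding an $[x,x']$-representation of $\mu$ compatible with $\mathcal{E}$ is equivalent to finding an $st$-valuation $\chi$ of $\mathcal{E}^*$ that (i) respects the core constraints encoded in the sets $\mathcal{X}(f)$, (ii) is compatible with a feasible choice of representation type (LL, LF, FL, FF) for each child $\lambda$ with non-empty core, and (iii) aligns the bounding boxes on each face boundary to the splitting line of that face as required by the Face Condition Lemma (Lemma~\ref{lem:face-condition}). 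So the real content is showing that these three requirements are captured exactly by the clauses of $\Phi$.

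First I would establish the forward direction (representation $\implies$ satisfiable). Given an $[x,x']$-representation $\phi$ of $\mu$ compatible with $\mathcal{E}$, I extract its splitting-line valuation $\chi_\tau$ and its per-child representation types from the tiling $\tau$, then set each boolean variable according to its stated interpretation: $l_\lambda, r_\lambda$ according to whether $\phi$ uses a loose or fixed side for $\lambda$, and $l_f, r_f$ according to whether $\chi_\tau(f)$ sits strictly inside $\mathcal{X}(f)$ on the respective side. I then verify each clause type is satisfied: Type~I holds because $\phi$ only uses feasible types (feasibility is recorded in $\theta(\lambda)$); Type~II holds by the definition of the variables together with the open/closed structure of $\mathcal{X}(f)$ from Claim~\ref{claim:X_sets}; Type~III holds precisely by the Face Condition Lemma, which forces the left side of $B_\phi(\lambda)$ (for $\lambda$ on the left path of $f$) to coincide with $\chi_\tau(f)$, so a right-loose type is needed exactly when $\chi_\tau(f) > r(C(\lambda))$; and Type~IV holds because $\chi_\tau$ is a genuine $st$-valuation, so strict inequalities $\chi_\tau(f) < \chi_\tau(g)$ along dual edges rule out the forbidden simultaneous endpoint assignment.

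Next I would prove the converse (satisfiable $\implies$ representation). Given a satisfying assignment, I decode it into a target splitting position for every face: for a determined face take the unique value of $\mathcal{X}(f)$; for an undetermined face, if $l_f$ is true place $\chi(f)$ just inside the left end of $\mathcal{X}(f)$ (and symmetrically for $r_f$), choosing a consistent small offset so that ties are broken in the direction dictated by Claim~\ref{claim:X_sets}. I must check this $\chi$ is a valid $st$-valuation of $\mathcal{E}^*$: along any dual edge $(f,g)$ the Type~IV clauses, combined with the monotonicity and openness conditions of Claim~\ref{claim:X_sets}, guarantee $\chi(f) < \chi(g)$. I then invoke the R-Tiling Lemma to realize the triple $(\mathcal{E}, \xi, \chi)$ as an actual tiling, and use the Stretching Lemma (Lemma~\ref{lem:stretching}) together with the Type~I and Type~III clauses to certify that each child $\lambda$ is assigned a representation of a feasible type whose loose/fixed sides match what the face alignments demand.

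The main obstacle I expect is the converse direction's alignment bookkeeping: ensuring that the discrete boolean choices can be simultaneously realized by a single consistent assignment of real-valued splitting lines, where many faces may share boundary edges and the inequalities must be strict. The delicate point is the handling of equalities, namely when $r(\mathcal{X}(f)) = l(\mathcal{X}(g))$ along a dual edge or when $r(C(\lambda)) = l(\mathcal{X}(f))$ on a face boundary; here the open/closed structure from Claim~\ref{claim:X_sets} and the carefully chosen Type~III and Type~IV implications must interact correctly so that no two adjacent splitting lines collapse to the same coordinate and no child is forced into an infeasible fixed/loose configuration. Once this is verified, the equivalence follows, and since both embeddings $\mathcal{E}$ are treated, the claim is established.
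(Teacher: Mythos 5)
Your forward direction matches the paper's and is fine, and your overall architecture (R-Tiling Lemma as the bridge, Stretching Lemma plus Type~I/III clauses to certify children) is the paper's. The genuine gap is in the converse, exactly at the point you yourself flag as ``alignment bookkeeping'': your decoding of a satisfying assignment into splitting-line positions is backwards. Under the stated variable semantics it is the \emph{negative} literals that pin a splitting line to an endpoint of $\mathcal{X}(f)$: $l_f$ false forces $\chi(f)=l(\mathcal{X}(f))$ and $r_f$ false forces $\chi(f)=r(\mathcal{X}(f))$; only a face with both variables true has freedom. Your rule (``if $l_f$ is true, place $\chi(f)$ just inside the left end of $\mathcal{X}(f)$, symmetrically for $r_f$'') lets the \emph{positive} literals drive the placement instead. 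Concretely, take an undetermined face $f$ with $l_f$ true and $r_f$ false, and a child $\lambda$ with non-empty core on the right path of $f$ with $l(C(\lambda))=r(\mathcal{X}(f))$. The Type~III clause $(\lnot r_f \Rightarrow \lnot l_\lambda)$ makes the satisfying assignment choose a left-fixed type for $\lambda$, and the clauses certify feasibility only of that type; but your placement puts $\chi(f)$ near $l(\mathcal{X}(f))$, hence strictly left of $l(C(\lambda))$, which geometrically demands a \emph{left-loose} representation of $\lambda$ --- a type whose feasibility Type~I never guaranteed. So the paper's condition that each child admits a representation of the demanded type on the demanded interval can fail under your decoding.

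The ``consistent small offset'' idea also breaks for doubly-positive faces, and not merely on ties. A face $g$ to the left of $f$ may legitimately be pinned at $\chi(g)=r(\mathcal{X}(g))$ with $r(\mathcal{X}(g))>l(\mathcal{X}(f))$; the Type~IV clause $(\lnot r_g \Rightarrow l_f)$ is then satisfied simply by making $l_f$ true, so this situation survives in a satisfying assignment. Any placement of $\chi(f)$ ``just inside'' $l(\mathcal{X}(f))$ then lies strictly below $\chi(g)$, so $\chi$ is not an $st$-valuation of $\mathcal{E}^*$; the shift needed is $r(\mathcal{X}(g))-l(\mathcal{X}(f))$, which is not small and not a tie-breaking issue. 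What is actually needed (and what the paper does) is a two-stage choice: first pin every face having a negative literal at the corresponding endpoint of $\mathcal{X}(f)$, check via Claim~\ref{claim:X_sets} and the Type~IV clauses that all pinned values already respect the left-of order; then, for each face with both literals positive, form the set $\mathcal{X}''(f)$ of interior points of $\mathcal{X}(f)$ lying strictly above every pinned value to its left and strictly below every pinned value to its right, prove that each $\mathcal{X}''(f)$ is a non-empty open interval and that these intervals are monotone along the left-of relation, and only then select the remaining values so that $\chi$ is a valid $st$-valuation. Your proposal names this obstacle but does not supply the mechanism, and the mechanism it does supply produces invalid valuations.
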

\begin{proof}
Suppose that $\phi$ is an $[x,x']$-representation of $\mu$.
For every face $f$ in $V(\mathcal{E}^*)$, the splitting line $\chi(f)$ of $f$ in $\phi$ satisfies $\chi(f) \in \mathcal{X}(f)$.
Thus, $\chi(f)$ determines the following assignment for $l_f$ and $r_f$:
$l_f$ is positive iff $\chi(f) > l(\mathcal{X}(f))$,
$r_f$ is positive iff $\chi(f) < r(\mathcal{X}(f))$.
For every child $\lambda$ of $\mu$ such that $C(\lambda) \neq \emptyset$ we set the variables $l_\lambda$ and $r_\lambda$ as follows:
$l_\lambda$ is positive iff $\lambda$ is left-loose in $\phi$,
$r_\lambda$ is positive iff $\lambda$ is right-loose in $\phi$.
One can easily check that this assignment satisfies $\Phi$.

Suppose now that $\Phi$ is satisfiable.
We define an $[x,x']$-representation $\phi$ of $\mu$ by setting a splitting line $\chi(f)$ for every face $f$ in $V(\mathcal{E}^*)$.
To conclude that $\phi$ is an $[x,x']$-representation it is enough to check that:
\begin{enumerate}
  \item \label{pro:chi_st} The function $\chi$ is an $st$-valuation of $\mathcal{E}^*$.
  \item \label{pro:chi_fix} For every $u \in V'(\mu)$ we have
  $l_{\phi}(u) = \chi($left face of $u)$ and $r_{\phi}(u) = \chi($right face of $u)$.
  \item \label{pro:chi_children} For every child $\lambda$ of $\mu$ such that $C(\lambda) \neq \emptyset$ we have that
  $G_\lambda$ has an $[\chi($left face of $\lambda),$ $\chi($right face of $\lambda)]$-representation.
\end{enumerate}
First, we define $\chi(f) = l(\mathcal{X}(f))$ when $l_f$ is negative.
We also set $\chi(f)=r(\mathcal{X}(f))$ when $r_f$ is negative.
Note that this definition is unambiguous as both $l_f$ and $r_f$ are negative only for a determined face $f$ by the satisfiability of the clauses of Type~II.
By Claim~\ref{claim:X_sets} and by the satisfiability of the clauses of Type~IV,
for any two faces $f$ and $g$ in $V(\mathcal{E}^*)$ for which $\chi(f)$ and $\chi(g)$ are already fixed,
we have $\chi(f) < \chi(g)$ whenever $f$ is to the left of $g$.
Notice that $\chi(f)$ is not yet determined for inner faces $f$ for which $l_f$ and $r_f$ are positive.
For such a face $f$, let $\mathcal{X}''(f)$ contain all values $z$ such that $l(\mathcal{X}(f)) < z < r(\mathcal{X}(f))$ and:
\begin{itemize}
\item $\chi(g) < z$ whenever $g$ is a face to the left of $f$  and the value $\chi(g)$ is already fixed, and
\item $z < \chi(h)$ whenever $h$ is a face to the right of $f$ and the value $\chi(h)$ is already fixed.
\end{itemize}
We claim that $\mathcal{X}''(f)$ is an open, non-empty interval. Indeed, if $\mathcal{X}''(f)$ is empty,
then there are faces $g$ and $h$ with the values $\chi(g),\chi(h)$ fixed such that
$g$ is to the left of $h$ and $\chi(g) \geq \chi(h)$, which contradicts our previous observation.
Moreover, for any two faces $f_1,f_2$ such that $f_1$ is to the left of $f_2$ and neither $\chi(f_1)$ nor $\chi(f_2)$ is fixed, we have that $l(\mathcal{X}''(f_1)) \leq l(\mathcal{X}''(f_2))$ and $r(\mathcal{X}''(f_1)) \leq r(\mathcal{X}''(f_2))$.
Thus, for every face $f$ for which both $l_f,r_f$ are positive, we can choose a value $\chi(f)$ from $\mathcal{X}''(f)$ so that $\chi$ is an $st$-valuation of $\mathcal{E}^*$.
We need to check the remaining conditions~\eqref{pro:chi_fix} and~\eqref{pro:chi_children}.
Condition~\eqref{pro:chi_fix} is satisfied since, for a determined face $f$ we have chosen $\chi(f)$ from the singleton $\mathcal{X}(f)$.
Condition~\eqref{pro:chi_children} follows from the satisfiability of the clauses of Type~I and Type~III, and by the Stretching Lemma.
\end{proof}

\subsection{Complexity considerations}
\label{app:rectangular_complexity}

To compute the feasible representation for a node $\mu$ with $k$ children, our algorithm works in $\Oh{k}$ time if $\mu$ is an $S$-node.
Algorithm~\ref{alg:p-node} for a $P$-node $\mu$ needs to sort the children of $\mu$ and thus, it works in $\Oh{k \log k}$ time.
For an $R$-node, the number of clauses of Types I, II and III is $\Oh{k}$.
The number of clauses of Type~IV is $\Oh{k^2}$ and for some graphs, it is quadratic.
Thus, the algorithm works in $\Oh{k^2}$ time for an $R$-node.
Since the number of all edges in all nodes of $T$ is $\Oh{n}$, the whole algorithm works in $\Oh{n^2}$ time.

The bottleneck of the algorithm presented in Section~\ref{sec:rectangular_algorithm_slow} is the number of clauses of Type~IV in the 2-CNF formula constructed for $R$-nodes.
In the presented algorithm we added one clause $(\lnot r_f \implies l_g)$ for any two faces $f$ and $g$ in $V(\mathcal{E}^*)$ such that $f$ is to the left of $g$ and $r(\mathcal{X}(f)) \geq l(\mathcal{X}(g))$.
The number of such pairs of faces can be quadratic.
In the next section we present a different, less direct, approach that uses a smaller number of clauses to express the same set of constraints.

\subsection{Faster algorithm}
\label{app:rectangular_algorithm_fast}

We can treat the $st$-graph $\mathcal{E}^*$ as a planar poset with single minimal and single maximal element.
Using the result by Baker, Fishburn and Roberts~\cite{BakerFR71} we know that such a poset has dimension at most $2$.
Thus, there are two numberings $p$ and $q$ of the vertices of $\mathcal{E}^*$ such that a face $f$ is to the left of a face $g$ iff $p(f) < p(g)$ and $q(f) < q(g)$.
Such numberings correspond to dominance drawings of $st$-graphs and can be computed in linear time~\cite{BattistaTT92}.

For each face $f$, we have two boolean variables $l_f$ and $r_f$, two real values $\lambda_f = l(\mathcal{X}(f))$, $\rho_f = r(\mathcal{X}(f))$, and two integer values $p_f = p(f)$ and $q_f = q(f)$.
We want to introduce a small set of 2-CNF clauses so that we can imply $(\lnot r_f \implies l_g) = (\lnot l_g \implies r_f)$ whenever $p_f < p_g$, $q_f < q_g$, and $\rho_f \geq \lambda_g$.

We compute the set of implications with a sweep line algorithm.
During the course of the algorithm, variables $r_f$ are stored in multiple \emph{persistent} balanced binary search trees with $\rho_f$ as the sorting key.

For an overview of persistent data structures, refer to~\cite{DriscollSST89}.
However, we only need the ideas presented in~\cite{Myers84}, which are summarized in this paragraph.
The tree structure used in our algorithm is a modification of the AVL tree.
A node $\alpha$ of the tree stores a pointer to its left child $left(\alpha)$, a pointer to its right child $right(\alpha)$ and the sorting key $key(\alpha)$.
Note that there are no parent links.
The difference from regular AVL trees is that no node is ever modified.
When the insertion or the balancing strategy needs to modify a node $\alpha$, a copy of $\alpha$ is made instead, creating a new node $\alpha'$ and setting $left(\alpha') = left(\alpha)$, $right(\alpha')=right(\alpha)$ and $key(\alpha')=key(\alpha)$.
If $\alpha$ has a parent $p_{\alpha}$, then the parent is copied too.
The node $p_{\alpha}'$ is a copy of $p_{\alpha}$ with the exception of the pointer to $\alpha$, which is replaced with a pointer to $\alpha'$.
The entire path from $\alpha$ to the root of the tree is copied in this fashion.
This way, each addition to the tree introduces a logarithmic number of new nodes.
After each addition, we get a new root node, that represents the new tree, the old tree is represented by the previous root and all but a logarithmic number of the nodes are shared by both trees.
The graph of old and new nodes together with edges from nodes to their children is an acyclic digraph.

Now, in our algorithm, each tree keeps some set of boolean variables $r_f$ sorted by the value of $\rho_f$.
More specifically, with every vertex $\alpha$ we associate a value $var(\alpha)$.
To add a variable $r_f$ into the tree we add a new node $\alpha$ with $key(\alpha) = \rho_f$ and $var(\alpha) = r_f$.
Additionally, with each node $\alpha$ of the tree we associate a second boolean variable $var'(\alpha)$ and for each child node $\beta$ we add a clause $(var'(\alpha) \implies var'(\beta))$.
Finally, for each node $\alpha$ we add a clause $(var'(\alpha) \implies var(\alpha))$.

To simplify the presentation, assume that we have $n=2^k$ faces.
Let numberings $p$ and $q$ take values $0,\ldots,n-1$.
We construct the 2-CNF formula in the following way.
First, for each interval of integers $[j \cdot 2^i, (j+1) \cdot 2^i), 0 \leq i \leq k, j<\frac{n}{2^i}$ we have one persistent balanced binary search tree.
The tree for the interval $[a,b)$ is going to keep variables $r_f$ for faces $f$ such that $q_f \in [a,b)$.

We process faces, one by one, in order of increasing values of $p_f$.
After processing each face $f$, we add the variable $r_f$ to all trees $[a,b)$ such that $q_f \in [a,b)$.
There are $k+1$ such trees and an addition to each tree takes $\Oh{\log{n}}$ time and introduces $\Oh{\log{n}}$ new boolean variables.

This way, when we process face $g$, then any earlier processed face $f$ satisfies $p_f < p_g$ and no other face satisfies this condition.
Now, for any value $q_g$ we can select a logarithmic size subset $S$ of the trees such that the union of the intervals of the trees is exactly $[0,q_g)$.
The variables $r_f$ stored in these trees are exactly those for faces that satisfy both $p_f < p_g$ and $q_f < q_g$.
This is exactly the set of faces that are to the left of the face $g$.

Each tree in $S$ stores variables $r_f$ sorted by $\rho_f$.
We can execute a binary search for the left-most node with the key no smaller than $\lambda_g$.
During the search, when we descend from an inner node $\alpha$ to the left child or when $\alpha$ is the final node of the search, we add clauses $(\lnot l_g \implies var'(right(\alpha))$ and $(\lnot l_g \implies var(\alpha))$.
The first implication is forwarded over the tree to all nodes in the right subtree.
This way, after completing the search, we have that $\lnot l_g$ implies $r_f$ for all faces $f$ such that $p_f < p_g$, $q_f < q_g$ and $\rho_f \geq \lambda_g$ -- exactly as intended.

The total running time of this procedure is $\Oh{n\log^2{n}}$ and it produces at most that many variables and clauses.
This shows Theorem \ref{th:main-plane-st}.

\section{Hardness results}
\label{app:hardness}

\subsection{\NP-complete problems}
Our hardness proofs use reductions from the following \NP-complete problems:

\smallskip
\noindent \underline{$\pmthreesat$}:\smallskip\\
\noindent{\bf Input:} A \emph{rectilinear} planar representation of a $\threesat$ formula in which
 each variable is a horizontal segment on the $x$-axis,
 each clause is a horizontal segment above or below the $x$-axis
 with straight-line vertical connections to the variables it includes.
 All positive clauses are above the $x$-axis and all negative clauses are below the $x$-axis.
 There are no clauses including both positive and negative occurrences of variables.
 See Figure~\ref{fig:sat} for an example.\\
\noindent{\bf Question:}
 Is the formula satisfiable?\smallskip\\
\pmthreesat{} is known to be \NP-complete thanks to de Berg and Khosravi~\cite{BergK10}.
\begin{figure}[t]
\begin{tikzpicture}
\definecolor{light-gray}{gray}{0.90}
\tikzstyle{every node}=[inner sep=1pt,fill=light-gray]

\begin{scope}[shift={(0,0)}, xscale=0.63, yscale=1]
\draw[very thick, gray] (-1,0) -- (18,0);

\draw[fill=light-gray] (0,0.25) rectangle (2,-0.25);
\node (x1) at (1,0) {$x_1$};

\draw[fill=light-gray] (3,0.25) rectangle (5,-0.25);
\node (x2) at (4,0) {$x_2$};

\draw[fill=light-gray] (6,0.25) rectangle (8,-0.25);
\node (x3) at (7,0) {$x_3$};

\draw[fill=light-gray] (9,0.25) rectangle (11,-0.25);
\node (x4) at (10,0) {$x_4$};

\draw[fill=light-gray] (12,0.25) rectangle (14,-0.25);
\node (x5) at (13,0) {$x_5$};

\draw[fill=light-gray] (15,0.25) rectangle (17,-0.25);
\node (x6) at (16,0) {$x_6$};

\draw[fill=light-gray] (1.33,1.25) rectangle (6.66,0.75);
\node (c1) at (4,1) {$x_1 \vee x_2 \vee x_3$};
\draw[very thick] (1.33,0.25) -- (1.33,0.75);
\draw[very thick] (6.66,0.25) -- (6.66,0.75);
\draw[very thick] (4,0.25) -- (4,0.75);

\draw[fill=light-gray] (10,1.25) rectangle (15.66,0.75);
\node (c1) at (12.83,1) {$x_4 \vee x_5 \vee x_6$};
\draw[very thick] (10,0.25) -- (10,0.75);
\draw[very thick] (13,0.25) -- (13,0.75);
\draw[very thick] (15.66,0.25) -- (15.66,0.75);

\draw[fill=light-gray] (0.66,2.25) rectangle (16.33,1.75);
\node (c1) at (8.5,2) {$x_1 \vee x_3 \vee x_6$};
\draw[very thick] (0.66,0.25) -- (0.66,1.75);
\draw[very thick] (7.33,0.25) -- (7.33,1.75);
\draw[very thick] (16.33,0.25) -- (16.33,1.75);

\draw[fill=light-gray] (4.33,-0.75) rectangle (9.5,-1.25);
\node (c1) at (6.915,-1) {$\overline{x_2} \vee \overline{x_3} \vee \overline{x_4}$};
\draw[very thick] (4.33,-0.25) -- (4.33,-0.75);
\draw[very thick] (7, -0.25) -- (7,-0.75);
\draw[very thick] (9.5,-0.25) -- (9.5,-0.75);

\draw[fill=light-gray] (1.33,-1.75) rectangle (10,-2.25);
\node (c1) at (5.665,-2) {$\overline{x_1} \vee \overline{x_2} \vee \overline{x_4}$};
\draw[very thick] (1.33,-0.25) -- (1.33,-1.75);
\draw[very thick] (3.66, -0.25) -- (3.66,-1.75);
\draw[very thick] (10,-0.25) -- (10,-1.75);

\draw[fill=light-gray] (0.66,-2.75) rectangle (13,-3.25);
\node (c1) at (6.83,-3) {$\overline{x_1} \vee \overline{x_4} \vee \overline{x_5}$};
\draw[very thick] (0.66,-0.25) -- (0.66,-2.75);
\draw[very thick] (10.5,-0.25) -- (10.5,-2.75);
\draw[very thick] (13, -0.25) -- (13,-2.75);


\end{scope}

\end{tikzpicture}
\caption{A $\pmthreesat$ formula with variables $x_1,\ldots,x_6$;
  positive clauses $\{x_1, x_3, x_6\}$, $\{x_1, x_2, x_3 \}$, and $\{x_4, x_5, x_6 \}$;
  and negative clauses  $\{ \overline{x_1}, \overline{x_4}, \overline{x_5} \}$, $\{ \overline{x_1}, \overline{x_2}, \overline{x_4}\}$, and $\{ \overline{x_2}, \overline{x_3}, \overline{x_4}\}$.
} 
\label{fig:sat}
\end{figure}
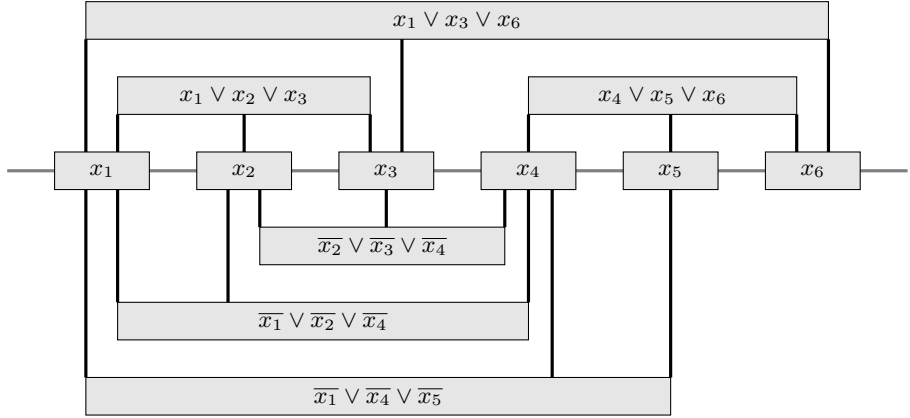

\smallskip
\noindent \underline{\threepart{}}:\smallskip\\
\noindent{\bf Input:}
 A set of positive integers $w, a_1, a_2, \ldots, a_{3m}$ such that for each $i=1,\ldots,3m$, we have $\frac{w}{4} < a_i < \frac{w}{2}$. \\
\noindent{\bf Question:}
 Can $\{a_1, \ldots, a_{3m}\}$ be partitioned into $m$ triples, such that the total sum of each triple is exactly $w$?\smallskip\\
\threepart{} is known to be strongly \NP-complete~\cite{GareyJ79}, \ie{}, the problem remains \NP-complete even when the integers given in the input are encoded in unary.

\subsection{Representations of undirected graphs}
\label{app:hardness_undirected}

\hackTcounter{th:var-emb-hardness-general}
\begin{theorem}
The Bar Visibility Representation Extension Problem is \NP-complete.
\end{theorem}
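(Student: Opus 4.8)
The plan is to establish \NP-completeness in two parts: membership in \NP{} and \NP-hardness. Membership is the routine direction. Given a candidate representation, one can verify in polynomial time that it is a valid bar visibility representation extending $\psi'$: by the characterization of Tamassia and Tollis~\cite{TamassiaT86}, a certificate consists of the combinatorial data of the representation (bar orderings and the visibility gaps), whose size is polynomial and whose consistency with the given partial map $\psi'$ and with the edge set of $G$ is checkable directly. The substance of the theorem lies entirely in the hardness reduction, which I would build from \pmthreesat{}, known to be \NP-complete by de Berg and Khosravi~\cite{BergK10}.

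\textbf{The reduction.} Starting from a rectilinear planar representation of a monotone 3-SAT formula, the idea is to read the planar layout as a blueprint for a planar boolean circuit and then realize that circuit as a partial bar visibility representation instance $(G,\psi')$. First I would fix the bars of a ``frame'' so that the flexibility remaining in any extension is exactly the truth assignment to the variables. The key building blocks are \emph{gadgets} that simulate the logical primitives. For each variable I would design a gadget whose every valid bar visibility extension forces a binary choice (two stable configurations, encoding \emph{true} and \emph{false}) and that propagates this choice along a ``wire''. A wire gadget consists of bars whose only freedom is a discrete left/right (or up/down) shift that cannot be altered mid-wire without creating a forbidden visibility gap or destroying a required one. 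I would then build an \textbf{OR-gadget} and a mechanism realizing the 3-clause constraint, laid out so that the clause gadget admits a valid extension precisely when at least one of its three incoming wires carries the satisfying value. Because \pmthreesat{} comes with positive clauses above the axis and negative clauses below, the monotone structure lets me route all positive literals with one gadget orientation and all negative literals with the mirror orientation, which keeps the construction planar and avoids crossing wires.

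\textbf{Correctness.} The two directions to verify are: (i) if the formula is satisfiable, then fixing each variable gadget to its corresponding stable state yields a globally consistent set of bar placements that extends $\psi'$ to a full representation; and (ii) if $(G,\psi')$ admits an extension $\psi$, then reading off the forced binary state of each variable gadget yields a satisfying assignment, because each clause gadget can be completed only when one of its literals is set to the satisfying value. The crux of (ii) is a \emph{rigidity} argument: I must prove that in \emph{any} valid extension the variable and wire gadgets are confined to their intended discrete states, with no ``leak'' configuration that cheats the logic. This is where fixing enough frame bars via $\psi'$ is essential, since the partiality of the representation is exactly the lever that pins down the geometry.

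\textbf{The main obstacle.} I expect the hard part to be designing gadgets that are simultaneously (a) rigid enough that every valid extension respects the intended logic, (b) flexible enough that the intended configurations genuinely \emph{are} extendable (so no spurious \NP-hardness from unsatisfiable gadgets), and (c) compatible at their interfaces so that a wire leaving one gadget docks correctly with the next without introducing unwanted visibilities between non-adjacent bars. Controlling \emph{unwanted} visibility gaps is the subtle point: in the bar (\ie{}~$\epsi$-visibility) model an edge exists \emph{iff} there is a visibility gap, so I must ensure gadget bars never accidentally see one another through a sliver of free space, which forces careful coordinate bookkeeping and the use of ``blocker'' bars fixed by $\psi'$. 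Verifying planarity of the overall instance and that the total gadget size is polynomial in the formula size completes the reduction.
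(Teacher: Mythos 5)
Your high-level strategy coincides with the paper's: reduce from \pmthreesat{}, rotate the rectilinear layout, and realize the formula as a planar boolean circuit in which bars fixed by $\psi'$ form the frame, wires, and gates, while unrepresented vertices carry the truth values (each free vertex having exactly two admissible placement regions inside its wire). So the architecture is right. The genuine gap is that the proposal never constructs a single gadget, and the gadgets \emph{are} the technical content of this proof. Writing ``I would design an OR-gadget that is rigid, realizable, and interface-compatible'' names the proof obligations without discharging them; in the $\epsi$-visibility model it is not evident a priori that such gadgets exist, precisely because of the iff-condition on visibility gaps that you yourself flag as the subtle point. The paper discharges these obligations with explicit constructions and short case analyses: a wire (two fixed bars $\top$, $\bot$ with one free vertex adjacent to both); a NOT gate (two fixed bars $a$, $b$ adjacent to each other and to the free vertices $x$, $y$ but \emph{not} to the bounding bars, so that one of $x$, $y$ is forced below $a,b$ and the other above, in order to obstruct the forbidden visibility gaps); a XOR gate; a CXOR (copying XOR) circuit; an OR gate; and a clause gadget built from two OR gates plus bars forcing the final output to be positive. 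Each correctness argument is a case analysis of which free vertex can obstruct which gap, together with a check that visibilities between fixed bars do not change across the allowed configurations.

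A second, more specific omission: your proposal has no mechanism for fan-out. A variable occurs in several clauses, so its value must be copied consistently onto several wires; a single two-state variable gadget feeding one wire does not suffice. The paper solves this by stacking $2k-1$ NOT gates to produce $k$ synchronized outputs and inserting CXOR gadgets to enforce consistency among the outputs of every second NOT gate. Without such a copying/consistency gadget, direction (ii) of your correctness argument fails: a valid extension could set different occurrences of the same variable inconsistently, and the assignment read off the gadgets need not satisfy the formula.
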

\unhackTcounter
\begin{proof}
  It is clear that the bar visibility representation extension problem is in \NP.
  To prove completeness, we present a reduction from $\pmthreesat$.
  Given a formula $\phi$ we construct a graph $G$ and a partial representation $\psi'$ that is extendable to a representation of the whole $G$ if and only if $\phi$ is satisfiable.
  The reduction constructs a planar boolean circuit that simulates the formula $\phi$.
  The bars assigned to fixed vertices of $G$ create wires and gates of the circuit.
  Unrepresented vertices of $G$ correspond to boolean values transmitted over the wires.
  Our construction uses several boolean gates: a NOT gate, a XOR gate, a special gate which we call a CXOR gate, and an OR gate.

  In the figures illustrating this proof, the red bars denote the fixed vertices of $G$ and the black bars denote the unrepresented vertices.
  A bar may have its left (right) endpoint marked or not depending on whether the bar extends to the left (right) of the figure or not.
  The figures also contain some vertical ranges.
  These ranges are only required for the description of the properties of the gadgets.

  For readability, the figures contain only schemes of previously defined gadgets.
  Whenever a scheme appears in a figure, its area is colored gray.

\begin{figure}[t]
\begin{center}
\begin{tikzpicture}[>=latex]
\definecolor{light-gray}{gray}{0.90}

\begin{scope}[xscale=0.8, yscale=0.6]

\begin{scope}[shift={(3.5,0)}]
  \begin{tiny}
  
  \node at (1.5,0.25) {$\bot$};
  \draw[thick,color=red,(-)] (0,0.5)--(3,0.5);
  
  \node at (1.5,4.75) {$\top$};
  \draw[thick,color=red,(-)] (0,4.5)--(3,4.5);

  \draw[dotted, (-)] (0.2,3)--(0.2,4.5);
  \draw[dotted, (-)] (0.2,0.5)--(0.2,2);

  \draw[dotted, (-)] (2.8,3)--(2.8,4.5);
  \draw[dotted, (-)] (2.8,0.5)--(2.8,2);
  
  \draw[very thick, -] (0.0,3.75)--(3, 3.75);
  \node at (1.5,4) {$v$};

  \end{tiny}
\end{scope}

\begin{scope}[shift={(8.5,0)}]
  \begin{tiny}
  
  \node at (1.5,0.25) {$\bot$};
  \draw[thick,color=red,(-)] (0,0.5)--(3,0.5);
  
  \node at (1.5,4.75) {$\top$};
  \draw[thick,color=red,(-)] (0,4.5)--(3,4.5);

  \draw[dotted, (-)] (0.2,3)--(0.2,4.5);
  \draw[dotted, (-)] (0.2,0.5)--(0.2,2);

  \draw[dotted, (-)] (2.8,3)--(2.8,4.5);
  \draw[dotted, (-)] (2.8,0.5)--(2.8,2);
  
  \draw[very thick, -] (0.0,1.25)--(3, 1.25);
  \node at (1.5,1.5) {$v$};

  \end{tiny}
\end{scope}
\end{scope}

\end{tikzpicture}
\caption{Wire transmitting positive value (on the left) and negative value (on the right)}
\label{fig:eps_wire}
\end{center}
\end{figure}
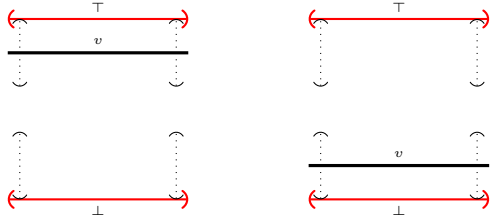

  Each wire, see Figure~\ref{fig:eps_wire}, in the circuit is an empty space between two fixed vertices, $\top$ and $\bot$, whose bars are placed one above the other.
  One unrepresented vertex $v$, that is adjacent to both $\top$ and $\bot$, corresponds to the value transmitted over the wire.
  The construction of the gates ensures that in each wire there are exactly two disjoint rectangular regions in which $v$ can be placed.
  In the figures, the small horizontal lines with a dotted line between them are used to mark those regions and are not part of the construction.
  Placement of $v$ anywhere in the top (bottom) region corresponds to a transmission of a positive (negative) value.
  We show a collection of gadgets for the gates that use such wires as inputs, and outputs.
  Similarly to wires, each gate is bounded from the top and bottom by two bars.
  This way, it is easy to control the visibility between bars from different gates and wires.

\begin{figure*}
\begin{center}
\begin{tikzpicture}[>=latex]
\definecolor{light-gray}{gray}{0.90}

\begin{scope}[xscale=0.7, yscale=0.6]

  \begin{scope}[shift={(-7.0,0)}]
    \tikzstyle{every node}=[circle,minimum size=5pt,inner sep=0pt,draw,fill]
    \begin{footnotesize}
    \node[label=below:$\bot$,fill=red] (bot) at (5.0,0.5) {};
    \node[label=right:$y$,fill=black] (y) at (6,2.5) {};
    \node[label=below:$a$,fill=red] (a) at (5,2.0) {};
    \node[label=above:$b$,fill=red] (b) at (5,3) {};
    \node[label=left:$x$,fill=black] (x) at (4,2.5) {};
    \node[label=above:$\top$,fill=red] (top) at (5.0,4.5) {};
    \draw[thick,-] (bot)--(y);
    \draw[thick,-] (y)--(a);
    \draw[thick,-] (y)--(b);
    \draw[thick,-] (a)--(b);
    \draw[thick,-] (a)--(x);
    \draw[thick,-] (b)--(x);
    \draw[thick,-] (x)--(top);
    \draw[thick,-] (x)--(bot);
    \draw[thick,-] (y)--(top);

    \end{footnotesize}
  \end{scope}

\begin{scope}[shift={(0,0)}]

  \begin{tiny}
  \tikzstyle{every node}=[inner sep=2pt,fill=white]
  
  \node at (1.5,4.75) {$\top$};
  \draw[thick,color=red,(-)] (0,4.5)--(3,4.5);

  \node at (1.5,0.25) {$\bot$};
  \draw[thick,color=red,(-)] (0,0.5)--(3,0.5);

  \node at (1.25,2.25) {$a$};
  \draw[thick,color=red,(-)] (0.5,2)--(2,2);

  \node at (1.75,3.25) {$b$};
  \draw[thick,color=red,(-)] (1,3)--(2.5,3);

  \node at (1.25,4) {$x$};
  \draw[thick, -)] (0,3.75)--(2.5, 3.75);
  
  \node at (1.75,1.5) {$y$};
  \draw[thick, (-] (0.5,1.25)--(3, 1.25);

  \end{tiny}
\end{scope}

\begin{scope}[shift={(3.5,0)}, xscale=0.75]
  \begin{tiny}
  \draw[fill=light-gray, color=light-gray] (0,0.5) rectangle (3,4.5);
  
  \draw[thick,color=red,(-)] (0,0.5)--(3,0.5);
  
  \draw[thick,color=red,(-)] (0,4.5)--(3,4.5);

  \node at (1.5,2.5) {$\text{NOT}$};
  
  \draw[dotted, (-)] (0.2,3)--(0.2,4.5);
  \draw[dotted, (-)] (0.2,0.5)--(0.2,2);

  \draw[dotted, (-)] (2.8,3)--(2.8,4.5);
  \draw[dotted, (-)] (2.8,0.5)--(2.8,2);
  
  \draw[very thick, -] (0.0,3.75)--(1.0, 3.75);
  \node at (0.5,4) {$x$};
  
  \draw[very thick, -] (2.0,1.25)--(3, 1.25);
  \node at (2.5,1.5) {$y$};

  \end{tiny}
\end{scope}

\begin{scope}[shift={(6.75,0)}]

  \begin{tiny}
  \tikzstyle{every node}=[inner sep=2pt,fill=white]
  
  \node at (1.5,4.75) {$\top$};
  \draw[thick,color=red,(-)] (0,4.5)--(3,4.5);

  \node at (1.5,0.25) {$\bot$};
  \draw[thick,color=red,(-)] (0,0.5)--(3,0.5);

  \node at (1.25,2.25) {$a$};
  \draw[thick,color=red,(-)] (0.5,2)--(2,2);

  \node at (1.75,3.25) {$b$};
  \draw[thick,color=red,(-)] (1,3)--(2.5,3);

  \node at (1.25,1.5) {$x$};
  \draw[thick, -)] (0,1.25)--(2.5, 1.25);
  
  \node at (1.75,4) {$y$};
  \draw[thick, (-] (0.5,3.75)--(3, 3.75);

  \end{tiny}
\end{scope}

\begin{scope}[shift={(10.25,0)},xscale=0.75]
  \begin{tiny}
  \draw[fill=light-gray, color=light-gray] (0,0.5) rectangle (3,4.5);
  
  \draw[thick,color=red,(-)] (0,0.5)--(3,0.5);
  
  \draw[thick,color=red,(-)] (0,4.5)--(3,4.5);

  \node at (1.5,2.5) {$\text{NOT}$};
  
  \draw[dotted, (-)] (0.2,3)--(0.2,4.5);
  \draw[dotted, (-)] (0.2,0.5)--(0.2,2);

  \draw[dotted, (-)] (2.8,3)--(2.8,4.5);
  \draw[dotted, (-)] (2.8,0.5)--(2.8,2);
  
  \draw[very thick, -] (0.0,1.25)--(1.0, 1.25);
  \node at (0.5,1.5) {$x$};
  
  \draw[very thick, -] (2.0,3.75)--(3, 3.75);
  \node at (2.5,4) {$y$};

  \end{tiny}
\end{scope}
\end{scope}

\end{tikzpicture}
\caption{The NOT gadget depicted by its two possible representations and their schemes}
\label{fig:eps_not}
\end{center}
\end{figure*}
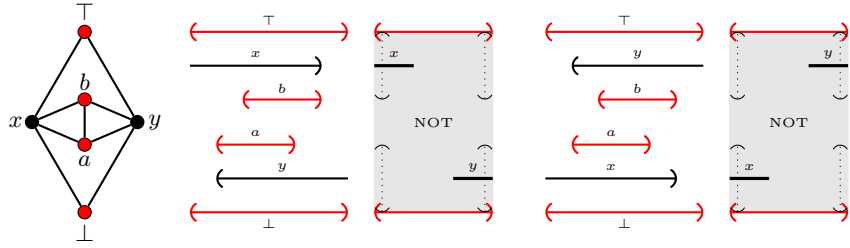

\noindent \textbf{NOT Gadget.}
  Figure~\ref{fig:eps_not} presents a NOT gate and its scheme.
  An unrepresented vertex $x$ ($y$) can transmit the value in the wire that can be placed to the left (right) from the gate.
  Both bars $a$ and $b$ are adjacent to each other, adjacent to vertices $x$ and $y$, and not adjacent to the bounding bars.
  As $a$ and $b$ don't have any other neighbors, the only way to obstruct the visibility gap between $a$, $b$ and bounding bars is to use $x$ and $y$.
  Thus, one of $x$ or $y$ must be placed below $a$ and $b$, and the second must be placed above $a$ and $b$.
  This way we obtain a desired functionality of a NOT gate.
  As the visibility between every two bars does not change across the two representations, the corresponding edges of $G$ are well defined.

\begin{figure*}
\begin{center}
\begin{tikzpicture}[>=latex]
\definecolor{light-gray}{gray}{0.90}

\begin{scope}[xscale=0.7, yscale=0.7]
 
   \begin{scope}[shift={(-7.0,0)}]
    \tikzstyle{every node}=[circle,minimum size=5pt,inner sep=0pt,draw,fill]
    \begin{footnotesize}
    \node[label=above:$\top$,fill=red] (top) at (5.0,6) {};
    \node[label=right:$y_1$,fill=black] (y1) at (6,4.5) {};
    \node[label=above:$b$,fill=red] (b) at (5,4.5) {};
    \node[label=left:$x_1$,fill=black] (x1) at (4,4.5) {};
    \node[label=left:$x_2$,fill=black] (x2) at (4,1.5) {};
    \node[label=below:$a$,fill=red] (a) at (5,1.5) {};
    \node[label=right:$y_2$,fill=black] (y2) at (6,1.5) {};
    \node[label=below:$\bot$,fill=red] (bot) at (5.0,0) {};
    \draw[thick,-] (bot)--(y2);
    \draw[thick,-] (bot)--(x2);
    \draw[thick,-] (a)--(y2);
    \draw[thick,-] (a)--(x2);
    \draw[thick,-] (b)--(x1);
    \draw[thick,-] (b)--(y1);
    \draw[thick,-] (top)--(x1);
    \draw[thick,-] (top)--(y1);
    \draw[thick,-] (x1)--(x2);
    \draw[thick,-] (y1)--(y2);

    \end{footnotesize}
  \end{scope}

\begin{scope}[shift={(0,0)}]
  \begin{tiny}
  \tikzstyle{every node}=[inner sep=2pt,fill=white]
  
  \node at (1.5,-0.25) {$\bot$};
  \draw[thick,color=red,(-)] (0,0)--(3,0);
  \node at (1.5,6.25) {$\top$};
  \draw[thick,color=red,(-)] (0,6)--(3,6);

  \node at (1.5,1.75) {$a$};
  \draw[thick,color=red,(-)] (0.5,1.5)--(2.5,1.5);

  \node at (1.5,4.75) {$b$};
  \draw[thick,color=red,(-)] (0.5,4.5)--(2.5,4.5);
    
  \node at (1.25,2.55) {$x_2$};
  \draw[thick, -)] (0.0,2.3)--(2.5, 2.3);
  
  \node at (1.25,3.95) {$x_1$};
  \draw[thick, -)] (0.0,3.7)--(2.5, 3.7);
  
  \node at (1.75,1.05) {$y_2$};
  \draw[thick, (-] (0.5,0.75)--(3, 0.75);

  \node at (1.75,5.55) {$y_1$};
  \draw[thick, (-] (0.5,5.25)--(3, 5.25);

    
  \end{tiny}
\end{scope}

\begin{scope}[shift={(3.5,0)}, xscale=0.75]
  \begin{tiny}

  \draw[fill=light-gray, color=light-gray] (0,0) rectangle (3,6);
  
  \draw[thick,color=red,(-)] (0,0)--(3,0);
  \draw[thick,color=red,(-)] (0,6)--(3,6);

  \node at (1.5,3) {$\text{XOR}$};
    
  \node at (0.5,2.55) {$x_2$};
  \draw[very thick, -] (0,2.3)--(1, 2.3);

  \node at (0.5,3.95) {$x_1$};
  \draw[very thick, -] (0.0,3.7)--(1, 3.7);

  \node at (2.5,1) {$y_2$};
  \draw[very thick, -] (2,0.75)--(3, 0.75);
  
  \node at (2.5,5.5) {$y_1$};
  \draw[very thick, -] (2,5.25)--(3, 5.25);
  
  \draw[dotted, (-)] (0.2,0)--(0.2,1.5);
  \draw[dashed, (-)] (0.2,1.5)--(0.2,4.5);
  \draw[dotted, (-)] (0.2,4.5)--(0.2,6);

  \draw[dotted, (-)] (2.8,0)--(2.8,1.5);
  \draw[dashed, (-)] (2.8,1.5)--(2.8,4.5);
  \draw[dotted, (-)] (2.8,4.5)--(2.8,6);

  \end{tiny}
\end{scope}

\begin{scope}[shift={(6.75,0)}]
  \begin{tiny}
  \tikzstyle{every node}=[inner sep=2pt,fill=white]
  
  \node at (1.5,-0.25) {$\bot$};
  \draw[thick,color=red,(-)] (0,0)--(3,0);
  \node at (1.5,6.25) {$\top$};
  \draw[thick,color=red,(-)] (0,6)--(3,6);

  \node at (1.5,1.75) {$a$};
  \draw[thick,color=red,(-)] (0.5,1.5)--(2.5,1.5);

  \node at (1.5,4.75) {$b$};
  \draw[thick,color=red,(-)] (0.5,4.5)--(2.5,4.5);
      
  \node at (1.25,5.5) {$x_1$};
  \draw[thick, -)] (0.0,5.25)--(2.5, 5.25);

  \node at (1.25,1) {$x_2$};
  \draw[thick, -)] (0.0,0.75)--(2.5, 0.75);
  
  \node at (1.75,3.95) {$y_1$};
  \draw[thick, (-] (0.5,3.7)--(3, 3.7);

  \node at (1.75,2.55) {$y_2$};
  \draw[thick, (-] (0.5,2.3)--(3, 2.3);

  \end{tiny}
\end{scope}

\begin{scope}[shift={(10.25,0)}, xscale=0.75]
  \begin{tiny}

  \draw[fill=light-gray, color=light-gray] (0,0) rectangle (3,6);
  
  \draw[thick,color=red,(-)] (0,0)--(3,0);
  \draw[thick,color=red,(-)] (0,6)--(3,6);

  \node at (1.5,3) {$\text{XOR}$};

  \node at (0.5,5.5) {$x_1$};
  \draw[very thick, -] (0.0,5.25)--(1, 5.25);

  \node at (0.5,0.75) {$x_2$};
  \draw[very thick, -] (0,0.5)--(1, 0.5);

  \node at (2.5,3.95) {$y_1$};
  \draw[very thick, -] (2,3.7)--(3, 3.7);

  \node at (2.5,2.55) {$y_2$};
  \draw[very thick, -] (2,2.3)--(3, 2.3);
  
  \draw[dotted, (-)] (0.2,0)--(0.2,1.5);
  \draw[dashed, (-)] (0.2,1.5)--(0.2,4.5);
  \draw[dotted, (-)] (0.2,4.5)--(0.2,6);

  \draw[dotted, (-)] (2.8,0)--(2.8,1.5);
  \draw[dashed, (-)] (2.8,1.5)--(2.8,4.5);
  \draw[dotted, (-)] (2.8,4.5)--(2.8,6);

  \end{tiny}
\end{scope}
\end{scope}
\end{tikzpicture}
\caption{The XOR gadget depicted by its two possible representations and their schemes.}
\label{fig:eps_xor}
\end{center}
\end{figure*}
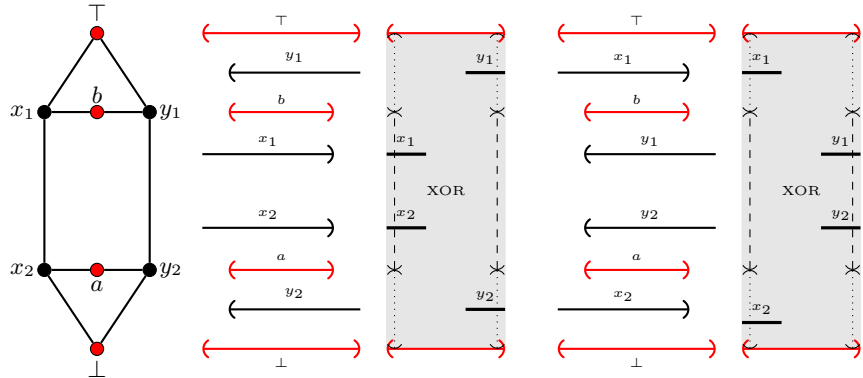

\noindent \textbf{XOR Gadget.}
  Figure~\ref{fig:eps_xor} presents a XOR gate.
  It checks that the inputs $x_1$ and $x_2$ have different boolean values.
  It also produces outputs $y_1$ and $y_2$.
  The partial representation is extendable if and only if $x_1 = \lnot x_2 = \lnot y_1 = y_2$.

  To see that, observe that the visibility gap between $b$ and $\top$ needs to be obstructed and $b$ has only two neighbors $x_1$ and $y_1$.
  Assume that $y_1$ blocks the visibility between $b$ and $\top$.
  Now $x_1$ needs to block the visibility between $b$ and the other bars.
  Thus, $x_1$ must be placed below $b$.
  The only other neighbor of $x_1$ is $x_2$ and thus, it needs to go above $a$.
  The last unrepresented vertex is $y_2$ and it needs to obstruct the visibility gap between $a$ and $\bot$.

  The other possibility is that $x_1$ blocks visibility between $b$ and $\top$.
  The analysis of this case is symmetric to the previous one and gives the second possible valuation of the variables.

  Finally, note that the visibility between every two bars does not change across the two representations and the corresponding edges of $G$ are well defined.

\begin{figure*}
\begin{center}
\begin{tikzpicture}[>=latex]
\definecolor{light-gray}{gray}{0.90}

\begin{scope}[xscale=0.7, yscale=0.7]

\begin{scope}[shift={(0,0)}]
  \begin{tiny}
  \draw[fill=light-gray, color=light-gray] (0,0) rectangle (2.95,6);

  \node at (1.5,3) {$\text{XOR}$};

  \node at (0.5,3.75) {$x_1$};
  \draw[very thick, -] (0.0,3.5)--(1, 3.5);

  \node at (0.5,2.25) {$x_2$};
  \draw[very thick, -] (0,2.5)--(1, 2.5);

  \node at (2.4,5.25) {$y_1$};
  \draw[very thick, -] (2,5.5)--(3, 5.5);

  \node at (2.4,0.75) {$y_2$};
  \draw[very thick, -] (2,0.5)--(3, 0.5);
  
  \draw[dotted, (-)] (0.2,0)--(0.2,1.5);
  \draw[dashed, (-)] (0.2,1.5)--(0.2,4.5);
  \draw[dotted, (-)] (0.2,4.5)--(0.2,6);

  \draw[dotted, (-)] (2.7,0)--(2.7,1.5);
  \draw[dashed, (-)] (2.7,1.5)--(2.7,4.5);
  \draw[dotted, (-)] (2.7,4.5)--(2.7,6);

  \end{tiny}
\end{scope}

\begin{scope}[shift={(3,3)}]
\draw[fill=light-gray, color=light-gray] (0.05,3) rectangle (3,-3);
  \draw[thick,color=red,(-)] (-3,-3)--(3,-3);
  \draw[thick,color=red,(-)] (-3,3)--(3,3);

  \begin{tiny}
  
  \draw[thick,color=red,(-)] (0,0)--(3,0);

  \node at (1.5,1.5) {$\overline{\text{NOT}}$};
  
  \draw[dotted, (-)] (0.2,2)--(0.2,3);
  \draw[dotted, (-)] (0.2,0.0)--(0.2,1);

  \draw[dotted, (-)] (2.8,2)--(2.8,3);
  \draw[dotted, (-)] (2.8,0.0)--(2.8,1);
  
  \node at (0.6,2.25) {$x$};
  \draw[very thick, -] (0.0,2.5)--(1.0, 2.5);
  
  \node at (2.5,0.75) {$y$};
  \draw[very thick, -] (2.0,0.5)--(3, 0.5);

  \end{tiny}
\end{scope}

\begin{scope}[shift={(3,0)}]

  \begin{tiny}

  \node at (1.5,1.5) {$\underline{\text{NOT}}$};
  
  \draw[dotted, (-)] (0.2,2)--(0.2,3);
  \draw[dotted, (-)] (0.2,0.0)--(0.2,1);

  \draw[dotted, (-)] (2.8,2)--(2.8,3);
  \draw[dotted, (-)] (2.8,0.0)--(2.8,1);
  
  \node at (0.6,0.75) {$x$};
  \draw[very thick, -] (0.0,0.5)--(1.0, 0.5);
  
  \node at (2.5,2.25) {$y$};
  \draw[very thick, -] (2.0,2.5)--(3, 2.5);

  \end{tiny}
\end{scope}

\begin{scope}[shift={(8,0)}]
  \begin{tiny}
  \draw[fill=light-gray, color=light-gray] (0,0) rectangle (3,6);
  
  \draw[thick,color=red,(-)] (0,0)--(3,0);
  \draw[thick,color=red,(-)] (0,6)--(3,6);
    
  \node at (2.5,3.75) {$y_1$};
  \draw[very thick, -] (2.0,3.5)--(3, 3.5);

  \node at (2.5,2.25) {$y_2$};
  \draw[very thick, -] (2,2.5)--(3, 2.5);

  \node at (1.5,3) {$\text{CXOR}$};

  \draw[very thick, -] (0,3.5)--(1, 3.5);
  \node at (0.5,3.75) {$x_1$};
  
  \draw[very thick, -] (0,2.5)--(1, 2.5);
  \node at (0.5,2.25) {$x_2$};
  
  \draw[dotted, (-)] (0.2,0)--(0.2,1.5);
  \draw[dashed, (-)] (0.2,1.5)--(0.2,4.5);
  \draw[dotted, (-)] (0.2,4.5)--(0.2,6);

  \draw[dotted, (-)] (2.8,0)--(2.8,1);
  \draw[dotted, (-)] (2.8,2)--(2.8,3);
  \draw[dotted, (-)] (2.8,3)--(2.8,4);
  \draw[dotted, (-)] (2.8,5)--(2.8,6);

  \end{tiny}
\end{scope}

\begin{scope}[shift={(12,0)}]
  \begin{tiny}
  \draw[fill=light-gray, color=light-gray] (0,0) rectangle (3,6);
  
  \draw[thick,color=red,(-)] (0,0)--(3,0);

  \draw[thick,color=red,(-)] (0,6)--(3,6);
    
  \node at (2.5,5.25) {$y_1$};
  \draw[very thick,-] (2.0,5.5)--(3, 5.5);

  \node at (2.5,0.75) {$y_2$};
  \draw[very thick, -] (2,0.5)--(3, 0.5);

  \node at (1.5,3) {$\text{CXOR}$};

  \node at (0.5,5.25) {$x_1$};
  \draw[very thick, -] (0,5.5)--(1, 5.5);

  \node at (0.5,0.75) {$x_2$};
  \draw[very thick, -] (0,0.5)--(1, 0.5);
  
  \draw[dotted, (-)] (0.2,0)--(0.2,1.5);
  \draw[dashed, (-)] (0.2,1.5)--(0.2,4.5);
  \draw[dotted, (-)] (0.2,4.5)--(0.2,6);

  \draw[dotted, (-)] (2.8,0)--(2.8,1);
  \draw[dotted, (-)] (2.8,2)--(2.8,3);
  \draw[dotted, (-)] (2.8,3)--(2.8,4);
  \draw[dotted, (-)] (2.8,5)--(2.8,6);

  \end{tiny}
\end{scope}

\end{scope}

\end{tikzpicture}
\caption{The CXOR gadget depicted by the schemes for its two possible representations.}
\label{fig:eps_cxor}
\end{center}
\end{figure*}
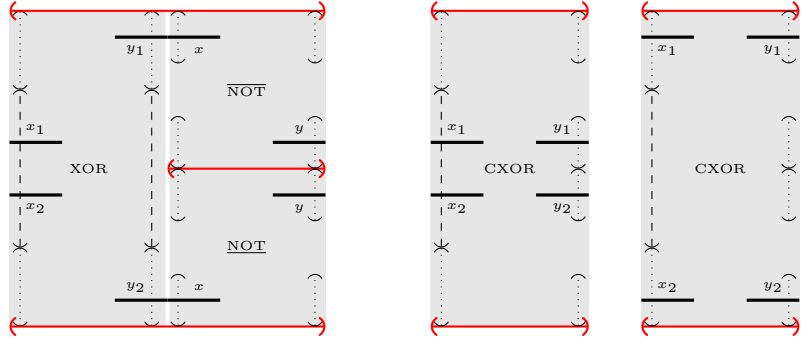

\noindent \textbf{CXOR Gadget.}
  Figure~\ref{fig:eps_cxor} presents a CXOR circuit.
  It checks that the inputs $x_1$ and $x_2$ have different boolean values and produces copies $y_1$ and $y_2$ of the inputs.
  The CXOR construction combines the XOR gate and two NOT gates in order to obtain a circuit that checks that $x_1 = \lnot x_2$, $y_1 = x_1$ and $y_2 = x_2$.

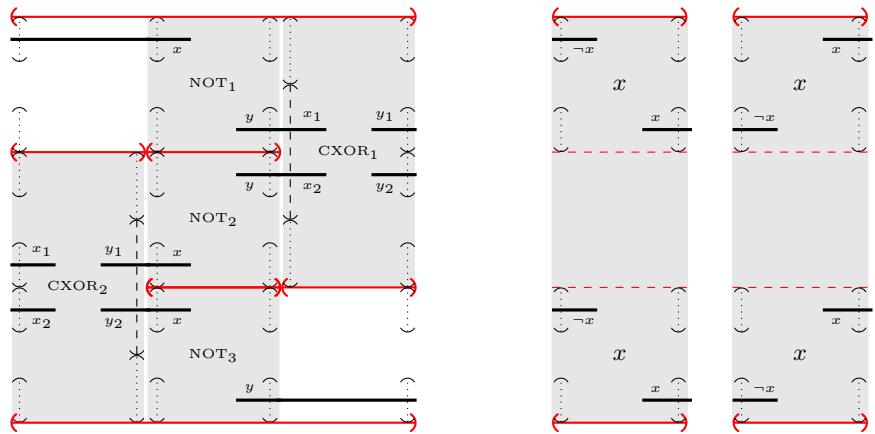
\begin{figure*}
\begin{center}
\begin{tikzpicture}[>=latex]
\definecolor{light-gray}{gray}{0.90}

\begin{scope}[xscale=0.6, yscale=0.6]

\begin{scope}[shift={(0,6)}]

  \begin{tiny}
  \draw[fill=light-gray, color=light-gray] (0.05,3) rectangle (2.95,-6);
  
  \draw[thick,color=red,(-)] (0,0)--(3,0);
  
  \node at (1.5,1.5) {$\text{NOT}_1$};
  
  \draw[dotted, (-)] (0.25,2)--(0.25,3);
  \draw[dotted, (-)] (0.25,0.0)--(0.25,1);

  \draw[dotted, (-)] (2.75,2)--(2.75,3);
  \draw[dotted, (-)] (2.75,0.0)--(2.75,1);
  
  \node at (0.7,2.25) {$x$};
  \draw[very thick, -] (0.0,2.5)--(1.0, 2.5);
  
  \node at (2.3,0.75) {$y$};
  \draw[very thick, -] (2.0,0.5)--(3, 0.5);

  \end{tiny}
\end{scope}

\begin{scope}[shift={(0,3)}]
  \begin{tiny}
  
  \draw[thick,color=red,(-)] (0,0)--(3,0);
  
  \node at (1.5,1.5) {$\text{NOT}_2$};
  
  \draw[dotted, (-)] (0.25,2)--(0.25,3);
  \draw[dotted, (-)] (0.25,0.0)--(0.25,1);

  \draw[dotted, (-)] (2.75,2)--(2.75,3);
  \draw[dotted, (-)] (2.75,0.0)--(2.75,1);
  
  \node at (0.7,0.75) {$x$};
  \draw[very thick, -] (0.0,0.5)--(1.0, 0.5);
  
  \node at (2.3,2.25) {$y$};
  \draw[very thick, -] (2.0,2.5)--(3, 2.5);

  \end{tiny}
\end{scope}

\begin{scope}[shift={(0,0)}]
  \begin{tiny}
  
  \draw[thick,color=red,(-)] (0,3)--(3,3);
  
  \node at (1.5,1.5) {$\text{NOT}_3$};
  
  \draw[dotted, (-)] (0.25,2)--(0.25,3);
  \draw[dotted, (-)] (0.25,0.0)--(0.25,1);

  \draw[dotted, (-)] (2.75,2)--(2.75,3);
  \draw[dotted, (-)] (2.75,0.0)--(2.75,1);
  
  \node at (0.7,2.25) {$x$};
  \draw[very thick, -] (0.0,2.5)--(1.0, 2.5);
  
  \node at (2.3,0.75) {$y$};
  \draw[very thick, -] (2.0,0.5)--(3, 0.5);

  \end{tiny}

\end{scope}

\begin{scope}[shift={(3,3)}]
  \begin{tiny}
  \draw[fill=light-gray, color=light-gray] (0.05,0) rectangle (2.95,6);
  
  \draw[thick,color=red,(-)] (0,0)--(3,0);    

  \node at (1.5,3) {$\text{CXOR}_1$};
  
  \node at (2.3,3.8) {$y_1$};
  \draw[very thick, -] (2.0,3.5)--(3, 3.5);

  \node at (2.3,2.2) {$y_2$};
  \draw[very thick, -] (2,2.5)--(3, 2.5);

  \draw[very thick, -] (-0.1,3.5)--(1, 3.5);
  \node at (0.7,3.8) {$x_1$};

  \draw[very thick, -] (-0.1,2.5)--(1, 2.5);
  \node at (0.7,2.2) {$x_2$};
  
  \draw[dotted, (-)] (0.2,0)--(0.2,1.5);
  \draw[dashed, (-)] (0.2,1.5)--(0.2,4.5);
  \draw[dotted, (-)] (0.2,4.5)--(0.2,6);

  \draw[dotted, (-)] (2.8,0)--(2.8,1);
  \draw[dotted, (-)] (2.8,2)--(2.8,3);
  \draw[dotted, (-)] (2.8,3)--(2.8,4);
  \draw[dotted, (-)] (2.8,5)--(2.8,6);

  \end{tiny}
\end{scope}

\begin{scope}[shift={(-3,0)}]
  \begin{tiny}
  \draw[fill=light-gray, color=light-gray] (0.05,0) rectangle (2.95,6);
  
  \draw[thick,color=red,(-)] (0,6)--(3,6);    

  \node at (1.5,3) {$\text{CXOR}_2$};
  
  \node at (2.3,3.8) {$y_1$};
  \draw[very thick, -] (2.0,3.5)--(3.1, 3.5);

  \node at (2.3,2.2) {$y_2$};
  \draw[very thick, -] (2,2.5)--(3.1, 2.5);

  \draw[very thick, -] (0,3.5)--(1, 3.5);
  \node at (0.7,3.8) {$x_1$};

  \draw[very thick, -] (0,2.5)--(1, 2.5);
  \node at (0.7,2.2) {$x_2$};
  
  \draw[dotted, (-)] (2.8,0)--(2.8,1.5);
  \draw[dashed, (-)] (2.8,1.5)--(2.8,4.5);
  \draw[dotted, (-)] (2.8,4.5)--(2.8,6);

  \draw[dotted, (-)] (0.2,0)--(0.2,1);
  \draw[dotted, (-)] (0.2,2)--(0.2,3);
  \draw[dotted, (-)] (0.2,3)--(0.2,4);
  \draw[dotted, (-)] (0.2,5)--(0.2,6);

  \end{tiny}
\end{scope}

\begin{scope}[shift={(-3,6)}]
  \draw[thick,color=red,(-)] (0,3)--(9,3);
  \draw[very thick, -] (0,2.5)--(3.1, 2.5);

  \draw[dotted, (-)] (0.2,2)--(0.2,3);
  \draw[dotted, (-)] (0.2,0.0)--(0.2,1);

\end{scope}
\begin{scope}[shift={(3,0)}]
  \draw[thick,color=red,(-)] (-6,0)--(3,0);
  \draw[very thick, -] (-0.1,0.5)--(3, 0.5);

  \draw[dotted, (-)] (2.8,2)--(2.8,3);
  \draw[dotted, (-)] (2.8,0.0)--(2.8,1);

\end{scope}

\begin{scope}[shift={(9,0)}]
\begin{scope}[shift={(0,6)}]
  \begin{tiny}  
  \draw[fill=light-gray, color=light-gray] (0,3) rectangle (3,-6);  
  \draw[thick,color=red,(-)] (0,3)--(3,3); 
  
  \draw[dotted, (-)] (0.2,2)--(0.2,3);
  \draw[dotted, (-)] (0.2,0.0)--(0.2,1);

  \draw[dotted, (-)] (2.8,2)--(2.8,3);
  \draw[dotted, (-)] (2.8,0.0)--(2.8,1);
  
  \begin{footnotesize}
  \node at (1.5,1.5) {$x$};
  \end{footnotesize}
  
  \node at (0.7,2.25) {$\lnot x$};
  \draw[very thick, -] (0.0,2.5)--(1.0, 2.5);
  
  \node at (2.3,0.75) {$x$};
  \draw[very thick, -] (2.0,0.5)--(3.1, 0.5);

  \end{tiny}
\end{scope}

\begin{scope}[shift={(0,3)}]

  \begin{tiny}
  \tikzstyle{every node}=[inner sep=2pt,fill=white]
  
    \draw[dashed,color=red,-] (0,3)--(3,3);
    \draw[dashed,color=red,-] (0,0)--(3,0);

  \end{tiny}
\end{scope}

\begin{scope}[shift={(0,0)}]
  \begin{tiny}  
  \draw[thick,color=red,(-)] (0,0)--(3,0); 
  
  \draw[dotted, (-)] (0.2,2)--(0.2,3);
  \draw[dotted, (-)] (0.2,0.0)--(0.2,1);

  \draw[dotted, (-)] (2.8,2)--(2.8,3);
  \draw[dotted, (-)] (2.8,0.0)--(2.8,1);
  
  \begin{footnotesize}
  \node at (1.5,1.5) {$x$};
  \end{footnotesize}
  
  \node at (0.7,2.25) {$\lnot x$};
  \draw[very thick, -] (0.0,2.5)--(1.0, 2.5);
  
  \node at (2.3,0.75) {$x$};
  \draw[very thick, -] (2.0,0.5)--(3.1, 0.5);

  \end{tiny}
\end{scope}

\end{scope}

\begin{scope}[shift={(13,0)}]
\begin{scope}[shift={(0,6)}]
  \begin{tiny}  
  \draw[fill=light-gray, color=light-gray] (0,3) rectangle (3,-6);  
  \draw[thick,color=red,(-)] (0,3)--(3,3); 
  
  \draw[dotted, (-)] (0.2,2)--(0.2,3);
  \draw[dotted, (-)] (0.2,0.0)--(0.2,1);

  \draw[dotted, (-)] (2.8,2)--(2.8,3);
  \draw[dotted, (-)] (2.8,0.0)--(2.8,1);
  
  \begin{footnotesize}
  \node at (1.5,1.5) {$x$};
  \end{footnotesize}
  
  \node at (0.7,0.75) {$\lnot x$};
  \draw[very thick, -] (0.0,0.5)--(1.0, 0.5);
  
  \node at (2.3,2.25) {$x$};
  \draw[very thick, -] (2.0,2.5)--(3.1, 2.5);

  \end{tiny}
\end{scope}

\begin{scope}[shift={(0,3)}]

  \begin{tiny}
  \tikzstyle{every node}=[inner sep=2pt,fill=white]
  
    \draw[dashed,color=red,-] (0,3)--(3,3);
    \draw[dashed,color=red,-] (0,0)--(3,0);

  \end{tiny}
\end{scope}

\begin{scope}[shift={(0,0)}]
  \begin{tiny}  
  \draw[thick,color=red,(-)] (0,0)--(3,0); 
  
  \draw[dotted, (-)] (0.2,2)--(0.2,3);
  \draw[dotted, (-)] (0.2,0.0)--(0.2,1);

  \draw[dotted, (-)] (2.8,2)--(2.8,3);
  \draw[dotted, (-)] (2.8,0.0)--(2.8,1);
  
  \begin{footnotesize}
  \node at (1.5,1.5) {$x$};
  \end{footnotesize}
  
  \node at (0.7,0.75) {$\lnot x$};
  \draw[very thick, -] (0.0,0.5)--(1.0, 0.5);
  
  \node at (2.3,2.25) {$x$};
  \draw[very thick, -] (2.0,2.5)--(3.1, 2.5);

  \end{tiny}
\end{scope}

\end{scope}

\end{scope}
\end{tikzpicture}
\caption{On the left: the gadget for the variable $x$ with two output slots 
and one of its possible representation for the negative value of $x$. 
On the right: schemes of the gadget for the negative and positive value of $x$, respectively.}
\label{fig:eps_var}
\end{center}
\end{figure*}

\noindent \textbf{Variable Gadget.}
  Using NOT gates and CXOR circuits it is easy to construct a variable gadget.
  Figure~\ref{fig:eps_var} presents a gadget that gives two wires with value $x$ to the right side, and two wires with value $\lnot x$ to the left side.
  If we need $k$ copies of a variable, we simply stack $2k-1$ NOT gates one on another, and add CXOR gates to check the consistency of the outputs produced by every second NOT gate.

  All we need to finish the construction of our building blocks is a clause gadget that checks that at least one of three wires connected to it transmits a positive value.

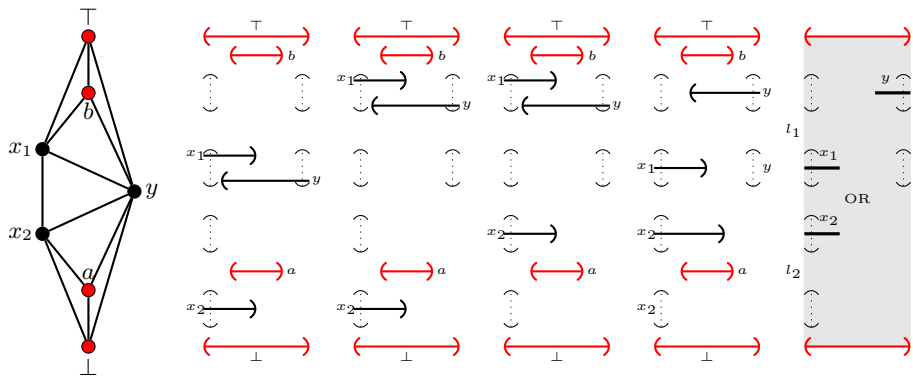
\begin{figure*}
\begin{center}
\begin{tikzpicture}[>=latex]
\definecolor{light-gray}{gray}{0.90}

\begin{scope}[xscale=0.47, yscale=0.5]

   \begin{scope}[shift={(-8.25,0)}]
    \tikzstyle{every node}=[circle,minimum size=5pt,inner sep=0pt,draw,fill]
    \begin{footnotesize}
    \node[label=above:$\top$,fill=red] (top) at (5.0,8.25) {};
    \node[label=below:$b$,fill=red] (b) at (5,6.75) {};
    \node[label=left:$x_1$,fill=black] (x1) at (3.7,5.25) {};
    \node[label=right:$y$,fill=black] (y) at (6.3,4.125) {};
    \node[label=left:$x_2$,fill=black] (x2) at (3.7,3) {};
    \node[label=above:$a$,fill=red] (a) at (5,1.5) {};
    \node[label=below:$\bot$,fill=red] (bot) at (5.0,0) {};
    \draw[thick,-] (bot)--(a);
    \draw[thick,-] (bot)--(x2);
    \draw[thick,-] (bot)--(y);
    \draw[thick,-] (a)--(y);
    \draw[thick,-] (y)--(x1);
    \draw[thick,-] (x1)--(x2);
    \draw[thick,-] (y)--(x2);
    \draw[thick,-] (a)--(x2);
    \draw[thick,-] (b)--(y);
    \draw[thick,-] (b)--(x1);
    \draw[thick,-] (top)--(x1);
    \draw[thick,-] (top)--(y);
    \draw[thick,-] (top)--(b);

    \end{footnotesize}
  \end{scope}

\begin{scope}[shift={(0.0,0)}]
\begin{tiny}
\begin{scope}[shift={(0,-1.25)}]
  
  \node at (1.5,9.8) {$\top$};
  \draw[thick,color=red,(-)] (0,9.5)--(3,9.5);

  \node at (2.5,9) {$b$};
  \draw[thick,color=red,(-)] (0.75,9)--(2.25,9);


  \draw[dotted, (-)] (2.8,7.5)--(2.8,8.5);

  \node at (3.2,5.66) {$y$};
  \draw[thick, (-] (0.5,5.66)--(3, 5.66);
  \draw[dotted, (-)] (2.8,5.5)--(2.8,6.5);
  
  \draw[dotted, (-)] (0.2,7.5)--(0.2,8.5);

  \node at (-0.2,6.33) {$x_1$};
  \draw[thick, -)] (0,6.33)--(1.5, 6.33);
  \draw[dotted, (-)] (0.2,5.5)--(0.2,6.5);
\end{scope}  
  \draw[dotted, (-)] (0.2,2.5)--(0.2,3.5);

  \draw[thick,color=red,(-)] (0.75,2)--(2.25,2);
  \node at (2.5,2) {$a$};

  \node at (-0.2,1) {$x_2$};
  \draw[thick, -)] (0,1)--(1.5, 1);
  \draw[dotted, (-)] (0.2,0.5)--(0.2,1.5);

  \node at (1.5,-0.25) {$\bot$};
  \draw[thick,color=red,(-)] (0,0)--(3,0);
      
  \end{tiny}
\end{scope}

\begin{scope}[shift={(4.25,0)}]
\begin{tiny}
\begin{scope}[shift={(0,-1.25)}]
  
  \node at (1.5,9.8) {$\top$};
  \draw[thick,color=red,(-)] (0,9.5)--(3,9.5);

  \node at (2.5,9) {$b$};
  \draw[thick,color=red,(-)] (0.75,9)--(2.25,9);


  \node at (3.2,7.66) {$y$};
  \draw[thick, (-] (0.5,7.66)--(3, 7.66);
  \draw[dotted, (-)] (2.8,7.5)--(2.8,8.5);

  \draw[dotted, (-)] (2.8,5.5)--(2.8,6.5);
  
  \node at (-0.2,8.33) {$x_1$};
  \draw[thick, -)] (0,8.33)--(1.5, 8.33);
  \draw[dotted, (-)] (0.2,7.5)--(0.2,8.5);

  \draw[dotted, (-)] (0.2,5.5)--(0.2,6.5);

\end{scope}
  
  \draw[dotted, (-)] (0.2,2.5)--(0.2,3.5);

  \draw[thick,color=red,(-)] (0.75,2)--(2.25,2);
  \node at (2.5,2) {$a$};

  \node at (-0.2,1) {$x_2$};
  \draw[thick, -)] (0,1)--(1.5, 1);
  \draw[dotted, (-)] (0.2,0.5)--(0.2,1.5);

  \node at (1.5,-0.25) {$\bot$};
  \draw[thick,color=red,(-)] (0,0)--(3,0);
      
  \end{tiny}
\end{scope}

\begin{scope}[shift={(8.5,0)}]
\begin{tiny}
\begin{scope}[shift={(0,-1.25)}]
  
  \node at (1.5,9.8) {$\top$};
  \draw[thick,color=red,(-)] (0,9.5)--(3,9.5);

  \node at (2.5,9) {$b$};
  \draw[thick,color=red,(-)] (0.75,9)--(2.25,9);


  \node at (3.2,7.66) {$y$};
  \draw[thick, (-] (0.5,7.66)--(3, 7.66);
  \draw[dotted, (-)] (2.8,7.5)--(2.8,8.5);

  \draw[dotted, (-)] (2.8,5.5)--(2.8,6.5);
  
  \node at (-0.2,8.33) {$x_1$};
  \draw[thick, -)] (0,8.33)--(1.5, 8.33);
  \draw[dotted, (-)] (0.2,7.5)--(0.2,8.5);

  \draw[dotted, (-)] (0.2,5.5)--(0.2,6.5);
\end{scope}

  \node at (-0.2,3) {$x_2$};
  \draw[thick, -)] (0,3)--(1.5, 3);
  \draw[dotted, (-)] (0.2,2.5)--(0.2,3.5);

  \draw[thick,color=red,(-)] (0.75,2)--(2.25,2);
  \node at (2.5,2) {$a$};

  \draw[dotted, (-)] (0.2,0.5)--(0.2,1.5);

  \node at (1.5,-0.25) {$\bot$};
  \draw[thick,color=red,(-)] (0,0)--(3,0);
      
  \end{tiny}
\end{scope}

\begin{scope}[shift={(12.75,0)}]
\begin{tiny}
\begin{scope}[shift={(0,-1.25)}]
  
  \node at (1.5,9.8) {$\top$};
  \draw[thick,color=red,(-)] (0,9.5)--(3,9.5);

  \node at (2.5,9) {$b$};
  \draw[thick,color=red,(-)] (0.75,9)--(2.25,9);


  \node at (3.2,8) {$y$};
  \draw[thick, (-] (1,8)--(3, 8);
  \draw[dotted, (-)] (2.8,7.5)--(2.8,8.5);

  \node at (3.2,6) {$y$};
  \draw[dotted, (-)] (2.8,5.5)--(2.8,6.5);
  
  \draw[dotted, (-)] (0.2,7.5)--(0.2,8.5);

  \node at (-0.2,6) {$x_1$};
  \draw[thick, -)] (0,6)--(1.5, 6);
  \draw[dotted, (-)] (0.2,5.5)--(0.2,6.5);

\end{scope}
  
  \node at (-0.2,3) {$x_2$};
  \draw[thick, -)] (0,3)--(2, 3);
  \draw[dotted, (-)] (0.2,2.5)--(0.2,3.5);

  \draw[thick,color=red,(-)] (0.75,2)--(2.25,2);
  \node at (2.5,2) {$a$};

  \node at (-0.2,1) {$x_2$};
  \draw[dotted, (-)] (0.2,0.5)--(0.2,1.5);

  \node at (1.5,-0.25) {$\bot$};
  \draw[thick,color=red,(-)] (0,0)--(3,0);
      
  \end{tiny}
\end{scope}

\begin{scope}[shift={(17,0)}]
\begin{tiny}
\begin{scope}[shift={(0,-1.25)}]
  \draw[fill=light-gray, color=light-gray] (0,1.25) rectangle (3,9.5);
  
  \draw[thick,color=red,(-)] (0,9.5)--(3,9.5);


  \node at (2.3,8.35) {$y$};
  \draw[very thick, -] (2,8)--(3, 8);
  \draw[dotted, (-)] (2.8,7.5)--(2.8,8.5);

  \draw[dotted, (-)] (2.8,5.5)--(2.8,6.5);
  
  \draw[dotted, (-)] (0.2,7.5)--(0.2,8.5);
  \node at (-0.3,7) {$l_1$};

  \node at (0.7,6.35) {$x_1$};
  \draw[very thick, -] (0,6)--(1, 6);
  \draw[dotted, (-)] (0.2,5.5)--(0.2,6.5);

\end{scope}
  \node at (1.5,3.95) {$\text{OR}$};
  
  \node at (0.7,3.35) {$x_2$};
  \draw[very thick, -] (0,3)--(1, 3);
  \draw[dotted, (-)] (0.2,2.5)--(0.2,3.5);

  \node at (-0.3,2) {$l_2$};
  \draw[dotted, (-)] (0.2,0.5)--(0.2,1.5);
  
  \draw[thick,color=red,(-)] (0,0)--(3,0);
      
  \end{tiny}
\end{scope}

\end{scope}
\end{tikzpicture}
\caption{The OR Gadget.}
\label{fig:eps_or}
\end{center}
\end{figure*}

\noindent \textbf{OR Gadget and Clause Gadget.}
  Figure~\ref{fig:eps_or} presents an OR gate that has two inputs $x_1$ and $x_2$ and one output $y$.
  The output value can be positive only if at least one of the inputs is positive.
  Case analysis shows that in each of these three scenarios $y$ can be represented in the higher of its regions.
  However, if both $x_1$ and $x_2$ are negative then $y$ must be represented in the lower of its regions.
  Also, the visibility between vertices does not change across the scenarios.

\begin{figure*}
\begin{center}
\begin{tikzpicture}[>=latex]
\definecolor{light-gray}{gray}{0.90}

\begin{scope}[xscale=0.7, yscale=0.7]

\begin{scope}[shift={(0,0)}]
  \begin{tiny}

  \draw[fill=light-gray, color=light-gray] (0,4) rectangle (2.95,11);

  \draw[dotted, (-)] (2.7,9.5)--(2.7,10.5);
  \draw[dotted, (-)] (2.7,8)--(2.7,9);
  \node at (2.25,8.75) {$y$};

  \node at (1.5,7.5) {$\text{OR}_1$};
  \node at (-0.2,9.25) {$l_1$};
  \draw[dotted, (-)] (0.2,9.5)--(0.2,10.5);
  \draw[dotted, (-)] (0.2,8)--(0.2,9);
  \node at (0.75,8.75) {$x_1$};
  \draw[very thick, -] (0,8.5)--(1, 8.5);

  \node at (-0.2,5.75) {$l_2$};
  \draw[dotted, (-)] (0.2,6)--(0.2,7);
  \draw[dotted, (-)] (0.2,4.5)--(0.2,5.5);
  \node at (0.75,5.25) {$x_2$};
  \draw[very thick, -] (0,5)--(1, 5);
  \draw[ thick, red, -] (0,4)--(3,4);


  \node at (-0.2,1.75) {$l_3$};
  
  \draw[dotted, (-)] (0.2,2.0)--(0.2,3);
  \draw[dotted, (-)] (0.2,0.5)--(0.2,1.5);
  

  \draw[fill=light-gray, color=light-gray] (3.05,0) rectangle (6,11);
  \node at (5.25,10.25) {$y$};
  \draw[very thick, -)] (5,10)--(7.5, 10);
  \draw[dotted, (-)] (5.8,9.5)--(5.8,10.5);
  \draw[dotted, (-)] (5.8,8)--(5.8,9);

  \node at (3.75,8.75) {$x_1$};
  \draw[very thick, -] (2,8.5)--(4, 8.5);
  \draw[dotted, (-)] (3.2,9.5)--(3.2,10.5);
  \draw[dotted, (-)] (3.2,8)--(3.2,9);
  \node at (4.5,5.5) {$\text{OR}_2$};

  \draw[dotted, (-)] (3.3,2.0)--(3.3,3);
  \draw[dotted, (-)] (3.3,0.5)--(3.3,1.5);
  \node at (3.75,2.75) {$x_2$};
  \draw[very thick, -] (0,2.5)--(4, 2.5);


  \draw[ thick, red, (-)] (6.5,9.5)--(7.5, 9.5);
  \node at (7.7,9.5) {$b$};
  \draw[ thick, red, (-)] (6.5,10.5)--(7.5, 10.5);
  \node at (7.7,10.5) {$a$};

  \draw[ thick, red, (-)] (0,0)--(8, 0);
  \draw[ thick, red, (-)] (0,11)--(8, 11);
    
  \end{tiny}
\end{scope}

\begin{scope}[shift={(12,0)}]
  \begin{tiny}

  \draw[fill=light-gray, color=light-gray] (0,0) rectangle (3,11);

  \node at (1.65,5.7) {$l_1 \vee l_2 \vee l_3$};
  \node at (-0.2,9.25) {$l_1$};
  \draw[dotted, (-)] (0.2,9.5)--(0.2,10.5);
  \draw[dotted, (-)] (0.2,8)--(0.2,9);
  \draw[very thick, -] (0,8.5)--(1, 8.5);

  \node at (-0.2,5.75) {$l_2$};
  \draw[dotted, (-)] (0.2,6)--(0.2,7);
  \draw[dotted, (-)] (0.2,4.5)--(0.2,5.5);
  \draw[very thick, -] (0,5)--(1, 5);

  \node at (-0.2,1.75) {$l_3$};
  
  \draw[dotted, (-)] (0.2,2.0)--(0.2,3);
  \draw[dotted, (-)] (0.2,0.5)--(0.2,1.5);
  \draw[very thick, -] (0,2.5)--(1, 2.5);
  \draw[ thick, red, (-)] (0,0)--(3, 0);
  \draw[ thick, red, (-)] (0,11)--(3,11);
\end{tiny}  
\end{scope}
\end{scope}

\end{tikzpicture}
\caption{The construction for a clause $l_1 \vee l_2 \vee l_3$ and its representation for the negative value of $l_1$ and $l_2$ and the positive value of $l_3$.}
\label{fig:eps_clause}
\end{center}
\end{figure*}
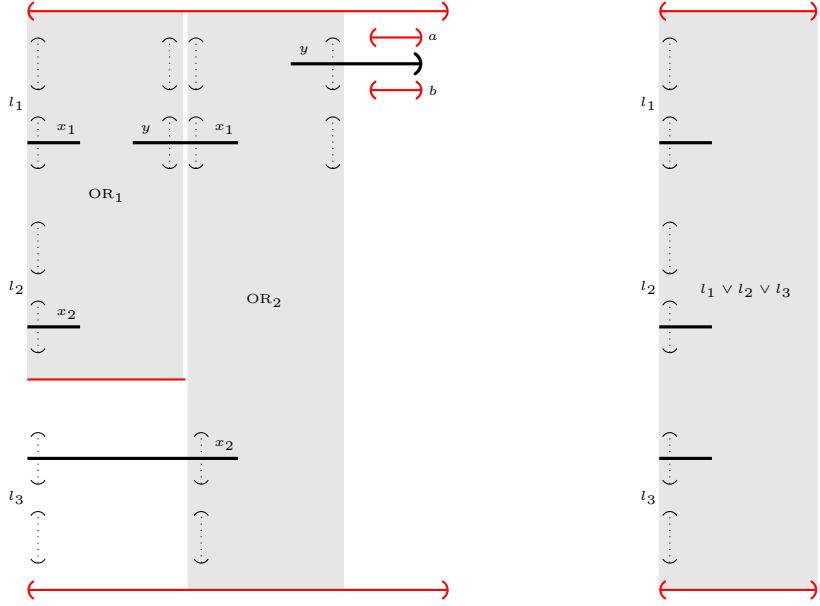

  Combining two OR gates and two bars that ensure that the output of the second gate is positive, we get a clause gadget presented in Figure~\ref{fig:eps_clause}.


  Given an instance $\phi$ of $\pmthreesat$ (together with a rectilinear planar representation of $\phi$),
  we show how we construct the graph $G$ with a partial representation $\psi'$.
  We rotate the rectilinear representation by $90$ degrees.
  Now, we replace vertical segments representing variables of $\phi$ with variable gadgets and vertical segments representing clauses of $\phi$ with clause gadgets.
  Finally, for each occurrence of variable $x$ in clause $C$, we replace the horizontal connection between the segment of $x$ and the segment of $C$ by a wire connecting the appropriate gadgets.
  The properties of our gadgets assert that $\phi$ is satisfiable iff $\psi'$ is extendable to a bar visibility representation of $G$.
\end{proof}

\subsection{Grid representations}
\label{app:hardness_grid}

In this section we consider the following problem:

\smallskip
\noindent \underline{Grid Bar Visibility Representation Extension for Directed Graphs}:\smallskip\\
\noindent{\bf Input:}
 $(G, \psi')$; $G$ is a digraph; $\psi'$ is a map assigning bars to a $V' \subseteq V(G)$.\\
\noindent{\bf Question:}
Does $G$ admit a grid bar visibility representation $\psi$ with $\psi|V' = \psi'$?\smallskip\\
In what follows we show that the above problem is \NP-complete.
The proof is generic and can be easily modified to work for other grid representations including undirected, rectangular directed/undirected, and even other models of visibility.

\begin{theorem}\label{th:var-emb-hardness-grid}
Grid Bar Visibility Representation Extension problem is \NP-complete.
\end{theorem}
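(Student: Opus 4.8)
The problem is clearly in \NP: the partial representation $\psi'$ supplies integer coordinates as part of the input, and a standard coordinate-compression argument shows that whenever a grid extension exists one can be chosen whose coordinates are bounded by a polynomial in the input size (grid columns and rows carrying no bar endpoint may be collapsed), so a complete representation is a polynomial-size certificate checkable in polynomial time. For hardness I would reduce from \threepart{}, which is the natural source of a packing obstruction and, crucially, is \emph{strongly} \NP-complete; this lets me encode the weights $a_1,\ldots,a_{3m}$ and the target $w$ in unary, so that gadgets of width proportional to each $a_i$ still yield a polynomial-size instance. The reason \threepart{} is the right problem is precisely the reason Theorem~\ref{th:main-plane-st} succeeds over the reals: in the real-valued model a bar can always be stretched to absorb slack, whereas on the grid the integrality of widths turns exact filling into a genuine packing constraint.

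Given an instance $w, a_1,\ldots,a_{3m}$ with $\tfrac{w}{4} < a_i < \tfrac{w}{2}$, I would build a series-parallel $st$-graph $G$ together with a partial grid representation $\psi'$ as follows. A rigid frame of fixed (red) bars serves as a container that is partitioned into $m$ horizontal compartments, each spanned between two consecutive frame bars and each of available width exactly $w$. For every weight $a_i$ I would attach an ``item'' gadget: a small connected block of vertices whose admissible placements are, by adjacency to the frame, exactly the $m$ compartments, and whose horizontal extent is pinned to exactly $a_i$ integer columns. The item vertices remain unrepresented in $\psi'$, so the extension must decide, for each item, which compartment receives it and where inside that compartment it sits. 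Since the poles are attached along the frame, $G$ is a series-parallel $st$-graph, matching the setting of Section~\ref{sec:conclusion}.

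Correctness would then follow from two design features. First, within any single compartment the items placed there must appear side by side without horizontal overlap, and because the frame forbids any empty column inside a compartment (an empty column would open an illegitimate visibility gap to a frame bar, creating an edge not present in $G$), the items in a compartment must tile its width $w$ exactly. Second, the inequality $\tfrac{w}{4} < a_i < \tfrac{w}{2}$ forces precisely three items per compartment: any two items occupy total width in $(\tfrac{w}{2},w)$ and so cannot fill a compartment, whereas any four already exceed $w$, so a compartment of width $w$ can be tiled only by a triple summing to exactly $w$. Hence a grid extension of $\psi'$ exists if and only if the $3m$ items split into $m$ triples of total width $w$, \ie{} if and only if the \threepart{} instance is a yes-instance; the same frame-and-item scheme adapts to the undirected and rectangular variants by only changing how the poles are attached.

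The main obstacle I anticipate is realizing the two rigidity requirements simultaneously in the visibility model: pinning each item's width to exactly $a_i$ columns, and forbidding empty columns within a compartment, while keeping every forced edge genuinely forced and keeping $G$ series-parallel. Both requirements are exactly what the real-valued algorithm of Theorem~\ref{th:main-plane-st} is permitted to violate by stretching, so the gadgets must exploit integrality in an essential way. Verifying that no unintended visibility gap arises between an item and the frame, and that items can neither shrink, expand, nor leak between compartments, is where the bulk of the case analysis will lie, in the spirit of the gadget verification carried out for Theorem~\ref{th:var-emb-hardness-general}.
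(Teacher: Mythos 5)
Your skeleton matches the paper's proof: a reduction from the strongly \NP-complete \threepart{} problem (exploiting unary encoding), a frame of fixed bars creating $m$ compartments of horizontal width exactly $w$, floating item gadgets whose width is tied to $a_i$, and the inequality $\tfrac{w}{4} < a_i < \tfrac{w}{2}$ forcing three items per compartment. The genuine gap is that the item gadget --- the crux of the whole reduction --- is never constructed; you explicitly defer it as the ``main obstacle,'' but without it there is no proof. Worse, your correctness plan demands two gadget properties that are stronger than necessary, and it is precisely these demands that make the gadget look hard to build: (i) pinning each item's width to \emph{exactly} $a_i$ columns, and (ii) forbidding any empty column inside a compartment so that items must tile it exactly.

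The paper's construction shows both demands can be dropped. Take $G' = K_{2,m+1}$ with poles $s,t$ and middle vertices $u_0,\ldots,u_m$; fix $s$ at $y=0$ and $t$ at $y=4$, both spanning the full width $W+m+1$ where $W=mw$, and fix each $u_j$ as a unit-width bar at $y=2$ placed so that consecutive $u_j$'s leave gaps of width exactly $w$. The item for $a_i$ is simply $H_i = K_{2,a_i}$ with poles $s_i,t_i$, attached by the edges $(s,s_i)$ and $(t_i,t)$; the resulting $G$ is series-parallel. Integrality enters only through a width \emph{lower bound}: the $a_i$ middle vertices of $H_i$ are pairwise nonadjacent, so their bars must have pairwise disjoint $x$-extents, each of width at least $1$ on the grid, whence every grid representation of $H_i$ has width at least $a_i$. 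Correctness is then a global counting argument rather than a local tiling one: the fixed bars force every $H_i$ to lie strictly between some consecutive $u_{j-1},u_j$; since each $a_i > \tfrac{w}{4}$, at most three items fit per compartment; with $3m$ items and $m$ compartments, exactly three lie in each; each compartment's triple has $a_{i_1}+a_{i_2}+a_{i_3} \leq w$, and since $\sum_i a_i = mw$, every triple sums to exactly $w$. No no-empty-column gadgetry and no exact width pinning are needed (for the forward direction one simply draws each $K_{2,a_i}$ at width exactly $a_i$, filling each compartment). Your \NP-membership argument via coordinate compression is fine --- the paper does not even spell it out --- but to close the proof you must replace your unspecified gadget and exact-tiling argument by a concrete gadget such as $K_{2,a_i}$ together with the counting argument above.
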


\begin{proof}
We use the \threepart{} problem to show \NP-hardness.
Consider an instance $w,$ $a_1,$ $\ldots,$ $a_{3m}$ of the \threepart{} problem.
Let $W = \sum^{3m}_{j=1} a_j$, \ie{}, $W = mw$.
From this instance of \threepart{}, we create a graph $G$ and a partial representation $\psi'$ that assigns bars to a subset $V'$ of $V(G)$.
These are constructed as follows and depicted in Figure~\ref{fig:3part}.

\begin{figure*}
\begin{tikzpicture}[>=latex]
\definecolor{light-gray}{gray}{0.60}
\begin{scope}[xscale=0.47, yscale=0.5]
\begin{tiny}
    {\thinmuskip=0.5mu
      \medmuskip=0.5mu plus 0.5mu minus 0.5mu
      \thickmuskip=0.5mu plus 0.5mu minus 0.5mu

\begin{scope}[shift={(0,0)}]
  \node (s) at (4,0) {$s$};
  \node (t) at (4,4) {$t$};
  \node (u0) at (0,2) {$u_0$};
  \node (u1) at (1,2) {$u_1$};
  \node (udots) at (2,2) {$\cdots$};
  \node (um) at (3,2) {$u_m$};
  \draw (s) edge[->] (u0);
  \draw (u0) edge[->] (t);
  \draw (s) edge[->] (u1);
  \draw (u1) edge[->] (t);
  \draw (s) edge[->] (um);
  \draw (um) edge[->] (t);
  \node[inner sep=0pt,outer sep=0pt,minimum size=0pt] (h1l) at (4.25,2) {};
  \node[inner sep=0pt,outer sep=0pt,minimum size=0pt] (h1r) at (5.75,2) {};
  \node (h1t) at (5,3) {$t_1$};
  \node (h1n) at (5,2) {$H_1$};
  \node (h1b) at (5,1) {$s_1$};
  \draw (h1l) -- (h1t) -- (h1r) -- (h1b) -- (h1l);
  \draw (s) edge[->] (h1b);
  \draw (h1t) edge[->] (t);
  \node[inner sep=0pt,outer sep=0pt,minimum size=0pt] (h2l) at (6.25,2) {};
  \node[inner sep=0pt,outer sep=0pt,minimum size=0pt] (h2r) at (7.75,2) {};
  \node (h2t) at (7,3) {$t_2$};
  \node (h2n) at (7,2) {$H_2$};
  \node (h2b) at (7,1) {$s_2$};
  \draw (h2l) -- (h2t) -- (h2r) -- (h2b) -- (h2l);
  \draw (s) edge[->] (h2b);
  \draw (h2t) edge[->] (t);
  \node (hdots) at (8.5,2) {$\cdots$};
  \node[inner sep=0pt,outer sep=0pt,minimum size=0pt] (hml) at (9.25,2) {};
  \node[inner sep=0pt,outer sep=0pt,minimum size=0pt] (hmr) at (10.75,2) {};
  \node (hmt) at (10,3) {$t_{3m}$};
  \node (hmn) at (10,2) {$H_{3m}$};
  \node (hmb) at (10,1) {$s_{3m}$};
  \draw (hml) -- (hmt) -- (hmr) -- (hmb) -- (hml);
  \draw (s) edge[->] (hmb);
  \draw (hmt) edge[->] (t);
\end{scope}

\begin{scope}[shift={(3.5,-6)}]
  \node[] (s) at (2,0) {$s_i$};
  \node[] (t) at (2,4) {$t_i$};
  \node[circle,fill=black,inner sep=1pt,outer sep=1pt] (u0) at (0,2) {};
  \node[circle,fill=black,inner sep=1pt,outer sep=1pt] (u1) at (1,2) {};
  \node (udots) at (2,2) {$\cdots$};
  \node[circle,fill=black,inner sep=1pt,outer sep=1pt] (um) at (3,2) {};
  \node[circle,fill=black,inner sep=1pt,outer sep=1pt] (un) at (4,2) {};
  \draw (s) edge[->] (u0);
  \draw (u0) edge[->] (t);
  \draw (s) edge[->] (u1);
  \draw (u1) edge[->] (t);
  \draw (s) edge[->] (um);
  \draw (um) edge[->] (t);
  \draw (s) edge[->] (un);
  \draw (un) edge[->] (t);
\end{scope}

\begin{scope}[shift={(13,0)}]
  \begin{scope}[color=light-gray]
  \draw (-0.5,0)--(12,0);
  \draw (-0.5,1)--(12,1);
  \draw (-0.5,2)--(12,2);
  \draw (-0.5,3)--(12,3);
  \draw (-0.5,4)--(12,4);
  \node at (-0.75,0) {$0$};
  \node at (-0.75,1) {$1$};
  \node at (-0.75,2) {$2$};
  \node at (-0.75,3) {$3$};
  \node at (-0.75,4) {$4$};

  \draw (0,-0.5)--(0,4.5);
  \draw (2.5,-0.5)--(2.5,4.5);
  \draw (4.5,-0.5)--(4.5,4.5);
  \draw (7,-0.5)--(7,4.5);
  \draw (9,-0.5)--(9,4.5);
  \draw (11.5,-0.5)--(11.5,4.5);
  \node at (0,-0.9) {$0$};
  \node at (2.5,-0.9) {$1$};
  \node at (4.5,-0.9) {$iw+i$};
  \node at (7,-0.9) {$iw+i+1$};
  \node at (9,-0.9) {$W+m$};
  \node at (11.5,-0.9) {$W+m+1$};
  \end{scope}

  \draw[thick, (-)] (0,0)--(11.5,0);
  \node at (5.75,0.4) {$\psi'(s)$};
  \draw[thick, (-)] (0,4)--(11.5,4);
  \node at (5.75,4.4) {$\psi'(t)$};
  \draw[thick, (-)] (0,2)--(2.5,2);
  \node at (1.25,2.4) {$\psi'(u_0)$};
  \node at (3.5,2.4) {$\cdots$};
  \draw[thick, (-)] (4.5,2)--(7,2);
  \node at (5.75,2.4) {$\psi'(u_i)$};
  \node at (8,2.4) {$\cdots$};
  \draw[thick, (-)] (9,2)--(11.5,2);
  \node at (10.25,2.4) {$\psi'(u_m)$};

\end{scope}

\begin{scope}[shift={(14.75,-6)}]

  \begin{scope}[color=light-gray]
  \draw (-0.5,1)--(8.5,1);
  \draw (-0.5,2)--(8.5,2);
  \draw (-0.5,3)--(8.5,3);
  \node at (-0.75,1) {$1$};
  \node at (-0.75,2) {$2$};
  \node at (-0.75,3) {$3$};
  \draw (0,0.5)--(0,3.5);
  \draw (1.5,0.5)--(1.5,3.5);
  \draw (3,0.5)--(3,3.5);
  \draw (5,0.5)--(5,3.5);
  \draw (6.5,0.5)--(6.5,3.5);
  \draw (8,0.5)--(8,3.5);
  \node at (0,0.1) {$0$};
  \node at (1.5,0.1) {$1$};
  \node at (3,0.1) {$2$};
  \node at (5,0.1) {$a_i-2$};
  \node at (6.5,0.1) {$a_i-1$};
  \node at (8,0.1) {$a_i$};
  \end{scope}
  
  \draw[thick, (-)] (0,1)--(8,1);
  \node at (4,1.4) {$\psi(s_i)$};
  \draw[thick, (-)] (0,3)--(8,3);
  \node at (4,3.4) {$\psi(t_i)$};
  \draw[thick, (-)] (0,2)--(1.5,2);
  \draw[thick, (-)] (1.5,2)--(3,2);
  \node at (4,2.4) {$\cdots$};
  \draw[thick, (-)] (5,2)--(6.5,2);
  \draw[thick, (-)] (6.5,2)--(8,2);

\end{scope}

    }
\end{tiny}
\end{scope}
\end{tikzpicture}
\caption{The graph $G$, partial representation $\psi'$, graph $H_i$, and the representation $\psi$ of $H_i$ with minimum width.}
\label{fig:3part}
\end{figure*}
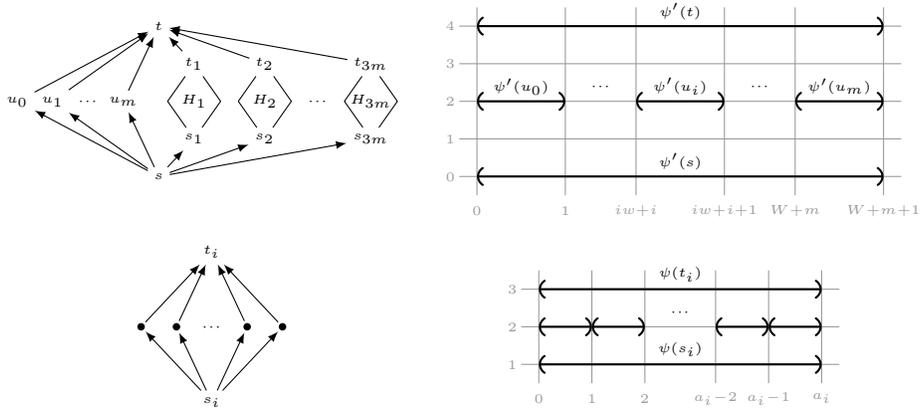

The graph $G$ is constructed as follows.
We start with $G'$ which is a $K_{2,m+1}$ with source $s$ and sink $t$ as the two vertices of degree $m+1$ and the other vertices are labeled $u_0, \ldots, u_{m}$.
Now, for each $i=1,\ldots,3m$, create an $st$-graph $H_i$ which is a $K_{2,a_i}$ with source $s_i$ and sink $t_i$ as its two vertices of degree $a_i$.
We remark that the width of any visibility representation of $H_i$ in the integer grid is at least $a_i$.
Finally, the graph $G$ is obtained by attaching each $H_i$ to $G'$ by adding the edges $(s,s_i)$ and $(t_i,t)$.

For the fixed bars, we let $V' = \{s,t,u_1, \ldots, u_{m+1}\}$ and we define a bar $\psi'(v)$ for each element $v \in V'$.
$$\hfill y_{\psi'}(v)=\left\{\begin{array}{ll} 0 & v=s \\ 2 & v=u_i \\ 4 & v=t \\ \end{array}\right. \hfill \begin{array}{lcl} l_{\psi'}(v)&=&\left\{\begin{array}{ll} 0 & v=s,t \\ iw+i & v=u_i \\ \end{array}\right. \\ r_{\psi'}(v)&=&\left\{\begin{array}{ll} W+m+1 & v=s,t \\ iw+i+1 & v=u_i \\ \end{array}\right. \end{array} \hfill$$
Observe that the distance between the right-end of $\psi'(u_{i})$ and the left-end of $\psi'(u_{i+1})$ is exactly $w$.

We now claim that there is a solution for input $(G,\psi')$ if and only if the \threepart{} instance $\{w,a_1, \ldots, a_{3m}\}$ has a solution.
First, we consider a collection of triples $T_1, \ldots, T_m$ which form a solution of the \threepart{} problem.
We extend $\psi'$ to a visibility representation of $G$ where, for each triple $T_j = \{a_{j_1}, a_{j_2}, a_{j_3}\}$, we place visibility representations of $H_{j_1}$, $H_{j_2}$, $H_{j_3}$ in sequence left-to-right and between the fixed bars $\psi'(u_{j-1})$ and $\psi'(u_j)$.
Notice that this is possible since $a_{j_1} + a_{j_2} + a_{j_3} = w$ and the distance from the right-end of $\psi'(u_{j-1})$ to the left-end of $\psi'(u_j)$ is $w$.

To show that any visibility representation $\psi$ of $G$ which extends $\psi'$ provides a solution to the \threepart{} problem we make the following observations.
First, due to the the placement of the bars $\psi'(s), \psi'(t), \psi'(u_0)$ and $\psi'(u_m)$, the outer face of the resulting embedding is $s,u_0,t,u_m$.
In particular, the bars of every $H_i$ occur strictly within the rectangle enclosed by $\psi'(s)$ and $\psi'(t)$.
Thus, each $H_i$ must also be drawn between some pair of bars $\psi'(u_{j-1})$, $\psi'(u_j)$ (for some $j \in \{1,\ldots, m-1\}$).
Second, due to each $a_i$ being between $\frac{w}{4}$ and $\frac{w}{2}$ and the width of a visibility representation of $H_i$ being at least $a_i$, at most three $H_i$'s `fit' between the fixed bars $\psi'(u_{j-1})$ and $\psi'(u_{j})$.
Thus, since there are $3m$ $H_i$'s, every $\psi'(u_{j-1})$, $\psi'(u_{j})$ has exactly three $H_i$'s between them.
Moreover, if $H_{i_1}, H_{i_2}$ and $H_{i_3}$ are placed between $\psi'(u_{j-1})$ and $\psi'(u_j)$, then $a_{i_1} + a_{i_2} + a_{i_3} \leq w$.
Thus, in $\psi$, the gaps between each pair $\psi'(u_{j-1})$, $\psi'(u_j)$ must contain precisely three $H_i$'s whose sum of corresponding $a_i$'s is $w$, \ie{}, the gaps correspond to the triples of a solution of the \threepart{} problem.
\end{proof}

\fi

\end{document}